\DeclareMathAlphabet{\mathcal}{OMS}{cmsy}{m}{n}
\SetMathAlphabet{\mathcal}{bold}{OMS}{cmsy}{b}{n}
\DeclareSymbolFont{matha}{OML}{rtxmi}{m}{it}
\DeclareMathSymbol{\varv}{\mathord}{matha}{118}
\renewcommand*{\backref}[1]{}\renewcommand*{\backrefalt}[4]{\ifcase #1 (Not cited.)\or (Cited on page~#2.)\else (Cited on pages~#2.)\fi}
\def\EE{\mathbbm{E}}
\def\PP{\mathbbm{P}}
\def\GG{\mathbbm{W}}
\def\R{{\mathbbm R}}
\def\N{{\mathbbm N}}
\def\C{{\mathbbm C}}
\def\DD{{\mathbbm D}}
\def\B{{\textnormal B}}
\def\L{{\textnormal L}}
\def\D{{\textnormal D}}
\def\e{\T{e}}
\def\i{\T{i}}
\def\bb{\mathsf{b}}
\def\vv{\mathsf{v}}
\def\ww{\mathsf{w}}
\def\gg{\mathsf{g}}
\def\tr{{\rm Tr\,}}
\def\um{{\mathbbm 1}}
\def\T#1{\textnormal{#1}}
\def\ann{\T{\tiny ann}}
\def\x{\,\T{{x}}}
\def\y{\,\T{{y}}}
\def\z{\,\T{{z}}}
\def\ceq{\coloneqq}
\def\eqc{\eqqcolon}
\def\d#1{\!\GG_{N}(\T{d}{#1})\,}
\def\dD#1{\!\DD_{N}(\T{d}{#1})\,}
\def\dl#1{\!\!\T{d}{#1}\,}
\def\dll#1{\T{d}{#1}\,}
	\newsavebox{\@brx}
	\newcommand{\llangle}[1][]{\savebox{\@brx}{\(\m@th{#1\langle}\)}
	\mathopen{\copy\@brx\kern-0.5\wd\@brx\usebox{\@brx}}}
	\newcommand{\rrangle}[1][]{\savebox{\@brx}{\(\m@th{#1\rangle}\)}
	\mathclose{\copy\@brx\kern-0.5\wd\@brx\usebox{\@brx}}}
	\DeclareRobustCommand*{\bfseries}{
	\not@math@alphabet\bfseries\mathbf
	\fontseries\bfdefault\selectfont
	\boldmath}
	\renewcommand\@biblabel[1]{[#1]}
\renewcommand{\labelenumi}{(\alph{enumi})}
\renewcommand{\theenumi}{(\alph{enumi})}
\begin{document}

\title{The free energy of a quantum Sherrington--Kirkpatrick spin-glass model for weak disorder}
\titlerunning{Quantum spin-glass model}

\dedication{\vspace{-1.9em}\begin{center}Dedicated to \\ Michael AIZENMAN, Joel LEBOWITZ, David RUELLE \\ on the occasions of their big birthdays in 2020.\end{center}}

\author{Hajo~Leschke$^{1,2}$ \and Sebastian~Rothlauf$^{1}$ \and Rainer~Ruder$^{2,1}$ \and Wolfgang~Spitzer$^{2}$}
\authorrunning{H. Leschke et al.}

\institute{
	{}$^{1}$Institut f\"ur Theoretische Physik, Universit\"at Erlangen--N\"urnberg, Staudtstra\ss e 7, 91058 Erlangen, Germany
\and
	{}$^2$Fakult\"at f\"ur Mathematik und Informatik, FernUniversit\"at in Hagen, Universit\"atsstra\ss e 1, 58097 Hagen, Germany
}


\date{\\Date of this version:~26~October~2021\hfill Preprint:~\href{https://arxiv.org/abs/1912.06633}{arXiv:1912.06633} [math-ph] (2019)\\ \phantom{x} \hfill Print (slightly shortened): J.\,Stat.\,Phys.\,\textbf{182}\,(3), 55, 41\,pp. (2021)\\\phantom{x}\hfill \href{https://doi.org/10.1007/s10955-020-02689-8}{Open Access} online since: 06 March 2021}
\maketitle

\begin{abstract}\vspace{-1em}
We extend two rigorous results of {\scshape Aizenman}, {\scshape Lebowitz}, and {\scshape Ruelle} in their pioneering paper of 1987 on the {\scshape Sherrington--Kirkpatrick} spin-glass model without external magnetic field to the quantum case with a ``transverse field" of strength $\bb$. More precisely, if the {\scshape Gauss}ian disorder is weak in the sense that its standard deviation $\vv>0$ is smaller than the temperature $1/\beta$, then the (random) free energy almost surely equals the annealed free energy in the macroscopic limit and there is no spin-glass phase for any $\bb/\vv\geq 0$. The macroscopic annealed free energy (times $\beta$) turns out to be non-trivial and given, for any $\beta\vv>0$, by the global minimum of a certain functional of square-integrable functions on the unit square according to a {\scshape Varadhan} large-deviation principle. For $\beta\vv<1$ we determine this minimum up to the order $(\beta\vv)^{4}$ with the {\scshape Taylor} coefficients explicitly given as functions of $\beta\bb$ and with a remainder not exceeding $(\beta\vv)^{6}/16$. As a by-product we prove that the so-called static approximation to the minimization problem yields the wrong $\beta\bb$-dependence even to lowest order. Our main tool for dealing with the non-commutativity of the spin-operator components is a probabilistic representation of the {\scshape Boltzmann}--{\scshape Gibbs} operator by a {\scshape Feynman--Kac} (path-integral) formula based on an independent collection of {\scshape Poisson} processes in the positive half-line with common rate $\beta\bb$. Its essence dates back to {\scshape Kac} in 1956, but the formula was published only in 1989 by {\scshape Gaveau} and {\scshape Schulman}. 
\end{abstract}

\tableofcontents

\section{Introduction and definition of the model}\label{introduction}

A spin glass is a spatially disordered material exhibiting at low temperatures a complex magnetic phase without spatial long-range order, in contrast to a ferro- or antiferromagnetic phase \cite{FH1991,M1993,N2001}. To this day most theoretical studies of spin glasses are based on models which go back to the classic(al) {\scshape Sherrington--Kirkpatrick (SK)} model \cite{SK1975}. In this simplified model {\scshape (Lenz--)Ising} spins are pairwise and multiplicatively coupled to each other via independent and identically distributed ({\scshape Gauss}ian) random variables and are possibly subject to an external (``longitudinal") magnetic field. The {\scshape SK} model may be viewed as a generalization of the traditional {\scshape Curie--Weiss (CW)} model in which the spin coupling is given by a single (non-random) constant of a suitable sign. In both models the pair interaction is of the somewhat unrealistic \emph{mean-field} type in the sense that it is the same for all spin pairs \cite{NS2003}. This neglect of geometric distances requires the effective strength of the pair interaction  to decrease sufficiently fast with increasing total number of the spins in order to ensure thermostatic behavior in the limit of macroscopically many of them. The notion ``mean field'' indicates the comfortable fact that the {\scshape Bragg--Williams} mean-field approximation of equilibrium statistical mechanics \cite{H1987} yields the exact free energy in this limit \cite{FSV1980}. According to standard textbook wisdom it is easy to calculate the macroscopic free energy of the {\scshape CW} model and to show that it provides a simplified but qualitatively correct description of the onset of ferromagnetism at low temperatures \cite{D1999}. In contrast, for the {\scshape SK} model the calculation has turned out to be much harder due to the interplay between thermal and disorder fluctuations, in particular for low temperatures. Nevertheless, by an ingenious application of the heuristic replica approach, see \cite{FH1991,M1993,N2001}, {\scshape Parisi} found that the macroscopic (quenched) free energy of the {\scshape SK} model is given by the global maximum of a rather complex functional of (probabilistic) distribution functions on the unit interval \cite{P1980a,P1980b}. The unique maximizing distribution function is interpreted as the (functional) spin-glass ``order parameter". The attempt at understanding this {\scshape Parisi} formula became a challenge to mathematical physicists and mathematicians \cite{T1998}. Highly gratifying for him and his intuition \cite{P2009}, the formula was eventually confirmed by a mathematically rigorous proof due to the efforts and insights of {\scshape Guerra}, {\scshape Talagrand}, and others \cite{GT2002,G2003,ASS2003,T2006,T2011b,P2013,AC2015}.

Since magnetic properties cannot be explained at the (sub)microscopic level of atoms and molecules by classical physics alone, some spin glasses require for fundamental and experimental reasons a quantum-theoretical modelling. Of course, the {\scshape SK} model may be viewed as a simplistic quantum model by interpreting the values of the {\scshape Ising} spins as (twice) the eigenvalues of one and the same component of associated three-component spin \emph{operators} each of them with (main) quantum number $1/2$. But a genuine quantum {\scshape SK} model with quantum fluctuations and inherent dynamics needs the presence of different (non-commuting) components of the spin operators. The theory of such a model was pioneered by {\scshape Bray} and {\scshape Moore} \cite{BM1980} and by {\scshape Sommers} \cite{S1981}. More precisely, for a quantum spin-glass model with isotropic ({\scshape Frenkel--)Heisenberg(--Dirac)} spin coupling of mean-field type these authors handled the competition of thermal, disorder, and quantum fluctuations by combining the {\scshape Dyson}--{\scshape Feynman} time-ordering of operator products with the replica approach \cite{BM1980} or with the {\scshape Thouless}--{\scshape Anderson}--{\scshape Pal\-mer} ({\scshape TAP}) approach \cite{S1981}. For the {\scshape TAP} approach see \cite{FH1991,M1993,N2001}. Since these authors did not aim at rigorous results, they applied the so-called \emph{static approximation} to simplify the rather complicated equations derived by them. However, this approximation is still insufficiently understood -- even for higher temperatures.

A simpler genuine quantum {\scshape SK} model is obtained by considering an extremely anisotro\-pic pair interaction where only one component of the spins is coupled which is perpendicular to the direction of the external magnetic field. This model was introduced by {\scshape Ishii} and {\scshape Yamamoto} \cite{IY1985} and approximately studied within the {\scshape TAP} approach. It is usually called the \emph{{\scshape SK} model with (or ``in'') a transverse field}, see \cite{SIC2013} and references therein. It is this model to which we devote ourselves in the present paper. It is characterized by the random energy operator or {\scshape Hamilton}ian
\begin{equation}\label{H_N}
	\boxed{H_{N}\ceq-\bb\!\!\!\sum\limits_{1\leq i \leq N} S_i^{\x}-\frac{\vv}{\sqrt{N}}\sum\limits_{1\leq i<j\leq N}\!\!\!g_{ij}S_i^{\z} S_j^{\z}}
\end{equation}
acting selfadjointly on the $N$-spin {\scshape Hilbert} space (isometrically isomorphic to) 
\[
	\C^2\otimes\cdots\otimes\C^2\eqc (\C^2)^{\otimes N}\cong\C^{2^N}\,,
\]
that is, the $N$-fold tensor product of the two-dimensional complex {\scshape Hilbert} space $\C^{2}$ for a single spin. Here $N\geq 2$ is the total number of a collection of three-component spin-$1/2$ operators where the selfadjoint spin operator $S^{\,\alpha}_{i}/2$ with component $\alpha$ and index (or ``site'') $i$ is given by the tensor product of $N$ factors according to
\[
	S^{\,\alpha}_i\ceq\um\otimes\cdots\otimes\um\otimes S^{\,\alpha}\otimes \um\otimes\cdots\otimes\um\qquad \big(\alpha\in\{\x,\y,\z\},\,\, i\in\{1,\dots,N\}\big)\,.
\]
In this definition the identity operator $\um$ and the operator $S^{\,\alpha}$, as the $i$-th factor, are understood to act (a priori) on $\C^{2}$ and satisfy the (specialized) {\scshape Dirac} identities
\begin{equation}\label{dirac}
 (S^{\,\alpha})^{2}=\um, \quad S^{\x}S^{\y}=\i S^{\z}, \quad S^{\y}S^{\z}=\i S^{\x}, \quad S^{\z}S^{\x}=\i S^{\y}
 \end{equation}
 with $\i\equiv\sqrt{-1}$ denoting the imaginary unit.
With respect to the eigenbasis of $S^{\z}$ these four operators are represented by the $2\times2$ unit matrix and the triple of $2\times2$  {\scshape Pauli} matrices according to
\[
\um=\begin{bmatrix}1&0\\0&1\end{bmatrix},
\qquad S^{\x}=\begin{bmatrix}0&1\\1&0\end{bmatrix},
\qquad S^{\y}=\begin{bmatrix}0&-\i\\\i&0\end{bmatrix},
\qquad S^{\z}=\begin{bmatrix}1&0\\0&-1\end{bmatrix}\,.
\]

The first term in \eqref{H_N} models an ideal (quantum) paramagnet and represents the energy of the spins due to their individual interactions with a constant magnetic field of strength $\bb\geq0$ externally applied along the positive $\x$-direction. The second term in \eqref{H_N} models disorder in spin glasses and represents the energy of the spins due to random mean-field type pair interactions of their $\z$-components. More precisely, we assume the $N(N-1)/2$ coupling coefficients $(g_{ij})_{1\leq i<j\leq N}$ to form a collection of jointly {\scshape Gauss}ian random variables with mean $\EE[g_{ij}]=0$ and covariance $\EE[g_{ij}g_{kl}]=\delta_{ik}\delta_{jl}$ (in terms of the {\scshape Kronecker} delta). The parameter $\vv>0$ is the standard deviation of $\vv g_{ij}$ and stands for the strength of the disorder. At given $\bb$ or $\vv$ quantum fluctuations become more important with increasing $\vv$ or $\bb$, respectively -- due to the non-commutativity  of $S^{\x}_{i}$ and $S^{\z}_{i}$.

We proceed by introducing the basic thermostatic quantity of the model \eqref{H_N}. For any reciprocal (absolute) temperature $\beta\in{\rbrack 0,\infty\lbrack}$, we define the random partition function (or sum) as the trace
\begin{equation}\label{Z_N}
	Z_{N}\ceq\tr{\e^{-\beta H_{N}}}
\end{equation}
of the {\scshape Boltzmann--Gibbs} operator and the (specific {\scshape Gibbs}) \emph{free energy} by
\begin{equation}\label{f_N}
	f_{N}\ceq-\frac{1}{N\beta}\ln(Z_{N})\,,
\end{equation}
which is the random variable of main physical interest, in particular, in the \emph{macroscopic limit} $N\to\infty$. The disorder average $\EE[f_{N}]$ of $f_{N}$ is called the mean or \emph{quenched} free energy and has to be distinguished from the \emph{annealed} free energy,
\begin{equation}\label{f_ann_N}
f_{N}^\ann\ceq-\frac{1}{N\beta}\,\ln\big(\EE[Z_{N}]\big)\,.
\end{equation}
The latter is physically less relevant (for spin glasses with ``frozen-in'' disorder), but mathematically more accessible and provides a lower bound on the quenched free energy by the concavity of the logarithm and the {\scshape Jensen} inequality \cite{J1906} (see also \cite[Lem.\,3.5]{K2002}),
\begin{equation}\label{jensen}
	f^\ann_{N}\leq \EE[f_{N}]\,.
\end{equation}

Over the years the work \cite{IY1985} has stimulated many further approximate and numerical studies devoted to the macroscopic quenched free energy of the quantum {\scshape{SK}} model \eqref{H_N} and the resulting phase diagram in the temperature-field plane, among them \cite{FS1986,YI1987,US1987,K1988,RCC1989,BU1990a,GL1990,BU1990b,MH1993,KK2002,T2007,Y2017,MRC2018}. Not surprisingly, this has led to partially conflicting results, especially for low temperatures. 

From a rigorous point of view, a solid understanding of the low-temperature regime seems still to be out of reach. The main and modest aim of the present paper is therefore to provide the first rigorous explicit results for the opposite regime characterized by $\beta\vv <1$. Since in this regime $\beta\bb\geq 0$ may be arbitrary, we call it the \emph {weak-disorder regime}. In the following sections we firstly compile some properties of $f^{\ann}_{N}$ for arbitrary $\beta\vv > 0$. Next we show that $f^{\ann}_{N}$ has a well-defined macroscopic limit $ f_{\infty}^{\ann}$ with similar and well-understood properties, see Theorem\,\ref{f_ann_exist}, Theorem\,\ref{f_ann_variational}, and Theorem\,\ref{f_ann_second_order} below. In particular, for $\beta\vv <1$ the limit $\beta f^{\ann}_{\infty}$ takes a rather explicit form as a function of $\beta\vv$ and $\beta\bb$. Then we prove that the more important free energies $f_N$ and $\EE[f_N]$ have both $f_{\infty}^{\ann}$ as its (almost sure) macroscopic limit if $\beta\vv <1$, see Theorem\,\ref{difference} and Corollary\,\ref{weak_disorder_free_energy}. For $\beta\vv <1$ we also prove the absence of spin-glass order in the sense that $\lim_{N\to\infty}\big(\langle S^{\z}_{1} S^{\z}_{2}\rangle\big)^{2}=0$, almost surely, see Corollary\,\ref{no_sg} and Remark\,\ref{rem_order_parameter}. Here $\langle\,(\cdot)\,\rangle\ceq\e^{N\beta f_{N}} \tr\e^{-\beta H_{N}}(\cdot)$ denotes the (random) {\scshape Gibbs} expectation induced by $H_{N}$. These results extend two of the pioneering results of {\scshape Aizenman}, {\scshape Lebowitz}, and {\scshape Ruelle} \cite{ALR1987} for the model  \eqref{H_N} with $\bb=0$ to the quantum case $\bb>0$. Unfortunately, for any $\beta\vv >1$ we only have the somewhat weak result that the difference between the macroscopic quenched and annealed free energies is strictly positive if the ratio $\bb/\vv$ is sufficiently small.

To our knowledge, the only other rigorous results for the quantum {\scshape{SK}} model \eqref{H_N} are due to {\scshape Crawford} \cite{C2007} and to {\scshape Adhikari} and {\scshape Brennecke} \cite{AB2020}. {\scshape Crawford} has extended key results of {\scshape Guerra} and {\scshape Toninelli} \cite{GT2002} and {\scshape Carmona} and {\scshape Hu} \cite{CH2006} for the model  \eqref{H_N} with $\bb=0$ to the quantum case $\bb>0$. More precisely, he has proved the existence of the macroscopic (quenched) free energy not only for $\beta\vv <1$, but for \emph{all} $\beta\vv>0$ (without a formula). Furthermore, he has shown that the limit is the same for random variables $(g_{ij})_{1\leq i<j\leq N}$ which are not necessarily {\scshape Gauss}ian but merely independently and identically distributed with $\EE[g_{12}]=0$, $\EE[(g_{12})^{2}]=1$, and $\EE[|g_{12}|^{3}]<\infty$. More recently, {\scshape Adhikari} and {\scshape Brennecke} have provided a variational formula for the macroscopic (quenched) free energy for all $\beta\vv>0$. This formula is still somewhat implicit and given by a suitable $d\to\infty$ limit of a {\scshape Parisi}-like functional for a classical $d$-component vector-spin-glass model, due to {\scshape Panchenko}.
 
\section{The annealed free energy and its deviation from the quenched free energy}\label{sec_annealed_free_energy}
In this section we attend to the annealed free energy $f_{N}^{\ann}$ for arbitrary values of $N\geq 2$, $\beta\vv>0$, and $\beta\bb>0$. According to \eqref{f_ann_N} we have to perform the {\scshape Gauss}ian disorder average of the partition function $Z_{N}$.
In order to do so explicitly, we will use the following {\scshape Poisson--Feynman--Kac (PFK)} probabilistic representation of $Z_{N}$ in terms of $N$ copies of a {\scshape Poisson} process with constant rate (or intensity parameter) $\beta\bb$:
\begin{equation}
Z_{N}=\big(\cosh(\beta\bb)\big)^{ N}\sum\limits_{s}\Big\langle\exp\Big(-\beta\int_0^1\dl{t} h_{N}\big(s\sigma(t)\big)\Big)\Big\rangle_{\!\!\beta\bb}\label{fk}\,.
\end{equation}
Here, the classical {\scshape Hamilton}ian $h_{N}$, characterizing the zero-field {\scshape SK} model \cite{SK1975}, is defined by
\begin{equation}
h_{N}(s)\ceq -\frac{\vv}{\sqrt{N}}\sum\limits_{1\leq i<j\leq N}g_{ij}s_i s_j\,,\label{sk}
\end{equation}
where $s\ceq(s_{1},\dots,s_{N})\in\{-1,1\}\times\dots\times\{-1,1\}\eqc \{-1,1\}^{N} $ denotes one of the $2^N$ classical spin configurations and the notation $\sum_{s}$ indicates summation over all of them. The integrand in \eqref{fk} is obtained from \eqref{sk} by replacing there each $s_i$ by the product $s_{i}\sigma_i(t)$, where
\begin{equation}
\sigma_i(t)\ceq(-1)^{{\cal N}_{i}(t)}\qquad \big(t\in{\lbrack0,\infty\lbrack}\,,\,\, i\in\{1,\dots,N\}\big)\label{teleproc}
\end{equation}
defines the \emph{spin-flip process} with index $i$, in other words, a ``(semi-)random telegraph signal" \cite{K1974,KR2013}. It is a continuous-time-homogeneous pure jump-type two-state {\scshape Markov} process \cite[Ch.\,12]{K2002} steered by a simple {\scshape Poisson} process ${\cal N}_{i}$ in the positive half-line.\footnote{For a concise definition of {\scshape Poisson} (point) processes well suited for our purposes the reader may consult Appendix\,\ref{poisson_appendix}.} The random variable ${\cal N}_{i}(t)$ is $\N_{0}$-valued and {\scshape Poisson} distributed with mean $\beta\bb\,t\geq0$ independent of the index~$i$. The $N$ {\scshape Poisson} processes ${\cal N}_{1},\dots,{\cal N}_{N}$ are assumed to be (stochastically) mutually independent. The angular brackets $\langle\,(\cdot)\,\rangle_{\beta\bb}$ denote the corresponding joint {\scshape Poisson} expectation \emph{conditional} on $\sigma_{i}(1)=1$ for all $i\in\{1,\dots,N\}$.
In \eqref{fk} and in the following we often write $\sigma(t)\ceq\big(\sigma_{1}(t),\dots,\sigma_{N}(t)\big)$  and suppress the $N$-dependence of $\langle\,(\cdot)\,\rangle_{\beta\bb}$ to keep the notation simple. For the validity of the {\scshape PFK} representation \eqref{fk} we refer to \eqref{PFK_trace_multi}  in Appendix\,\ref{pfk_proof}.

For performing the {\scshape Gauss}ian disorder average of the partition function $Z_{N}$ we start out with the disorder mean
\begin{align}
	\EE\big[\beta h_{N}(s)\big]&=0\label{h_N_mean}
	\intertext{and the disorder covariance}
	\EE\big[\beta^{2} h_{N}(s)h_{N}(\widehat{s})\big]&=2\lambda\big(N\big[Q_{N}(s,\widehat{s}\,)\big]^{2}-1\big)
	\label{h_N_cov}
\end{align}
of the classical {\scshape Hamilton}ian in terms of the dimensionless \emph{disorder parameter} $\lambda\ceq\beta^{2}\vv^{2}/4$ and the overlap 
\begin{equation}\label{Q_N}
	Q_{ N}(s,\widehat{s})\ceq\frac{1}{N}\sum\limits_{i=1}^{N}s_{i}\widehat{s}_{i}\qquad \big(s,\widehat{s}\in\{-1,1\}^{N}\big)
\end{equation}
between two classical spin configurations.
Formula~\eqref{fk} then gives
\begin{equation}\label{Z_N_mean}
	\EE[Z_{N}]=\big(2\cosh(\beta\bb)\big)^{ N}\e^{-\lambda}\big\langle{\cal Z}_{N}\big\rangle_{\!\beta\bb}\,.
\end{equation}
Here, the functional ${\cal Z}_{N}:\sigma\mapsto{\cal Z}_{N}(\sigma)$ is a random variable with respect to the underlying $N$ {\scshape Poisson} processes and defined by
\begin{align}\label{Phi_N1}
	{\cal Z}_{N}(\sigma)&\!\!\ceq \e^{\lambda}\,2^{-N}\sum_{s}\EE\Big[\exp\Big(-\beta\int_{0}^{1}\dl{t} h_{N}\big(s\sigma(t)\big)\Big)\Big]\\
	&=\e^{\lambda}\,2^{-N}\sum_{s}\exp\Big(\frac{1}{2}\EE\Big[\Big(-\beta\int_{0}^{1}\dl{t} h_{N}\big(s\sigma(t)\big)\Big)^{2}\Big]\Big)\label{Phi_N2}\\
	&=\exp\Big(N\lambda\int_{0}^{1}\dl{t}\int_{0}^{1}\dl{t'}\big[Q_{N}(\sigma(t),\sigma(t')\big]^{2}\Big)\label{Phi_N3}\\
	&\eqc\exp\Big(N\lambda P_{N}(\sigma)\Big)\,.\label{P_N}
\end{align}
Equation\,\eqref{Phi_N2} reflects the {\scshape Gauss}ianity of the disorder average with \eqref{h_N_mean}. Interchanging now the expectation with the (two-fold) time integration according to the {\scshape Fubini--Tonelli} theorem, using \eqref{h_N_cov}, and observing $(s_{i})^{2}=1$ yields \eqref{Phi_N3}.
  By $0\leq\big[Q_N(s,\widehat{s})\big]^2\leq 1$ the two-fold integral $P_{N}$ is a ${\lbrack0,1\rbrack}$-valued random variable so that we have the crude estimates

\begin{equation}
	1\leq{\cal Z}_N(\sigma)\leq \e^{N\lambda}\,.\label{Phi_N_bounds}
\end{equation}

Somewhat to our surprise, we have not succeeded in calculating $f^{\ann}_{N}$ explicitly, not even for $N\to\infty$, see however, Theorem\,\ref{f_ann_variational} and Theorem\,\ref{f_ann_second_order} below. For the time being we derive certain estimates and properties of $f^{\ann}_{N}$. 

For the formulation of the corresponding theorem we introduce some notation. We begin with the function $\mu:{\lbrack0,1\rbrack}\times{\lbrack0,1\rbrack}\to{\lbrack0,1\rbrack}$ defined by
\begin{equation}
\mu(t,t')\ceq\langle\, \sigma_i(t)\sigma_i(t')\, \rangle_{\!\beta\bb}
	=\frac{\cosh\big(\beta\bb(1-2|t-t'|)\big)}{\cosh(\beta\bb)}\geq\frac{1}{\cosh(\beta\bb)}\label{mu}
\end{equation}
for all $i\in\{1,\dots,N\}$. It is the thermal {\scshape Duhamel--Kubo} auto-correlation function of the $\z$-component of a single spin in the absence of disorder, $\lambda = 0$, see \eqref{Duhamel_Kubo} in Appendix\,\ref{pfk_proof}. Upon explicit integrations we get
\begin{equation}
	\int_0^1\dl{t}\int_0^1\dl{t'}\mu(t,{t'})=\int_{0}^{1}\dl{t}\mu(t,0)=\frac{\tanh(\beta\bb)}{\beta\bb}\eqc m\label{m}
\end{equation}
and
\begin{equation}
\int_0^1\dl{t}\int_0^1\dl{t'}\big(\mu(t,{t'})\big)^2=\int_0^1\dl{t}\big(\mu(t,0)\big)^2=\big(1-\big(\tanh(\beta\bb)\big)^{2}+m\big)/2\eqc p\,,\label{p}
\end{equation}
so that $\sqrt{2p-m}=1/\cosh(\beta\bb)$. Finally, we  introduce two positive sequences by
\begin{equation}
p_{N}\ceq p+(1-p)/N\quad,\quad G_{N}\ceq\ln\big(1+p_{N}(\e^{N\lambda}-1)\big)\quad(N\geq1)\,.\label{G_N}
\end{equation}
\begin{lemma}[Inequalities between $m$ and $p$, and bounds on $G_{N}$]\label{mpG_N_lemma}

\noindent For any $N\geq2$, $\beta\bb>0$, and $\lambda>0$ we have the inequalities
	\begin{equation}\label{bound_mp}
		0<m^{2}< p< m< \min\{1,2p\}\leq 2p<(1+p)m\,,
	\end{equation}
\begin{equation}\label{G_N_weak}
	p_{N}\lambda<\ln\big(1+p_{N}(\e^{\lambda}-1)\big)< G_{N}/N< p_{N}\lambda+(1-p_{N})N\lambda^{2}/2\,,
\end{equation}
\begin{equation}\label{G_N_strong}
	\max\{0,\lambda +\ln(p_{N})/N\} < G_{N}/N < \lambda\,.
\end{equation}
\end{lemma}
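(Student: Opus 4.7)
The lemma is three separate inequality chains and I would handle them in the order \eqref{G_N_strong}, \eqref{G_N_weak}, \eqref{bound_mp}, as that is roughly increasing difficulty.

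The chain \eqref{G_N_strong} is essentially a one-line observation: writing $G_N=\ln(1-p_N+p_N\e^{N\lambda})$, the upper bound $G_N/N<\lambda$ is $1-p_N+p_N\e^{N\lambda}<\e^{N\lambda}$ (equivalent to $1-p_N<1-p_N\e^{N\lambda}$ after rearrangement, which is true because $p_N<1$), and the two lower bounds are $1-p_N+p_N\e^{N\lambda}>\max\{1,p_N\e^{N\lambda}\}$.

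For \eqref{G_N_weak} I would view $K(\lambda):=G_N=\ln(1-p_N+p_N\e^{N\lambda})$ as $N$ times the cumulant generating function of a scaled Bernoulli random variable. The leftmost inequality $p_N\lambda<\ln(1+p_N(\e^\lambda-1))$ is the strict weighted AM--GM $(1-p_N)\cdot 1+p_N\e^\lambda>1^{1-p_N}(\e^\lambda)^{p_N}=\e^{p_N\lambda}$. The second inequality $\ln(1+p_N(\e^\lambda-1))<G_N/N$ becomes, after multiplying by $N$ and exponentiating, $(1+p_N(\e^\lambda-1))^N<1+p_N(\e^{N\lambda}-1)$; the left-hand side is $\mathbb{E}[\e^{\lambda Y}]$ with $Y\sim\mathrm{Binomial}(N,p_N)$ and the right-hand side is what one would obtain if $Y$ were concentrated on $\{0,N\}$ with the same mean, so the chord inequality $\e^{\lambda k}\le(1-k/N)+(k/N)\e^{N\lambda}$ for $k\in\{0,\dots,N\}$ (strict for $0<k<N$) gives the claim. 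The rightmost bound $G_N/N<p_N\lambda+(1-p_N)N\lambda^2/2$ follows from Taylor's theorem with integral remainder applied to $K$: one computes $K(0)=0$, $K'(0)=Np_N$, and $K''(\lambda)=N^2q(\lambda)(1-q(\lambda))$ with $q(\lambda)=p_N\e^{N\lambda}/(1-p_N+p_N\e^{N\lambda})\in[p_N,1)$, and bounds $K''(\lambda)\le(1-p_N)N^2$ by a case split (if $p_N\ge 1/2$ use $q(1-q)\le p_N(1-p_N)\le 1-p_N$; if $p_N<1/2$ use $q(1-q)\le1/4\le 1-p_N$), strict at $\lambda=0$.

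For \eqref{bound_mp} the representations $m=\iint\mu(t,t')\,\T{d}t\,\T{d}t'$ (which equals $\int_0^1\mu(t,0)\,\T{d}t$ by a direct change of variables) and $p=\iint\mu(t,t')^2\,\T{d}t\,\T{d}t'$ together with $0<\mu\le 1$ do almost everything: Cauchy--Schwarz gives $m^2\le p$; the pointwise bound $\mu^2\le\mu$ (strict off the diagonal) gives $p<m$; $m<1$ comes from $\mu<1$ a.e.; and $m<2p$ reduces to the identity $2p-m=\mathrm{sech}^2(\beta\bb)>0$, which follows from \eqref{m}--\eqref{p}. The delicate step is $2p<(1+p)m$: after substituting $2p=m+\mathrm{sech}^2(\beta\bb)$ this becomes $pm>\mathrm{sech}^2(\beta\bb)$, equivalently, with $\gamma:=\beta\bb$,
\[
\sinh^2\gamma+\gamma\tanh\gamma>2\gamma^2\qquad(\gamma>0).
\]
I would prove this by studying $h(\gamma):=\sinh^2\gamma+\gamma\tanh\gamma-2\gamma^2$: checking $h^{(k)}(0)=0$ for $k=0,\dots,5$ and $h^{(6)}(0)=128>0$ handles small $\gamma$ via the integral remainder form of Taylor's theorem, while $\sinh^2\gamma\sim\e^{2\gamma}/4$ dominates for large $\gamma$; the intermediate range is handled by a monotonicity argument on $h(\gamma)/\gamma^6$ or, equivalently, by showing that $h^{(6)}(\gamma)$ stays positive.

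The hard part is unquestionably this last inequality $2p<(1+p)m$. Unlike the other four comparisons in \eqref{bound_mp}, it does not follow from any one-step application of Cauchy--Schwarz, Jensen, or the pointwise bound $\mu\le 1$; and the Taylor coefficients of $h$ past $\gamma^6$ have mixed signs (e.g.\ the $\gamma^8$ coefficient is $-16/315$), ruling out the simplest monotone-series argument. A more structural alternative, which I would also try, is to use the spectral decomposition of the positive-definite kernel $\mu=\sum_k\lambda_k\phi_k\otimes\phi_k$ with $\sum_k\lambda_k=1$ (since $\mu(t,t)=1$) and $a_k:=\int_0^1\phi_k$, $\sum_k a_k^2=1$, which rewrites the inequality as a concrete inequality for the two moment sums $\sum_k\lambda_k^2$ and $\sum_k\lambda_k a_k^2$ amenable to convexity arguments.
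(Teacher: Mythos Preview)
Your treatment of \eqref{G_N_strong} and of the first two inequalities in \eqref{G_N_weak} is essentially the paper's (the ``chord inequality'' and the AM--GM are just Jensen/convexity in different dress). For the last inequality in \eqref{G_N_weak} you take a genuinely different route: you bound the second derivative of $K(\lambda)=G_N$ by $N^{2}(1-p_N)$ via the case split on $p_N$, which is correct. The paper instead proves the general elementary inequality $\ln\bigl(1+a(\e^{x}-1)\bigr)\le a|x|+(1-a)x^{2}/2$ for $a\in[0,1]$ (from $\e^{-|x|}\le 1-|x|+x^{2}/2$ and $\ln y\le y-1$) and applies it with $a=p_N$, $x=N\lambda$. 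Your approach is perhaps more transparent probabilistically; the paper's has the advantage of producing a reusable inequality that it cites again later.

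Where you fall short is the last inequality $2p<(1+p)m$ in \eqref{bound_mp}. Your reduction to $\sinh^{2}\gamma+\gamma\tanh\gamma>2\gamma^{2}$ is correct, but the proposed proof of this is not complete: ``handle the intermediate range by showing $h^{(6)}>0$ or by monotonicity of $h/\gamma^{6}$'' is a plan, not an argument, and the spectral-decomposition alternative is left even vaguer. The paper's proof is a one-liner that sidesteps the mixed-sign Taylor coefficients you worry about by bounding the two summands \emph{separately} before combining: from the elementary inequalities $\sinh x\ge x+x^{3}/6$ and $\tanh x\ge x-x^{3}/3$ for $x\ge 0$ one gets $\sinh^{2}x\ge x^{2}+x^{4}/3+x^{6}/36$ and $x\tanh x\ge x^{2}-x^{4}/3$, and adding yields $h(x)\ge x^{6}/36>0$ for all $x>0$. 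The point you missed is that squaring the lower bound on $\sinh x$ produces a positive $x^{4}/3$ term that exactly cancels the negative $-x^{4}/3$ from $x\tanh x$, so no intermediate-range argument is needed at all.
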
 
\begin{proof}
The first five inequalities in \eqref{bound_mp} are obvious. The last one is a consequence of the elementary inequalities $\sinh(x)\geq x +x^{3}/6$ and $\tanh(x)\geq x-x^{3}/3$ for $x\geq0$. The first inequality in \eqref{G_N_weak}  follows from the convexity of the exponential and the {\scshape Jensen} inequality for a $\{0,1\}$-valued {\scshape Bernoulli} random variable taking the 
``success" value $1$ with probability $p_{N}\in{\lbrack0,1\rbrack}$. For the second inequality we use $\big(1+p_{N}(\e^{\lambda}-1)\big)^{N}\leq 1+p_{N}(\e^{N\lambda}-1)=\exp(G_{N})$ by the convexity of the $N$-th power $x\mapsto x^{N}$ for $x\geq0$ and the {\scshape Jensen} inequality. The last inequality is an application of 
	\begin{equation}\label{log_inequality}
		\ln\big((1+a(\e^{x}-1)\big)\leq a|x|+(1-a)x^{2}/2\qquad \big(a\in{\lbrack0,1\rbrack}\,,\,\, x\in\R\big)
	\end{equation}
for $a=p_{N}$ and $x=N\lambda$. Inequality \eqref{log_inequality} itself follows from $x\leq|x|$, $\exp(-|x|)\leq 1-|x|+x^{2}/2$, and $\ln(y)\leq y-1$ for $y>0$.
The inequalities \eqref{G_N_strong} are simple consequences of $p_{N}<1$.  
\qed
\end{proof}
Now we are prepared to present our first result.
\begin{theorem}[On the annealed free energy] \label{f_ann_bounds}
\noindent
\begin{enumerate}
\item \label{f_ann_estimates}For any value of the dimensionless disorder parameter $\lambda=\beta^{2}\vv^{2}/4>0$ and any number of spins $N\geq 2$ we have the three estimates
	\begin{equation}
	-\frac{G_{N}}{N}+\frac{\lambda}{N}\leq \beta f^{\ann}_N +\ln\big(2\cosh(\beta\bb)\big)\leq -p\,\lambda\,\frac{N-1}{N}\,,\label{f_ann_N_bounds}
	\end{equation}
	\begin{equation}
	\beta f^{\ann}_N \leq -\lambda \frac{N-1}{N}-\ln(2)\,.\label{f_ann_N_bound2}
	\end{equation}

\item \label{f_ann_N_behavior} The dimensionless quantity $\beta f^{\ann}_{N}$ depends on the disorder strength $\vv$ only via the variable $\lambda>0$. The function $\lambda\mapsto \beta f^{\ann}_{N}$ is concave, is not increasing, and has the following weak- and strong-disorder limits
	\begin{align}
		\lim\limits_{\lambda\downarrow 0}\Big(\frac{1}{\lambda}\big(\beta f^{\ann}_{N}+\ln\big(2\cosh(\beta\bb)\big)\Big)&=-p\,\frac{N-1}{N}\,,\label{f_ann_N_weak_disorder}\\
		\lim\limits_{\lambda\to\infty}\frac{1}{\lambda}\beta f^{\ann}_{N}&=-\frac{N-1}{N}\label{f_ann_N_strong_disorder}\,.
	\end{align}

\item\label{f_ann_N_behavior2} The function $\beta\mapsto\beta f^{\ann}_{N}$ is concave for any $\vv>0$.
\end{enumerate}
\end{theorem}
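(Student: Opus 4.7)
The starting point is the identity \eqref{Z_N_mean}, which after taking $-\tfrac{1}{N}\ln$ on both sides reads
\begin{equation*}
\beta f^{\ann}_N = -\ln\big(2\cosh(\beta\bb)\big) + \frac{\lambda}{N} - \frac{1}{N}\ln\big\langle\e^{N\lambda P_N}\big\rangle_{\!\beta\bb}\,,
\end{equation*}
so the two estimates in \eqref{f_ann_N_bounds} reduce to bracketing the {\scshape Poisson} expectation $\langle\e^{N\lambda P_N}\rangle_{\beta\bb}$ between two {\scshape Jensen}-type bounds. For the upper bound on $f^{\ann}_N$ I would apply {\scshape Jensen}'s inequality to the convex exponential, $\langle\e^{N\lambda P_N}\rangle\geq\exp(N\lambda\langle P_N\rangle_{\beta\bb})$; a short calculation based on the independence of the $N$ spin-flip processes, on $\sigma_i(t)^2=1$, and on \eqref{mu}--\eqref{p} then yields
\begin{equation*}
\langle P_N\rangle_{\beta\bb}=\frac{1}{N}+\frac{N-1}{N}\int_0^1\dl{t}\int_0^1\dl{t'}\mu(t,t')^2=\frac{1}{N}+\frac{(N-1)p}{N}=p_N\,.
\end{equation*}
For the lower bound on $f^{\ann}_N$ the range $P_N\in\lbrack 0,1\rbrack$ (cf.\ \eqref{Phi_N_bounds}) permits replacing $\e^{N\lambda P_N}$ by its chord on $[0,1]$, $\e^{N\lambda P_N}\leq 1+P_N(\e^{N\lambda}-1)$; averaging gives $\langle\e^{N\lambda P_N}\rangle\leq 1+p_N(\e^{N\lambda}-1)=\e^{G_N}$ by \eqref{G_N}.

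For the sharper estimate \eqref{f_ann_N_bound2} I would not use the {\scshape PFK} representation but instead invoke {\scshape Peierls}' inequality in the $S^\z$-eigenbasis $\{|s\rangle\}_{s\in\{-1,1\}^{N}}$, namely $\tr\e^{-\beta H_N}\geq\sum_{s}\exp(-\beta\langle s|H_N|s\rangle)$. Because $\langle s|S_i^\x|s\rangle=0$, the exponent reduces to the purely classical $\frac{\beta\vv}{\sqrt{N}}\sum_{i<j}g_{ij}s_is_j$, whose {\scshape Gaussian} expectation equals $\exp(\lambda(N-1))$ after using $s_i^2=1$ and counting the $N(N-1)/2$ pairs. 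Summing over configurations then gives $\EE[Z_N]\geq 2^N\e^{\lambda(N-1)}$, which is equivalent to \eqref{f_ann_N_bound2}.

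For Part \ref{f_ann_N_behavior} the dependence on $\vv$ only through $\lambda$ is manifest in \eqref{Z_N_mean}--\eqref{P_N}. Concavity in $\lambda$ amounts to the convexity of the logarithmic moment-generating function $\lambda\mapsto\ln\langle\e^{N\lambda P_N}\rangle_{\beta\bb}$, whose second derivative is the non-negative variance of $P_N$ under the associated tilted probability measure. Monotonicity follows from the pointwise bound $P_N\geq 1/N$, which is the $i=j$ diagonal contribution to the identity $P_N(\sigma)=N^{-2}\sum_{i,j}\big(\int_0^1\sigma_i(t)\sigma_j(t)\dl{t}\big)^2$, so that $\partial_\lambda\beta f^{\ann}_N=1/N-\langle P_N\rangle_{\mathrm{tilted}}\leq 0$. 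The weak-disorder limit \eqref{f_ann_N_weak_disorder} is simply the value of this derivative at $\lambda=0$, where the tilted measure is the original {\scshape Poisson} measure and $\langle P_N\rangle_{\beta\bb}=p_N$; the strong-disorder limit \eqref{f_ann_N_strong_disorder} follows by combining \eqref{f_ann_N_bound2} with the lower estimate in \eqref{f_ann_N_bounds} and the strict inequality $G_N/N<\lambda$ in \eqref{G_N_strong}.

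For Part \ref{f_ann_N_behavior2} I would sidestep the explicit representation altogether. For every fixed disorder realization, the spectral expansion $Z_N(\beta)=\sum_{k}\e^{-\beta E_k}$ shows that $\beta\mapsto\ln Z_N(\beta)$ has second derivative equal to the (non-negative) {\scshape Gibbs} variance of $H_N$, hence is convex, so $\beta\mapsto Z_N(\beta)$ is log-convex. Log-convexity is preserved by integration against any positive measure: for $\theta\in\lbrack 0,1\rbrack$, {\scshape H\"older}'s inequality gives $\int f(\beta)^{\theta}f(\beta')^{1-\theta}\,d\mu\leq\big(\int f(\beta)\,d\mu\big)^{\theta}\big(\int f(\beta')\,d\mu\big)^{1-\theta}$. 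Applied to the {\scshape Gaussian} disorder measure, this shows $\beta\mapsto\EE[Z_N(\beta)]$ is log-convex, i.e.\ $\beta f^{\ann}_N=-N^{-1}\ln\EE[Z_N]$ is concave in $\beta$. The only mildly delicate step I anticipate is the explicit evaluation $\langle P_N\rangle_{\beta\bb}=p_N$, which requires the careful separation of $i=j$ from $i\neq j$ contributions (via independence of the Poisson processes) and a {\scshape Fubini}--{\scshape Tonelli} interchange of the {\scshape Poisson} expectation with the two $t$-integrations.
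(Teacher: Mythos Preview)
Your proof is correct and, for the two estimates in \eqref{f_ann_N_bounds}, identical to the paper's: both apply {\scshape Jensen} below and the chord bound above to $\langle\e^{N\lambda P_N}\rangle_{\beta\bb}$, together with the explicit computation $\langle P_N\rangle_{\beta\bb}=p_N$.

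The differences are local but worth noting. For \eqref{f_ann_N_bound2} the paper stays within the {\scshape PFK} representation and restricts the {\scshape Poisson} expectation to the single no-flip path, which occurs with probability $\big(\cosh(\beta\bb)\big)^{-N}$ and reproduces the zero-field partition function; your {\scshape Peierls} argument in the $S^{\z}$-eigenbasis is the direct operator-theoretic counterpart (the paper itself flags this alternative in Remark~\ref{remthm1}\ref{Golden_Thompson}). For monotonicity in $\lambda$ the paper deduces it from concavity plus the endpoint inequality $\beta f^{\ann}_N\leq-\ln\big(2\cosh(\beta\bb)\big)$, whereas you go through the pointwise lower bound $P_N\geq1/N$ coming from the diagonal terms of $P_N=N^{-2}\sum_{i,j}\big(\int_0^1\sigma_i\sigma_j\big)^2$; this is slightly more informative, since it shows $\partial_\lambda\beta f^{\ann}_N\leq0$ at every $\lambda$ rather than only on the whole half-line. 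Similarly, for \eqref{f_ann_N_weak_disorder} the paper sandwiches between the two bounds of \eqref{f_ann_N_bounds} and uses $G_N/(N\lambda)\to p_N$ from \eqref{G_N_weak}, while you identify the limit as the right derivative at $\lambda=0$; both are sound, yours being marginally shorter once concavity is in hand. Part~\ref{f_ann_N_behavior2} is the same argument in both cases---the paper's one-line invocation of {\scshape H\"older} with respect to $\EE[\tr(\cdot)]$ is exactly your log-convexity-preserved-by-integration step spelled out.
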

\begin{proof}
	\begin{enumerate}
\item The claimed inequalities \eqref{f_ann_N_bounds} are equivalent to
\begin{equation}
	Np_{N}\lambda\leq F_{N} \leq G_{N}\,, \label{F_N_bounds}
\end{equation}
where 
\begin{align}\label{F_N1}
F_{N}\ceq&\,\lambda-N \beta  f^{\ann}_N-N\ln\big(2\cosh(\beta\bb)\big)\\
=&\,\ln\big(\big\langle\exp\big(N\lambda P_{N}\big)\big\rangle_{\!\beta\bb}\big)\,,\label{F_N2}
\end{align}
confer \eqref{f_ann_N}, \eqref{Z_N_mean}, and \eqref{P_N}. By an explicit calculation, using \eqref{mu} and \eqref{p}, we obtain
\begin{equation}\label{P_N_mean}
	\langle P_{N}\rangle_{\beta\bb}=p_{N}\,.
\end{equation}
Consequently, the lower estimate in \eqref{F_N_bounds} follows from the convexity of the exponential and the {\scshape Jensen} inequality with respect to the {\scshape Poisson} expectation in \eqref{F_N2}. To obtain the upper estimate in \eqref{F_N_bounds} we recall from \eqref{Phi_N_bounds} that $P_{N}\in{\lbrack0,1\rbrack}$. Therefore we have
\begin{equation}\label{jensen4}
	\big\langle\exp\big(N\lambda P_{N}\big)\big\rangle_{\beta\bb}\leq 1+\big\langle P_{N}\big\rangle_{\beta\bb}\big(\e^{N\lambda}-1\big)
\end{equation}
by the ({\scshape Jensen}) inequality $\e^{ya}\leq 1+a(\e^{y}-1)$, for $y\in\R$ and $a\in{\lbrack0,1\rbrack}$, used already in the proof of the first inequality in \eqref{G_N_weak}, and by taking the {\scshape Poisson} expectation.
Estimate \eqref{f_ann_N_bound2} follows from the disorder average of the inequality $Z_{N}(\beta\bb,\beta\vv)\geq Z_{N}(0,\beta\vv)\ceq\lim_{b\downarrow0}Z_{N}(\beta\bb,\beta\vv)$ for the (random) trace $Z_{N}\equiv Z_{N}(\beta\bb,\beta\vv)$. For later reference we prove this probabilistically and estimate the {\scshape Poisson} expectation in \eqref{fk} from below by restricting it to the single realization without any spin flip (that is, without any jump) in the time interval ${\lbrack0,1\rbrack}$. This realization occurs if, and only if, the random variable $\prod_{i=1}^{N} 1(\sigma_{i})$, defined by $1(\sigma_{i})\ceq 1$ if $\sigma_{i}(t)=1$ for all $t\in{\lbrack0,1\rbrack}$ and $1(\sigma_{i})\ceq0$ otherwise, takes its maximum value 1. The probability of this event is 
\begin{equation}\label{prob_constant_path}
	\Big\langle\prod_{i=1}^{N}1(\sigma_{i})\Big\rangle_{\beta\bb}=\prod_{i=1}^{N}\big\langle 1(\sigma_{i})\big\rangle_{\beta\bb}=\Big(\big\langle 1(\sigma_{1})\big\rangle_{\beta\bb}\Big)^{N}=\big(\cosh(\beta\bb)\big)^{-N}\,.
\end{equation}
\item The first sentence is obvious by \eqref{F_N2}. This equation also shows that $\beta f^{\ann}_{N}$ is concave in $\lambda$, because the right-hand side of \eqref{F_N2} is convex by the {\scshape H\"older} inequality. The monotonicity in $\lambda$ then follows from the concavity and $\beta f^{\ann}_{N}\leq-\ln\big(2\cosh(\beta\bb)\big)$ for all $\lambda>0$ by \eqref{f_ann_N_bounds} with obvious equality in the limiting case $\lambda =0$. 
The claim \eqref{f_ann_N_weak_disorder} follows from \eqref{f_ann_N_bounds} and $\lim_{\lambda\downarrow 0} G_{N}/(\lambda N)=p_{N}$, see \eqref{G_N_weak}. The claim \eqref{f_ann_N_strong_disorder} follows from \eqref{f_ann_N_bound2} and the lower estimate in \eqref{f_ann_N_bounds} by using  $\lim_{\lambda\to\infty} G_{N}/(\lambda N)=1$, see \eqref{G_N_strong}.
\item This concavity follows from definition \eqref{f_ann_N} by the {\scshape H\"older} inequality with respect to the product measure $\EE[\tr(\cdot)]$.
\qed
\end{enumerate}
\end{proof}
\begin{remark}\label{remthm1}

\renewcommand{\labelenumi}{(\roman{enumi})}
\renewcommand{\theenumi}{(\roman{enumi})}

\begin{enumerate}
\item\label{zero-field} Equation\,\eqref{prob_constant_path} shows that the classical limit $\bb\downarrow0$ corresponds to taking into account only the single realization without any spin flip (for each spin).

\item \label{W_N_bounds} The last inequality in \eqref{G_N_strong} weakens the lower estimate in \eqref{f_ann_N_bounds} to $\lambda/N-\lambda$. This weaker lower estimate is quasi-classical in the sense of Lemma\,\ref{quasi_classical_lemma} below. It may also be derived from using the {\scshape Jensen} inequality $\exp\big(\int_{0}^{1}\dl{t}(\dots)\big)\leq\int_{0}^{1}\dl{t}\exp(\dots)$ in the {\scshape PFK} representation \eqref{fk} or from applying the {\scshape Golden--Thompson} inequality directly to the trace \eqref{Z_N}. The inequality $Z_{N}(\beta\bb,\beta\vv)\geq Z_{N}(0,\beta\vv)$, underlying the estimate \eqref{f_ann_N_bound2}, may also be derived directly from \eqref{Z_N} by applying the {\scshape Jensen--Peierls--Bogolyubov} inequality. For such trace inequalities we refer to \cite{S2005b}.\label{Golden_Thompson}

\item \label{p_remark} The parameter $p$ on both sides of \eqref{f_ann_N_bounds} is actually a bijective function of the product $\beta\bb>0$, see \eqref{p}. It is strictly decreasing, approaches its extreme values $1$ and $0$ in the limiting cases $\beta\bb\downarrow 0$ and $\beta\bb\to\infty$, respectively, and attains the value $1/2$ at $\beta\bb=1.19967\dots$, more precisely, at the solution of $\beta\bb\tanh(\beta\bb)=1$, see Figure\,\ref{p_plot}. In the first limiting case the estimates \eqref{f_ann_N_bounds} yield the well-known result for the zero-field {\scshape SK} model, given by the right-hand side of \eqref{f_ann_N_bound2}. 
They also guarantee that $f^{\ann}_{N}$ coincides with the free energy of the ideal paramagnet in the absence of disorder ($\lambda\downarrow0$).
In the opposite limit of extremely strong disorder ($\lambda\to\infty$) the result \eqref{f_ann_N_strong_disorder} shows that  the magnetic field becomes irrelevant in agreement with the zero-field {\scshape SK} model and ``physical intuition'' for the case $\beta\bb\ll\beta\vv$.
\begin{figure}
\begin{center}
	\psset{xunit=3cm,yunit=2.5cm}
	\begin{pspicture}(-0.6cm,-0.5cm)(10.4cm,3.3cm)
	\psaxes[Dx=0.5,Dy=0.5]{->}(3.3,1.2)
	\psset{algebraic,plotpoints=500}
	\psplot[linecolor=blue]{0.001}{3.1}{0.5/(COSH(x))^2+TANH(x)/(2*x)}
	\rput(0,1.3){$p$}
	\rput(3.4,0){$\beta\bb$}
	\end{pspicture}
	\caption{Plot of the parameter $p$ defined in \eqref{p} as a function of $\beta\bb$. See Remark\,\ref{remthm1}\,\ref{p_remark}.}
\label{p_plot}
\end{center}
\end{figure}
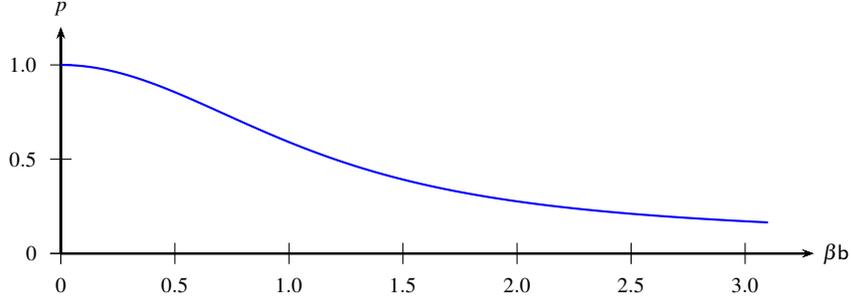
\item \label{rem_W_N} The lower estimate in \eqref{f_ann_N_bounds} may be sharpened by replacing $G_{N}$ with the less explicit bound
\begin{equation}\label{W_N}
\widetilde{G}_{N}\ceq \int_0^1\dl{t}\ln\Big(\!\!\int_{\R}\dl{x}w_N(x)\big[\cosh(\sqrt{4\lambda}\,x)+\mu(t,0)\sinh(\sqrt{4\lambda}\,x)\big]^N\Big)
\end{equation}
where  $w_{N}(x)\ceq\sqrt{N/\pi}\,\exp\big(-Nx^{2}\big)$ defines the centered {\scshape Gauss}ian probability density on the real line $\R$ with variance $1/(2N)$.
In fact, we have the two inequalities  $F_{N}\leq \widetilde{G}_{N}\leq G_{N}$. For their proofs we note that the quantity  $\ln\big(\big\langle\exp\big(K_{N}\big)\big\rangle_{\!\beta\bb}\big)$ as a functional of $\R$-valued {\scshape Poisson} random variables $K_{N}: \sigma\mapsto K_{N}(\sigma)$ is convex by the {\scshape H\"older} inequality. The first inequality now follows from \eqref{F_N1}, \eqref{F_N2}, the {\scshape Jensen} inequality applied to the two-fold integration in \eqref{P_N}, a {\scshape Gauss}ian linearization using $w_{N}$, and an explicit calculation, confer the proof of Lemma\,\ref{comparison} below.
The first step in the proof of the second inequality is the same as in the proof of the first inequality. But then, instead of performing the {\scshape Gauss}ian linearization, we use
 \eqref{jensen4} with $P_{N}$ replaced by $\big[Q_{N}\big(\sigma(t),\sigma(t')\big)\big]^{2}$, combine this with \eqref{P_N_mean}, and finally apply again the {\scshape Jensen} inequality to the two-fold integration, but this time using the concavity of the logarithm.
We note that $\widetilde{G}_{N}$ (like $G_{N}$), for any $N\geq 2$, may be viewed to depend on $\beta\bb$ only via $p$ because the function $\beta\bb\mapsto p$ is bijective. 
\item The upper estimate in \eqref{f_ann_N_bounds} may be sharpened considerably by combining \eqref{F_N1} with \eqref{wolfbound} below. As $N\to\infty$ this becomes optimal for any $\lambda$ according to Theorem\,\ref{f_ann_variational}\,\ref{varadhan_gauss}.
\item In the simple case $N=2$ the quenched and annealed free energies can be calculated exactly by determining the four eigenvalues of the two-spin {\scshape Hamilton}ian\, $H_{2}$ with the results
\begin{align}
	\EE[\beta f_{2}]&=-\frac{1}{2}\EE\Big[\ln\Big(2\cosh\Big(\sqrt{(2\beta\bb)^{2}+2\lambda g_{12}^{2}}\Big)+2\cosh\Big(\sqrt{2\lambda}\,g_{12}\Big)\Big)\Big]\,,\label{f_2}\\
	\beta f_{2}^{\ann}&=-\frac{1}{2}\ln\Big(\EE\Big[2\cosh\Big(\sqrt{(2\beta\bb)^{2}+2\lambda g_{12}^{2}}\Big)\Big]+2\e^{\lambda} \Big)\,.\label{f_ann_2}
\end{align}
For the remaining {\scshape Gauss}ian integrations over the real line there are explicit ``closed-form expressions'' available only in the limiting cases $\beta \bb=0$ and/or $\lambda=0$.

\end{enumerate}
\end{remark}
The basis for our somewhat weak result mentioned near the end of Section\,\ref{introduction} is

\begin{theorem}[On the difference between the quenched and annealed free energies]\label{thm_diff_q_a}

\begin{enumerate}
\item For any $\lambda>0$ we have the crude bounds
	\begin{equation}\label{bounds_q_a}
		0\leq\EE\ [\beta f_{N}]-\beta f^{\ann}_{N}\leq\frac{G_{N}}{N}-\frac{\lambda}{N}\leq\lambda\frac{N-1}{N}\,.
	\end{equation}

\item For any  $\lambda>0$ we also have the lower bound
	\begin{equation}\label{lower_bound_q_a}
			k(\lambda)-\frac{\lambda}{N}-\ln\big(\cosh(\beta\bb)\big)\leq\EE[\beta f_{N}]-\beta f^{\ann}_{N}\,,
	\end{equation}
	where 
	\begin{equation}\label{k}
		k(\lambda)\ceq\max_{q\in{\lbrack0,1\rbrack}}\Big(\lambda\big(1-(1-q)^{2}\big)-\EE\big[\ln\big(\cosh(g_{12}\sqrt{4\lambda q})\big)\big]\Big)\,.
	\end{equation}
\item \label{Thm_cond_k} A simple condition implying strict positivity of the lower bound in \eqref{lower_bound_q_a} is
	\begin{equation}\label{cond_k}
		\beta\bb\min\Big\{\frac{\beta\bb}{2},1\Big\} <\lambda\Big(\frac{N-1}{N}-\sqrt{\frac{8}{\pi\lambda}}\Big)\,. 
	\end{equation}
\end{enumerate}
\end{theorem}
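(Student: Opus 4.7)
For part~(a), the left inequality is the Jensen inequality~\eqref{jensen} for $-\ln$. For the middle inequality I would combine the already established lower estimate in~\eqref{f_ann_N_bounds} with the matching upper bound $\EE[\beta f_{N}]\leq -\ln(2\cosh(\beta\bb))$. The latter follows from the joint convexity of $(g_{ij})\mapsto\ln Z_{N}$ (a standard consequence of the joint convexity of $H\mapsto\ln\tr\e^{H}$ on self-adjoint operators, or directly from the PFK representation~\eqref{fk} which expresses $Z_{N}$ as a sum of exponentials of linear functionals of the disorder) together with Jensen's inequality and $\ln Z_{N}\big|_{(g_{ij})=0}=N\ln(2\cosh(\beta\bb))$. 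The rightmost inequality in~\eqref{bounds_q_a} is then immediate from the upper estimate $G_{N}/N<\lambda$ in~\eqref{G_N_strong}.

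For part~(b), my plan is to reduce to the zero-field classical SK model via Golden--Thompson and then invoke Guerra's finite-volume replica-symmetric bound. Splitting $H_{N}=H_{\z}+H_{\x}$ into its $S^{\z}$-diagonal part $H_{\z}\ceq -(\vv/\sqrt{N})\sum_{i<j}g_{ij}S^{\z}_{i}S^{\z}_{j}$ and its transverse part $H_{\x}\ceq -\bb\sum_{i}S^{\x}_{i}$, the Golden--Thompson inequality together with $\e^{\beta\bb S^{\x}_{i}}=\cosh(\beta\bb)\um+\sinh(\beta\bb)S^{\x}_{i}$ gives
\[
Z_{N}\leq\tr\Big(\e^{-\beta H_{\z}}\prod_{i=1}^{N}\e^{\beta\bb S^{\x}_{i}}\Big)=\big(\cosh(\beta\bb)\big)^{N}\,Z_{N}(\beta\vv,0)\,,
\]
since only the identity term in the expansion of $\prod_{i}\big(\cosh(\beta\bb)\um+\sinh(\beta\bb)S^{\x}_{i}\big)$ contributes to the trace against the $S^{\z}$-diagonal operator $\e^{-\beta H_{\z}}$. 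Taking $-\tfrac{1}{N}\ln$ and then $\EE$ yields $\EE[\beta f_{N}]\geq\EE[\beta f_{N}(\beta\vv,0)]-\ln\cosh(\beta\bb)$. I would then apply Guerra's replica-symmetric bound, valid at every $N\geq 2$ and every $q\in{\lbrack0,1\rbrack}$,
\[
\EE\big[\beta f_{N}(\beta\vv,0)\big]\geq -\ln 2 -\EE\big[\ln\cosh\big(g_{12}\sqrt{4\lambda q}\big)\big]-\lambda(1-q)^{2}\,,
\]
together with the explicit classical annealed value $\beta f_{N}^{\ann}(\beta\vv,0)=-\ln 2-\lambda(N-1)/N$, obtained by a direct Gaussian integration of $Z_{N}(\beta\vv,0)=\sum_{s}\e^{-\beta h_{N}(s)}$. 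Subtracting and optimizing over $q$ then gives $\EE[\beta f_{N}(\beta\vv,0)]-\beta f_{N}^{\ann}(\beta\vv,0)\geq k(\lambda)-\lambda/N$, the $-\lambda/N$ arising precisely from the $(N-1)/N$ factor in the classical annealed. The no-jump restriction already used in the proof of~\eqref{f_ann_N_bound2} further gives $Z_{N}\geq Z_{N}(\beta\vv,0)$, hence $\beta f_{N}^{\ann}\leq\beta f_{N}^{\ann}(\beta\vv,0)$, so I can replace the classical annealed by the quantum annealed in the lower bound and conclude~\eqref{lower_bound_q_a}.

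For part~(c), I would just pick $q=1$ in the maximum defining $k(\lambda)$ in~\eqref{k} and apply the elementary bounds $\ln\cosh(x)\leq|x|$ and $\EE[|g_{12}|]=\sqrt{2/\pi}$, giving $k(\lambda)\geq\lambda-\sqrt{8\lambda/\pi}$. Using $\ln\cosh(\beta\bb)\leq\beta\bb$ similarly, the strict positivity $k(\lambda)-\lambda/N-\ln\cosh(\beta\bb)>0$ reduces to $\lambda(N-1)/N-\sqrt{8\lambda/\pi}>\beta\bb$, which is precisely~\eqref{cond_k}.

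The main obstacle is step~(b): it requires the (by now standard but non-trivial) Guerra replica-symmetric bound at finite $N$ for the classical zero-field SK model, together with the careful bookkeeping of the $\lambda/N$ defect between the classical annealed $-\lambda(N-1)/N$ and the clean $-\lambda$ (at $q=0$) appearing in the RS inequality. All other ingredients are elementary consequences of the PFK representation and of Lemma~\ref{mpG_N_lemma} and Theorem~\ref{f_ann_bounds}.
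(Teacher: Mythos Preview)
Your proof is correct and essentially coincides with the paper's. The only cosmetic differences are: in part~(a) you obtain $\EE[\beta f_{N}]\leq-\ln\big(2\cosh(\beta\bb)\big)$ directly from the convexity of $(g_{ij})\mapsto\ln Z_{N}$ and Jensen, whereas the paper routes this through the quasi-classical Lemma~\ref{quasi_classical_lemma} (the shortcut you use is in fact acknowledged in Remark~\ref{remcor1}\ref{Golden_Thompson3}); in part~(b) you invoke Golden--Thompson explicitly for $\beta f_{N}(\beta\vv,0)\leq\beta f_{N}+\ln\cosh(\beta\bb)$, which the paper derives via the PFK representation but notes as equivalent to Golden--Thompson in Remark~\ref{remthm1}\ref{Golden_Thompson}. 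Your bookkeeping of the $\lambda/N$ defect and the use of $\beta f^{\ann}_{N}\leq\beta f^{\ann}_{N}(\beta\vv,0)$ (which is exactly \eqref{f_ann_N_bound2}) match the paper line for line, and part~(c) is identical.
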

The proof of Theorem\,\ref{thm_diff_q_a}, given below, is based on the lower estimate in \eqref{f_ann_N_bounds}, the estimate \eqref{f_ann_N_bound2}, certain quasi-classical estimates for the (random) free energy $\beta f_{N}\equiv \beta f_{N}(\beta\bb,\beta\vv)$, divided by the temperature, and on the (so-called replica-symmetric) \emph{{\scshape SK} approximation} $k(\lambda)-\lambda-\ln(2)$ to the macroscopic quenched free energy of the (zero-field) {\scshape SK} model \cite{SK1975}. This approximation provides a lower bound on $\EE\big[\beta f_{N}(0,\beta\vv)\big]$ for all $\beta\vv$ and $N\geq 2$ according to {\scshape Guerra}, see \cite[Ineq.\,(5.7)]{G2001}, \cite{G2003}, and also \cite[Thm.\,1.3.7]{T2011a}. Since the quasi-classical estimates are of some independent interest, we firstly compile them in

\begin{lemma}[Quasi-classical estimates for the free energy]\label{quasi_classical_lemma}
	\begin{equation}\label{quasi_classical}
	\beta f_{N}(0,\beta\vv)\leq\beta f_{N}(\beta\bb,\beta\vv)+\ln\big(\cosh(\beta\bb)\big)\leq\beta f_{N}(0,\beta p\vv)\leq-\ln(2)\,.
	\end{equation}
	\begin{equation}\label{maximal}
	0\leq\beta f_{N}(0,\beta\vv)-\beta f_{N}(\beta\bb,\beta\vv)\leq\ln\big(\cosh(\beta\bb)\big)\leq \beta\bb\min\{\beta\bb/2,1\}\,.
	\end{equation}
\end{lemma}

\begin{proof}[Lemma\,\ref{quasi_classical_lemma}]
The first estimate in \eqref{quasi_classical} follows from using the {\scshape Jensen} inequality $\exp\big(\int_{0}^{1}\dl{t}(\dots)\big)\leq\int_{0}^{1}\dl{t}\exp(\dots)$ in the {\scshape PFK} representation \eqref{fk}. Alternatively, it may be viewed as an application of the {\scshape Golden--Thompson} inequality, confer Remark\,\ref{remthm1}\,\ref{Golden_Thompson}. Using in \eqref{fk} the {\scshape Jensen} inequality $\big\langle\exp(\dots)\big\rangle_{\beta\bb}\geq\exp\big(\langle\dots\rangle_{\beta\bb}\big)$ combined with \eqref{mu} 
and \eqref{p} yields the second estimate in \eqref{quasi_classical}. The last estimate in \eqref{quasi_classical} is due to the operator inequality $\exp(-\beta H_{N})\geq 1-\beta H_{N}$ and $\tr H_{N}=0$ for arbitrary $\bb$ and $\vv$.
The first two estimates in \eqref{maximal} follow by combining the first estimate in \eqref{quasi_classical} with the random version underlying \eqref{f_ann_N_bound2}, see its proof and also Remark\,\ref{remthm1}\,\ref{Golden_Thompson}. The last estimate in \eqref{maximal} is due to the elementary inequalities $\cosh(y)\leq\exp\big(\min\{y^{2}/2,|y|\}\big)$ for $y\in\R$. Here the second inequality is obvious and the first one follows by comparing the two associated {\scshape Taylor} series' termwise and using $n!2^{n}\leq(2n)!$ for $n\in\N$.
\qed
\end{proof}

\begin{remark}\label{remcor1}
\renewcommand{\labelenumi}{(\roman{enumi})}
\renewcommand{\theenumi}{(\roman{enumi})}
\begin{enumerate}
	\item The estimates \eqref{quasi_classical} are quasi-classical, because the quantum fluctuations lurking behind the anti-commutativity $S^{\z}_{i}S^{\x}_{i}=-S^{\x}_{i}S^{\z}_{i}$ , equivalently behind the randomness of the {\scshape Poisson} process ${\cal N}_{i}$, are neglected by the first estimate and taken somewhat into account by the second one in terms of an effective disorder parameter $\beta p\vv\leq\beta\vv$. The last estimate in \eqref{quasi_classical} corresponds to the limiting cases $\vv\downarrow0$ and $\bb\to\infty$.
	The estimates \eqref{maximal} control in a simple way the influence of the transverse magnetic field on the values attainable by the free energy.
	\item\label{special} The estimates \eqref{quasi_classical} imply especially $\beta f_{N}+\ln\big(2\cosh(\beta\bb)\big)\leq 0$. This estimate can be derived, without $\beta f_{N}(0,\beta p\vv)$, directly from \eqref{fk} by using $\exp(y)\geq 1+y$ for $y\in\R$ and $\sum_{s}h_{N}\big(s\sigma(t)\big)=0$. Alternatively, it may be viewed as an application of the {\scshape Jensen--Peierls--Bogolyubov} inequality, similarly as in Remark\,\ref{remthm1}\,\ref{Golden_Thompson}.\label{Golden_Thompson3}
\end{enumerate}
\end{remark}
\begin{proof}[Theorem\,\ref{thm_diff_q_a}]
\begin{enumerate}
	\item The first bound in \eqref{bounds_q_a} is \eqref{jensen}. For the second bound in \eqref{bounds_q_a} we take the disorder average of the second estimate in \eqref{quasi_classical} and combine the result with the lower estimate in \eqref{f_ann_N_bounds}. This gives
	\begin{equation}
		\EE[\beta f_{N}]-\beta f^{\ann}_{N}\leq\frac{G_{N}}{N}-\frac{\lambda}{N}+\EE\big[\beta f_{N}(0,\beta p\vv)\big]+\ln(2)\,.
	\end{equation}
	Finally, we use the disorder average of the last (crude) estimate in \eqref{quasi_classical}
	\begin{equation}\label{crude_bound}
		\EE\big[\beta f_{N}(0,\beta p\vv)\big]\leq-\ln(2)\,.
	\end{equation}
	The simplified last bound in \eqref{bounds_q_a} is due to \eqref{G_N_strong}.
\item We begin by claiming the lower estimate
\begin{equation}\label{Guerra}
	\EE[\beta f_{N}]\geq k(\lambda)-\lambda-\ln\big(2\cosh(\beta\bb)\big)\,.
\end{equation}
It simply follows by combining the disorder average of the second estimate in \eqref{maximal} with {\scshape Guerra}'s lower bound on $\EE\big[\beta f_{N}(0,\beta\vv)\big]$, that is, with \eqref{Guerra} for $\bb=0$. The claim \eqref{lower_bound_q_a} now follows by combining \eqref{Guerra} with \eqref{f_ann_N_bound2}.

\item We have $k(\lambda)\geq \lambda-\EE\big[\ln\big(\cosh(g_{12}\sqrt{4\lambda})\big)\big]\geq \lambda-\sqrt{8\lambda/\pi}$. Here, the first inequality follows from restricting the maximization in \eqref{k} to $q=1$. The second inequality follows from $\ln(\cosh(y))\leq |y|$ for $y=g_{12}\sqrt{4\lambda}$ combined with $\EE[|g_{12}|]=\sqrt{2/\pi}$. The last estimate in \eqref{maximal} now completes the proof.
\qed
 
\end{enumerate}
\end{proof}
\begin{remark}\label{rem_quasiclassical}
\renewcommand{\labelenumi}{(\roman{enumi})}
\renewcommand{\theenumi}{(\roman{enumi})}

\begin{enumerate}
	\item The upper bound in \eqref{bounds_q_a} has an underlying random version, namely $\beta f_{N}-\beta f^{\ann}_{N}\leq(G_{N}-\lambda)/N$. It simply follows by combining the lower estimate in \eqref{f_ann_N_bounds} with the estimate in Remark\,\ref{remcor1}\,\ref{Golden_Thompson3}.
	\item \label{rem_k}We recall some more or less well-known properties of the function $k: \lambda\mapsto k(\lambda)$.
	First, we have the bounds $\max\{0,\lambda-\sqrt{8\lambda/\pi}\}\leq k(\lambda)\leq \lambda$. The lower bound $0$ follows from restricting the maximization in \eqref{k} to $q=0$, the other lower bound has just been derived in the proof of Theorem\,\ref{thm_diff_q_a}\,\ref{Thm_cond_k}, and the upper bound follows from using $\cosh(y)\geq1$ in \eqref{k}.
	Second, we have the equivalence $4\lambda\leq 1 \Leftrightarrow k(\lambda)=0$. It follows by considering the first two derivatives with respect to $q$ of the function to be maximized in \eqref{k}. These can be studied via {\scshape Gauss}ian integration by parts.
	Third, for $4\lambda>1$ the function $k$ is strictly increasing according to its first derivative $k'(\lambda)=\big(q(\lambda)\big)^{2}\geq 0$ where $q(\lambda)$ is the unique strictly positive solution of $q=\EE\big[\big(\tanh(g_{12}\sqrt{4\lambda q})\big)^{2}\big]$, sometimes called the (zero-field) {\scshape SK} equation.
	\item Obviously, any sharpening of \eqref{crude_bound} or \eqref{Guerra} improves the bound in \eqref{bounds_q_a} or \eqref{lower_bound_q_a}, respectively. 
\end{enumerate}

\end{remark}


\section{The topics of Section\,\ref{sec_annealed_free_energy} in the macroscopic limit}\label{sec_macroscopic_annealed_free_energy}
From now on we are mainly interested in the macroscopic limit $N\to\infty$. The next theorem is the main result of this section and analogous to Theorem \,\ref{f_ann_bounds}.

\begin{theorem}[On the macroscopic annealed free energy]\label{f_ann_exist}
\begin{enumerate}
	\item For any $\lambda> 0$  the macroscopic limit of the annealed free energy exists, is given by
	\begin{equation}
	f_{\infty}^{\ann}\ceq\lim\limits_{N\to\infty}f^{\ann}_{N}=\sup\limits_{N\geq 2}\Big(f^{\ann}_N-\frac{\lambda}{N\beta}\Big)\,, \label{f_ann_inf}
	\end{equation}
	and obeys the three estimates
	\begin{align}
	-\inf_{N\geq2}\frac{G_{N}}{N}\leq &\beta f^{\ann}_\infty+\ln\big(2\cosh(\beta\bb)\big)\leq -p\,\lambda\,,\label{f_ann_inf_bounds}\\
	&\beta f^{\ann}_{\infty}\leq -\lambda-\ln(2)\,. \label{f_ann_inf_bound2}
	\end{align}

	\item The dimensionless limit $\beta f^{\ann}_{\infty}$  depends on the disorder strength $\vv$ only via the dimensionless variable $\lambda>0$. The function $\lambda\mapsto \beta f^{\ann}_{\infty}$ is concave, is not increasing,
	and has the following weak- and strong-disorder limits
	\begin{align}
		\lim\limits_{\lambda\downarrow 0}\Big(\frac{1}{\lambda}\big(\beta f^{\ann}_{\infty}+\ln\big(2\cosh(\beta\bb)\big)\Big)&=-p\,,\label{weak_disorder}\\
		\lim\limits_{\lambda\to\infty}\frac{1}{\lambda}\beta f^{\ann}_{\infty}&=-1\label{strong_disorder}\,.
	\end{align}
	
	\item\label{strong_field} The difference between the (macroscopic) annealed free energy and the ideal paramagnetic free energy vanishes for any $\vv>0$ in the high-field limit according to  
	\begin{equation}\label{high_field}
	                        \lim\limits_{\bb\to\infty}\big(\beta f^{\ann}_{\infty}+\ln\big(2\cosh(\beta\bb)\big)=0\,.
	\end{equation}                       
	\item\label{f_ann_inf_behavior2} The function $\beta\mapsto\beta f^{\ann}_{\infty}$ is concave for any $\vv>0$.
	\end{enumerate}
\end{theorem}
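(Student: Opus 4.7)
The plan is to lift the concavity of $\beta\mapsto\beta f^{\ann}_{N}$ established in Theorem\,\ref{f_ann_bounds}\,(c) to the limit $N\to\infty$, invoking the standard fact that a pointwise limit of concave functions is concave.

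First, I would recall that for each fixed $N\geq 2$, the function $\beta\mapsto\beta f^{\ann}_{N}$ is concave on ${\rbrack 0,\infty\lbrack}$ by Theorem\,\ref{f_ann_bounds}\,(c). That concavity itself comes from the H\"older inequality applied to
\begin{equation*}
-N\beta f^{\ann}_{N}=\ln\big(\EE\big[\tr\e^{-\beta H_{N}}\big]\big),
\end{equation*}
interpreted as (the logarithm of) a moment generating function of $H_{N}$ with respect to the product measure $\EE[\tr(\cdot\,)]$, so that $\beta\mapsto-N\beta f^{\ann}_{N}$ is convex in $\beta$.

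Next, by Theorem\,\ref{f_ann_exist}\,(a) the limit $\beta f^{\ann}_{\infty}=\lim_{N\to\infty}\beta f^{\ann}_{N}$ exists for every $\beta\in{\rbrack 0,\infty\lbrack}$ (the positive factor $\beta$ does not affect pointwise convergence with $\vv$ and $\bb$ held fixed). Pick any $\beta_{1},\beta_{2}>0$ and $\alpha\in{\lbrack 0,1\rbrack}$, and set $\beta_{\alpha}\ceq\alpha\beta_{1}+(1-\alpha)\beta_{2}$. Concavity of $\beta\mapsto\beta f^{\ann}_{N}$ at each finite $N$ yields
\begin{equation*}
\alpha\,\beta_{1}f^{\ann}_{N}(\beta_{1})+(1-\alpha)\,\beta_{2}f^{\ann}_{N}(\beta_{2})\leq\beta_{\alpha}f^{\ann}_{N}(\beta_{\alpha}),
\end{equation*}
where I have made the $\beta$-dependence explicit and kept $\vv,\bb$ fixed. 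Letting $N\to\infty$ on both sides and using part (a) preserves the inequality, giving concavity of $\beta\mapsto\beta f^{\ann}_{\infty}$.

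There is essentially no obstacle: the statement is a soft consequence of (a) and of Theorem\,\ref{f_ann_bounds}\,(c). The only point worth flagging is that the supremum representation in \eqref{f_ann_inf} is \emph{not} a useful route here, since one would be taking a supremum (rather than a pointwise limit) of concave functions, and suprema of concave functions need not be concave; the pointwise-limit formulation of (a) is what matters.
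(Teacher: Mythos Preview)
Your proof is correct and is essentially the same as the paper's: the paper also argues that $\beta\mapsto\beta f^{\ann}_{\infty}$ is concave because it is the pointwise limit, via part~(a), of the concave functions $\beta\mapsto\beta f^{\ann}_{N}$ from Theorem~\ref{f_ann_bounds}\,(c). Your additional remark that the supremum representation in \eqref{f_ann_inf} is not the right route (since suprema of concave functions need not be concave) is a nice clarification not spelled out in the paper.
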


\begin{remark}\label{rem_inf_G_N}
\renewcommand{\labelenumi}{(\roman{enumi})}
\renewcommand{\theenumi}{(\roman{enumi})}
\begin{enumerate}
	\item\label{inf_bounds} For the lower estimate in \eqref{f_ann_inf_bounds}, Lemma\,\ref{mpG_N_lemma} implies the bounds
	\begin{align}
		p\lambda<\ln\big(1+p(\e^{\lambda}-1)\big)&\leq\inf_{N\geq 2}G_{N}/N\leq G_{M}/M\label{inf_weak_lower}\\
	        &\leq p\lambda -(1-p)\lambda^{2}/2+(1-p)\lambda(M\lambda +2/M)/2\,, \label{inf_weak_upper}\\
                 \max\big\{0,\lambda+\ln(p)/2\big\}&\leq\inf_{N\geq2}G_{N}/N\leq G_{M}/M<\lambda=\lim_{N\to\infty}G_{N}/N\label{inf_strong}
	\end{align}
	 for any $\lambda>0$, $\beta\bb>0$, and natural number $M\geq2$. Another upper bound is
\begin{equation}\label{inf_p}
	 \inf_{N\geq 2}G_{N}/N\leq\ln(1+ (3/2)p\e^{2\lambda})/2\,,
\end{equation}
which follows from \eqref{inf_strong} by choosing $M\geq 2/p$ and observing $p\leq 1$. It is smaller than $\lambda$ if and only if $p<(2/3)(1-\e^{-2\lambda})$.
\item The infimum over $M\geq2$ in \eqref{inf_weak_upper} is attained and depends on $\lambda$. The smaller $\lambda$ is, the larger is the minimizing $M$. In particular, if $\lambda<1/3$ then, and only then, the minimizing $M$ is larger than 2.
\item Returning to $\inf_{N\geq2}G_{N}/N $ itself, the minimizing $N\geq2$  depends on $\lambda$ and, via $p$, also on $\beta\bb$. In the weak- and strong-disorder limits we have
\begin{equation}\label{g_limits}
	\lim_{\lambda\downarrow0}g(\lambda)=p\,,\qquad\qquad\lim_{\lambda\to\infty}g(\lambda)=1
\end{equation}
for $g(\lambda)\ceq\inf_{N\geq2}G_{N}/(\lambda N)$. The strong-disorder limit is obvious from \eqref{inf_strong}. For the proof of the weak-disorder limit we use \eqref{inf_weak_upper} to obtain $\limsup_{\lambda\downarrow0}g(\lambda)\leq p +(1-p)/M=p_{M}$. Taking now the infimum over $M\geq2$ and observing $p\leq g(\lambda)$ from \eqref{inf_weak_lower} completes the proof of \eqref{g_limits}.
For intermediate values of $\lambda$ in the sense that $2\lambda<\ln(1+2/p)$, equivalently  $G_{4}/4<G_{2}/2$, the infimum in \eqref{f_ann_inf_bounds} is attained at some $N\geq 3$.
A numerical approach suggests that the minimizer is $N=2$ for all $\lambda\geq1$ if $\beta\bb\leq1/2$.
In this context we recall from \eqref{F_N_bounds} the inequality $F_{2}\leq G_{2}$, for all $\lambda$ and $\beta\bb$, and from \eqref{f_ann_2} and \eqref{F_N1} that $F_{2}$ can be calculated exactly.
However, while $-F_{2}/2$ provides a sharper lower bound in \eqref{f_ann_inf_bounds} than $-G_{2}/2$, it is a less explicit function of $\lambda$ and $\beta\bb$.
\end{enumerate}
\end{remark}
\newpage
\begin{proof}[Theorem\,\ref{f_ann_exist}]
\begin{enumerate}
	\item For the proof of \eqref{f_ann_inf} we show that the sequence $(F_N)_{N\geq 2}$, as defined in \eqref{F_N1}, is sub-additive. Indeed, for two arbitrary natural numbers $N_{1},N_{2}\geq 2$ and classical spin configurations $s,\widehat{s}\in\{-1,1\}^{N_{1}+N_{2}}$  we have
	\begin{align}
		\big[Q_{N_{1}+N_{2}}(s,\widehat{s})\big]^{2}
		\leq\frac{N_{1}}{N_{1}+N_{2}}\Big(\frac{1}{N_{1}}\sum\limits_{i=1}^{N_{1}}s_{i}\widehat{s}_{i}\Big)^{2}
		+\frac{N_{2}}{N_{1}+N_{2}}\Big(\frac{1}{N_{2}}\sum\limits_{i=N_{1}+1}^{N_{1}+N_{2}}s_{i}\widehat{s}_{i}\Big)^{2}
	\end{align}
	by the convexity of the square and the {\scshape Jensen} inequality. By combining this with \eqref{F_N2}, \eqref{Phi_N3}, and \eqref{Q_N} and by using the independence of the involved {\scshape Poisson} processes we get the claimed sub-additivity, $F_{N_{1}+N_{2}}\leq F_{N_{1}}+F_{N_{2}}$. According to {\scshape Fekete} \cite{F1923} (see also \cite[Lem.\,10.21]{K2002}) this establishes the convergence result
	\begin{equation}
		\lim_{N\to\infty}\frac{F_{N}}{N}=\inf_{N\geq 2}\frac{F_{N}}{N}\geq p\lambda > 0\,,
	\end{equation}
	where we have used also \eqref{F_N_bounds} and \eqref{G_N}. By \eqref{F_N1} this gives the claim \eqref{f_ann_inf}.
	The estimates \eqref{f_ann_inf_bounds} and \eqref{f_ann_inf_bound2} follow from \eqref{f_ann_inf} applied to \eqref{f_ann_N_bounds} and \eqref{f_ann_N_bound2}, respectively.

	\item The first sentence follows from \eqref{f_ann_inf} and the corresponding one in Theorem\,\ref{f_ann_bounds}\,\ref{f_ann_N_behavior}. The function $\lambda\mapsto\beta f^{\ann}_{\infty}$ is concave and not increasing, because it is the pointwise limit of a family of such functions according to \eqref{f_ann_inf} and Theorem\,\ref{f_ann_bounds}\,\ref{f_ann_N_behavior}.
	The claim \eqref{weak_disorder} follows from \eqref{f_ann_inf_bounds} and \eqref{g_limits}. The claim \eqref{strong_disorder} follows from \eqref{f_ann_inf_bound2}, the lower estimate in \eqref{f_ann_inf_bounds}, and \eqref{g_limits}.

         \item This follows from \eqref{f_ann_inf_bounds}, \eqref{inf_p}, and $ \lim\limits_{\bb\to\infty} p=0$.
	 \item The function $\beta\mapsto\beta f^{\ann}_{\infty}$ is concave, because it is the pointwise limit of a family of such functions according to \eqref{f_ann_inf} and Theorem\,\ref{f_ann_bounds}\,\ref{f_ann_N_behavior2}.\qed
\end{enumerate}
\end{proof}

\begin{remark}\label{remthm2}
	\renewcommand{\labelenumi}{(\roman{enumi})}
	\renewcommand{\theenumi}{(\roman{enumi})}
	\begin{enumerate}

	\item The lower estimate in \eqref{f_ann_inf_bounds} becomes sharpened (for intermediate values of $\lambda$) when $G_{N}$ is replaced by $\widetilde{G}_{N}$. This is a consequence of Remark\,\ref{remthm1}\,\ref{rem_W_N}.
	\item\label{high} The high-field relation \eqref{high_field} complements the strong-disorder relation \eqref{strong_disorder}. A relation analogous to \eqref{high_field} also holds for the macroscopic quenched free energy $\lim_{N\to\infty}\EE[f_{N}]$. This follows by combining \eqref{high_field} with \eqref{jensen} and the disorder average of the estimate in Remark\,\ref{remcor1} \ref{special} in the limit $N\to\infty$.
	Here we rely on the fact that $\lim_{N\to\infty}\EE[f_{N}]$ exists for all $\lambda>0$. For the classical limit $\bb=0$ this has been shown by {\scshape Guerra} and {\scshape Toninelli} in their seminal paper \cite{GT2002}. Its extension to the quantum case $\bb>0$ is due to {\scshape Crawford} \cite{C2007} by building on \cite{GT2003}. For $\lambda<1/4$ this extension also follows from our main result $\Delta_{\infty}\ceq\lim_{N\to\infty}\big(\EE[f_{N}]-f^{\ann}_{N}\big)=0$ obtained by probabilistic arguments in Theorem\,\ref{difference} below.
	Returning to the high-field relation for the macroscopic quenched free energy, we note that it is consistent with the inequality
	\begin{equation}\label{eitheror}
		\lim_{N\to\infty}\EE[\beta f_{N}]\leq\min\big\{-\ln\big(2\cosh(\beta\bb)\big),\lim_{N\to\infty}\EE[\beta f_{N}(0,\beta\vv)]\big\}
	\end{equation}
	following from the disorder averages of \eqref{quasi_classical} and \eqref{maximal} for any $\bb>0$. Equality in \eqref{eitheror}, for given $\vv>0$, only holds in the limiting cases $\bb\downarrow 0$ and $\bb\rightarrow\infty$, as follows from \eqref{quasi_classical} and the strict concavity of $\lim_{N\to\infty}\EE[\beta f_{N}]$ in $\bb\in\R$.
	This contrasts the quantum random energy model (QREM) for which equality holds in the analog of \eqref{eitheror} for all $\bb$ (and $\vv$) according to {\scshape Goldschmidt} in \cite{G1990}. Although the QREM is much simpler than the quantum {\scshape SK} model \eqref{H_N}, a rigorous proof of his statement was achieved only recently by {\scshape Manai} and {\scshape Warzel} in \cite{MW2020}.
	\item In the limit $N\to\infty$ the bounds in Theorem\,\ref{thm_diff_q_a} take the form 
	\begin{equation}\label{diff_bound}
	\max \big\{0,\, k(\lambda)-\ln\big(\cosh(\beta\bb)\big)\big \}\leq\beta\Delta_{\infty}\leq\inf_{N\geq 2}\frac{G_{N}}{N}\,.
	\end{equation}
	The upper bound in \eqref{diff_bound} is due to the disorder average of \eqref{quasi_classical} and the lower estimate in \eqref{f_ann_inf_bounds}. At the expense of weakening this bound for intermediate values of $\lambda$, it can be made somewhat more explicit with the help of \eqref{inf_weak_upper} and \eqref{inf_strong} or \eqref{inf_p} for small and large $\lambda$, respectively. In any case, by \eqref{inf_weak_lower} the upper bound in \eqref{diff_bound} is not sharp enough to vanish for $\lambda<1/4$. But it vanishes for any $\lambda\in{]0,\infty[}$ in the high-field limit $ \bb\to\infty $ due to  \eqref{inf_p}.
	\item According to the monotonicity mentioned in Remark\,\ref{rem_quasiclassical}\,\ref{rem_k} the lower bound in \eqref{diff_bound} is strictly positive for sufficiently large $\lambda>1/4$, in particular, if $\beta\bb\min\{\beta\bb/2,1\} <\lambda-\sqrt{8\lambda/\pi}$, see \eqref{cond_k}.
	It follows from \eqref{diff_bound} that the difference $\Delta_{\infty}\geq0$ between the macroscopic quenched and annealed free energies is strictly positive for any pair $(\beta,\bb)\in{\rbrack0,\infty\lbrack}\times{\lbrack0,\infty\lbrack}$ provided that $\vv>0$ is sufficiently large. Physically more important is the situation of a fixed $\vv>0$. Then strict positivity holds for any $\bb\geq 0$ provided that $\beta>0$ is sufficiently large and, conversely, for any $\beta>1/\vv$ provided that $\bb>0$ is sufficiently small.
	Nevertheless, the lower bound in \eqref{diff_bound} is not sharp enough to characterize the (maximum) region with $\Delta_{\infty}>0$ in the $(1/\beta,\bb)$-plane for $\bb>0$,\footnote{The region characterized by $\Delta_{\infty}>0$ is larger than the region implied by \eqref{diff_bound}. For example, for $1/(\beta \vv)<1/\sqrt{4\ln(2)}= 0.60056\dots$ there is a region, where the ``annealed entropy'' $\beta^{2}\partial f^{\ann}_{\infty}/\partial\beta$ is negative as follows from Theorem\,\ref{f_ann_exist}\,\ref{f_ann_inf_behavior2}, the lower estimate in \eqref{f_ann_inf_bounds}, and \eqref{f_ann_inf_bound2}. This region does not completely belong to the region implied by \eqref{diff_bound}.} but it is so for $\bb=0$. The latter can be seen by combining the main result in \cite{ALR1987} (or Theorem\,\ref{difference}) with \eqref{Guerra} for $\bb=0$ and the equivalence in Remark\,\ref{rem_quasiclassical}\,\ref{rem_k}.
	These facts are illustrated in Figure\,\ref{regionplot}, where we also have included the result of Theorem\,\ref{difference} and a cartoon of the \emph{critical line}, that is, the border line between the spin-glass and the paramagnetic phase as obtained by approximate arguments and/or numerical methods, for example in \cite{FS1986,YI1987,US1987,GL1990,T2007,Y2017,MRC2018}.\label{phases}
\begin{figure}[H]
\begin{center}
	\psset{xunit=7cm,yunit=1cm}
	\begin{pspicture}(-0.4cm,-0.5cm)(10.6cm,3.3cm)
	\pscustom[fillstyle=solid,fillcolor=gray!25,linestyle=none]{
		\psline(0,0)(0,2.64983574029153)
		\psline(0,2.64983574029153)(0.075,2.64983574029153)
		\fileplot{region_data.csv}
		\psline(1.0,0.0)(0,0)}
	\pscustom[fillstyle=solid,fillcolor=gray!5,linestyle=none]{
		\fileplot{region_data.csv}
		\psline(1.0,0.0)(1.0,2.64983574029153)
		\psline(1.0,2.64983574029153)(0.075,2.64983574029153)
		}
	\pscustom[fillstyle=solid,fillcolor=gray!70,linestyle=none]{
		\psline(1.0,0.0)(1.0,2.64983574029153)(1.3,2.64983574029153)(1.3,0)(1.0,0)
		}
	\fileplot[plotstyle=line,linecolor=blue,linewidth=1pt]{region_data.csv}
	\psline[linecolor=blue,linestyle=solid,linewidth=1pt](1,0)(1,2.64983574029153)
	\psaxes[Dx=0.2]{->}(1.4,3)
	\rput(1.465,0){$1/(\beta \vv) $}
	\rput(0,3.2){$\bb/\vv$}
	\psset{algebraic,plotpoints=500}
	\psplot[linestyle=dashed,linecolor=red]{0}{1}{1.51*sqrt(1-x^2)}
	\rput(0.10,0.45){ $\Delta_{\infty}>0$}
	\rput(0.55,2){\large ?}
	\rput(1.15,1.3){$\Delta_{\infty}=0$}
	\textcolor{red}{
		\rput(0.178,1){spin glass}
		\rput(0.965,2){paramagnet}
	}
	\end{pspicture}
\caption{
	In the temperature-field quarter-plane there is one region where the difference $\Delta_{\infty}=\lim_{N\to\infty}\big(\EE[f_{N}]-f^{\ann}_{N}\big)\geq0$ is strictly positive (light gray) according to \eqref{diff_bound} and another one where it is zero (heavy gray) according to Theorem\,\ref{difference}. 
	The (red) dashed line is a cartoon of the critical line between the spin-glass and the paramagnetic phase as obtained by approximate arguments and/or numerical methods, see Remark\,\ref{remthm2}\,\ref{phases}.
	The region with $\Delta_{\infty}>0$ is larger than the light gray region, but we do not know how large. It should at least contain the critical line. 
}
\label{regionplot}
\end{center}
\end{figure}
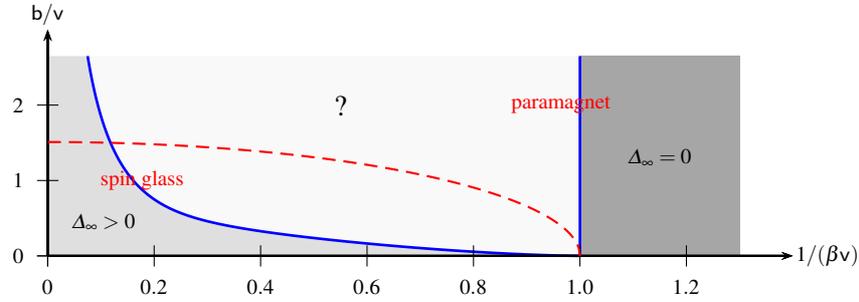
	 \item For any $\bb\geq0$ the lower bound in \eqref{diff_bound} is good enough to coincide with the upper one for asymptotically large $\lambda$ in the sense that
	$\lim_{\lambda\to\infty}\beta\Delta_{\infty}/\lambda=1$. It follows from \eqref{g_limits} and $\lim_{\lambda\to\infty}k(\lambda)/\lambda=1$ according to the simple bounds in Remark\,\ref{rem_quasiclassical}\,\ref{rem_k}. Combining this observation with \eqref{strong_disorder} yields $\lim_{\lambda\to\infty}\lim_{N\to\infty}\EE[\beta f_{N}]/\lambda=0$, reflecting the finiteness of the (specific) macroscopic quenched ground-state energy which, in its turn, directly follows from \eqref{Guerra} and Remark\,\ref{rem_quasiclassical}\,\ref{rem_k}. Finally we note that the lower bound in \eqref{diff_bound} also implies $\lim_{\beta\to\infty}\Delta_{\infty}=\infty$ in agreement with \eqref{f_ann_inf_bound2}.
	\end{enumerate}
\end{remark}

\section{A dual pair of variational formulas for the macroscopic annealed free energy}

In the last section we have seen that the macroscopic annealed free energy $f^{\ann}_{\infty}$ exists and obeys explicitly given lower and upper bounds which become sharp in the limits of weak and strong disorder, $\lambda\downarrow 0$ and $\lambda\to\infty$, respectively. Furthermore, the bounds in \eqref{f_ann_inf_bounds} coincide asymptotically also in the limits of low and high field, that is, $\bb\downarrow 0$ and $\bb\rightarrow\infty$. However, so far we have no formula which makes the $\lambda$-dependence of $\beta f^{\ann}_{\infty}$ more ``transparent'' for general $\lambda>0$ and $\bb>0$. This will be achieved, to some extent, in the present section. More precisely, we will show that $\beta f^{\ann}_{\infty}$ may be viewed as the global minimum of a non-linear functional with a simple $\lambda$-dependence and defined on the {\scshape Hilbert} space of real-valued functions being square-integrable over the unit square. This follows from an asymptotic evaluation of the right-hand side of \eqref{F_N2} as $N\to\infty$ by large-deviation techniques due to {\scshape Varadhan} \cite{V1966} and others, see \cite{DS1989,DZ2010} and also \cite{K2002}.

For the formulation of the corresponding theorem we need some preparations. We begin by introducing some further notation. We consider the (separable) real {\scshape Hilbert} space $\L^{2}\ceq\L^{2}\big({\lbrack0,1\rbrack}\times{\lbrack0,1\rbrack}\big)\cong\L^{2}\big({\lbrack0,1\rbrack}\big)\otimes\L^{2}\big({\lbrack0,1\rbrack}\big)$ of all {\scshape Lebesgue} square-integrable functions $\psi:{\lbrack0,1\rbrack}\times{\lbrack0,1\rbrack}\to\R$, $(t,t')\mapsto\psi(t,t')$ with scalar product $\langle\psi,\varphi\rangle\ceq \int_{0}^{1}\dl{t}\int_{0}^{1}\dl{t'}\psi(t,t')\varphi(t,t')$ and norm $\|\psi\|\ceq\langle\psi,\psi\rangle^{1/2}$ for $\psi,\varphi\in\L^{2}$. If $\psi\in\L^{2}$, then obviously its absolute value $|\psi|$ also belongs to $\L^{2}$, where $|\psi|(t,t')\ceq |\psi(t,t')|$ (for almost all $(t,t')\in{\lbrack0,1\rbrack}\times{\lbrack0,1\rbrack}$ with respect to the two-dimensional {\scshape Lebesgue} measure). The tensor product ${\sigma_{i}\otimes\sigma_{j}}$ of two random functions \eqref{teleproc} is defined pointwise by ${(\sigma_{i}\otimes\sigma_{j})(t,t')}\ceq\sigma_{i}(t)\sigma_{j}(t')$ so that ${\sigma_{i}\otimes\sigma_{j}}\in\L^{2}$, obviously.
In particular, we consider the sequence ${(\sigma_{i}\otimes\sigma_{i})_{i\geq 1}}$ of independent and identically distributed $\L^{2}$-valued random variables and its empirical (or sample) averages $\xi_{N}\ceq\sum_{i=1}^{N}{\sigma_{i}\otimes\sigma_{i}}/N$ with mean $\langle\xi_{N}\rangle_{\beta\bb}=\mu$ and variance $\langle\xi_{N}^{2}\rangle_{\beta\bb}-\mu^{2}=(1-\mu^{2})/N $ for all $N\in\N$. Finally, we introduce the non-linear functional $\Lambda:\L^{2}\to\R$, $\psi\mapsto\Lambda(\psi)$, generating the cumulants (or logarithmic moments) of $ \xi_{1}= {\sigma_{1}\otimes\sigma_{1}}$, by
\begin{equation}\label{cumulant}
	\Lambda(\psi)\ceq\ln\Big(\big\langle\exp\big(\langle\psi,\xi_{1}\rangle\big)\big\rangle_{\beta\bb}\Big)
\end{equation}
and its {\scshape Legendre}--{\scshape Fenchel} transform $\Lambda^{\!*}:\L^{2}\to\R\cup\{\infty\}$, $\varphi\mapsto\Lambda^{\!*}(\varphi)$ by
\begin{equation}\label{transform}
\Lambda^{\!*}(\varphi)\ceq\sup\limits_{\psi\in\L^{2}}\big(\langle\psi,\varphi\rangle-\Lambda(\psi)\big)
\end{equation}
with its effective domain $\D^{*}\ceq\{\varphi\in\L^{2}:\Lambda^{*}(\varphi)<\infty\}$. We stress that $\Lambda$, $\Lambda^{*}$, and $\D^{*}$ depend on $\beta\bb>0$, equivalently on $p$. But we suppress this dependence to simplify the notation, as we have done with $\mu$, $m$, and $p$ introduced in \eqref{mu}, \eqref{m}, and \eqref{p}.

\begin{lemma}[Some properties of the functionals $\Lambda$ and $\Lambda^{*}$] \label{some_properties}

\noindent
For any $\psi,\varphi\in\L^{2}$, with $\mu\in\L^{2}$ as defined in  \eqref{mu}, and with $1\in\L^{2}$ denoting the (constant) unit function we have
\begin{enumerate}
	\item for $\Lambda$ the inequalites
	\begin{equation}\label{cumulant_bounds}
		-\infty<\langle\psi,\mu\rangle\leq\Lambda(\psi)\leq\Lambda(|\psi|)\leq\langle|\psi|,1\rangle\leq\|\psi\|<\infty\,,
	\end{equation}
	\begin{equation}\label{cumulant_bound2}
		\langle\psi,1\rangle-\ln\big(\cosh(\beta\bb)\big)\leq\Lambda(\psi)\,,
	\end{equation}
	\item for $\Lambda^{*}$ the equality and inequalities
	\begin{equation}\label{transform_bounds}
		0=\Lambda^{*}(\mu)\leq\Lambda^{*}(\varphi)\,,\qquad\Lambda^{*}(1)\leq \ln\big(\cosh(\beta\bb)\big)\,.
	\end{equation}
\end{enumerate}
Moreover, we have the properties:
\begin{enumerate}
\setcounter{enumi}{2}
	\item \label{Lambda_convexity}The functional $\Lambda$ is convex which is reflected by the inequalities
	\begin{equation}\label{jensen3}
        \langle\Lambda'(\psi),\varphi-\psi\rangle\leq \Lambda(\varphi)-\Lambda(\psi)\leq\langle\Lambda'(\varphi),\varphi-\psi\rangle\,.
	\end{equation}

	Here, the non-linear mapping $\Lambda': \L^{2}\to\L^{2}, \psi\mapsto\Lambda'(\psi)$, is defined by the $\L^{2}$-function 
	\begin{equation}\label{Lambda'}
	\Lambda'(\psi)\ceq\e^{-\Lambda(\psi)}\big\langle\e^{\langle\psi,\xi_{1}\rangle}\xi_{1}\big\rangle_{\beta\bb}\eqc\big\langle\xi_{1}\big\rangle_{\beta\bb,{\psi}}\,\,,\qquad  |\Lambda'(\psi)|\leq 1\,.
	\end{equation}
	This function is in its argument $(t,t')$ continuous and exchange symmetric ($t\leftrightarrow t'$). Moreover, for $\psi\geq 0$ (pointwise) it satisfies the equality and inequalities
	\begin{equation}\label{Lambda'_bound}
			\mu=\Lambda'(0)\leq\Lambda'(\psi)\qquad (\psi\geq0)\,,
	\end{equation}
	\begin{equation}\label{Lambda'_bounds}
	0<p\leq\langle\Lambda'(\psi),\mu\rangle\leq m \leq  \langle\Lambda'(\psi),1\rangle\leq 1\qquad (\psi\geq0)\,.
	\end{equation}
	 \item \label{gateaux_diff} The functional $\Lambda$ is {\scshape Lipschitz} continuous with constant $1$,
	\begin{equation} \label{Lambda_continuity}
		|\Lambda(\psi)-\Lambda(\varphi)|\leq \langle|\psi-\varphi|,1\rangle\leq\|\psi-\varphi\|\,,
	\end{equation}
	 and also weakly sequentially continuous, that is, sequentially continuous with respect to the weak topology on $\L^{2}$. Furthermore, the functional $\Lambda$ has at any $\psi\in \L^{2}$ the linear and continuous {\scshape G\^{a}teaux} differential $\L^{2}\to\R, \varphi\mapsto\langle\Lambda'(\psi),\varphi\rangle$ because
	\begin{equation}\label{gateaux}
         \frac{\dl{}}{\dl{a}} \Lambda(\psi+a\varphi)\big|_{a=0}=\langle\Lambda'(\psi),\varphi\rangle\qquad (a\in\R)\,.
	\end{equation}
	\item The mapping $\Lambda'$ is {\scshape Lipschitz} continuous with constant $1$,
	\begin{equation}\label{Lambda'_continuity}
		\|\Lambda'(\psi)-\Lambda'(\varphi)\|\leq \|\psi-\varphi\|\,.
	\end{equation}
	\item The finiteness $\Lambda^{*}(\varphi)<\infty$, equivalently $\varphi\in\D^{*}$, implies $|\varphi|\leq1$, $0<\langle1,\varphi\rangle$, and $0\leq \langle\rho\otimes\rho,\varphi\rangle$ for all $\rho\in\L^{2}\big({\lbrack0,1\rbrack}\big)$. In particular, $\Lambda^{*}(0)=\infty$.
	\label{phibound}

	\item The functional $\Lambda^{*}$ is convex and so is its non-empty set $\D^{*}$. Furthermore, $\Lambda^{*}$ is weakly lower semi-continuous and its lower-level sets $\D^{*}_{r}\ceq\{\varphi\in\L^{2}:\Lambda^{*}(\varphi)\leq r\}$ for $r\in{\lbrack0,\infty\lbrack}$\,, are non-empty, convex, weakly sequentially compact, and weakly compact.
	\label{Lambda*_prop}
\end{enumerate} 

\end{lemma}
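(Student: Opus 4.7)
The plan is to treat the seven items in the order they are stated, relying throughout on a small kit of tools: Jensen's and H\"older's inequalities applied to the conditional Poisson expectation $\langle\cdot\rangle_{\beta\bb}$, the pointwise identity $|\xi_{1}|\equiv 1$, differentiation under the integral sign (made routine by uniform boundedness of the relevant exponents), and, for the last two items, the standard Legendre--Fenchel machinery on the real Hilbert space $\L^{2}$.

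For (i)--(ii), the lower bound $\langle\psi,\mu\rangle\leq\Lambda(\psi)$ is Jensen's inequality applied to the inner expectation in \eqref{cumulant} together with $\langle\xi_{1}\rangle_{\beta\bb}=\mu$; the outermost bound $\langle|\psi|,1\rangle\leq\|\psi\|$ is Cauchy--Schwarz on $\L^{2}$ (since $\|1\|=1$); and $\Lambda(|\psi|)\leq\langle|\psi|,1\rangle$ is immediate from $|\xi_{1}|\equiv 1$. The subtler middle inequality $\Lambda(\psi)\leq\Lambda(|\psi|)$ I would obtain by expanding $\langle\e^{\langle\psi,\xi_{1}\rangle}\rangle_{\beta\bb}$ as a power series in $\psi$: each coefficient is a multi-time correlation $\langle\sigma_{1}(s_{1})\cdots\sigma_{1}(s_{2n})\rangle_{\beta\bb}$ of the conditioned spin-flip process, and a direct computation using independent Poisson increments of ${\cal N}_{1}$ (a Griffiths-I type positivity) shows this correlation is non-negative, so replacing $\psi$ by $|\psi|$ can only enlarge every term. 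Inequality \eqref{cumulant_bound2} is obtained by bounding the conditional Poisson expectation in \eqref{cumulant} from below by the contribution of the single no-jump trajectory $\sigma_{1}\equiv 1$, whose conditional probability equals $1/\cosh(\beta\bb)$. Item (ii) is then read off: $\Lambda^{*}(\mu)=0$ from the tightness of (i) at $\psi=0$, $\Lambda^{*}(\varphi)\geq 0$ by choosing $\psi=0$ in \eqref{transform}, and $\Lambda^{*}(1)\leq\ln\cosh(\beta\bb)$ is the Legendre dual of \eqref{cumulant_bound2}.

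Items (iii)--(v) run in parallel. Convexity of $\Lambda$ is H\"older's inequality inside the logarithm, and \eqref{jensen3} is the tangent/slope sandwich once the G\^ateaux derivative \eqref{Lambda'} is identified by differentiating under the expectation sign; the pointwise bound $|\Lambda'(\psi)|\leq 1$ is just $|\xi_{1}|\equiv 1$. The monotonicity $\Lambda'(\psi)\geq\mu$ for $\psi\geq 0$ is a Griffiths-II type statement and I would derive it, again by the same power-series-in-$\psi$ argument, from non-negativity of the covariances $\langle\xi_{1}(t,t')\prod_{k}\xi_{1}(t_{k},t_{k}')\rangle_{\beta\bb}-\mu(t,t')\langle\prod_{k}\xi_{1}(t_{k},t_{k}')\rangle_{\beta\bb}\geq 0$. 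The inequalities in \eqref{Lambda'_bounds} then follow from $\mu\leq\Lambda'(\psi)\leq 1$ by pairing with $\mu$ or with $1$ and recognising $\|\mu\|^{2}=p$ and $\langle\mu,1\rangle=m$. For (iv), Lipschitz continuity of $\Lambda$ with constant $1$ follows from \eqref{jensen3} combined with $|\Lambda'|\leq 1$ and Cauchy--Schwarz; weak sequential continuity is dominated convergence, since weakly convergent sequences are norm-bounded and hence $\e^{\langle\psi_{n},\xi_{1}\rangle}$ is uniformly bounded for each realisation of $\xi_{1}$; and \eqref{gateaux} is again differentiation under the integral. For (v), the second G\^ateaux derivative of $\Lambda$ at $\psi$ is the covariance operator on $\L^{2}$ with kernel $\langle\xi_{1}(t,t')\xi_{1}(s,s')\rangle_{\beta\bb,\psi}-\Lambda'(\psi)(t,t')\,\Lambda'(\psi)(s,s')$; Cauchy--Schwarz together with $|\xi_{1}|\equiv 1$ bounds its $\L^{2}\to\L^{2}$ operator norm by $1$, and integrating along the segment from $\varphi$ to $\psi$ delivers \eqref{Lambda'_continuity}.

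For (vi) and (vii), the pointwise constraints on $\varphi\in\D^{*}$ come from inserting well-chosen test functions into the supremum \eqref{transform}. Taking $\psi=c\,\T{sign}(\varphi)\,\um_{\{|\varphi|>1\}}$ and letting $c\to\infty$ would make $\Lambda^{*}(\varphi)$ infinite unless $|\varphi|\leq 1$ a.e., and taking $\psi=-c\,\rho\otimes\rho$ uses $\langle\rho,\sigma_{1}\rangle^{2}\geq 0$ (whence $\Lambda(-c\rho\otimes\rho)\leq 0$) to give $\Lambda^{*}(\varphi)\geq -c\langle\rho\otimes\rho,\varphi\rangle$ for every $c>0$, forcing $\langle\rho\otimes\rho,\varphi\rangle\geq 0$. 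The strict positivity $\langle 1,\varphi\rangle>0$ requires a little more: with $\psi=-c$ one has $\Lambda(-c)=\ln\langle\e^{-cX}\rangle_{\beta\bb}$ where $X=\big(\int_{0}^{1}\sigma_{1}(t)\,\T{d}t\big)^{2}$, and under the conditioned Poisson measure the sojourn time of $\sigma_{1}$ in state $+1$ has an absolutely continuous distribution, so $X>0$ almost surely and dominated convergence gives $\Lambda(-c)\to-\infty$; if $\langle 1,\varphi\rangle$ were $\leq 0$, this would drive $\Lambda^{*}(\varphi)$ to $+\infty$. Finally, (vii) is classical convex analysis: $\Lambda^{*}$ is the pointwise supremum of the weakly continuous affine functionals $\varphi\mapsto\langle\psi,\varphi\rangle-\Lambda(\psi)$, hence convex and weakly lower semi-continuous; each $\D^{*}_{r}$ contains $\mu$, is convex, and by (vi) is bounded in $\L^{2}$, so it is a bounded, convex, weakly closed subset of a Hilbert space and therefore weakly compact by Banach--Alaoglu and weakly sequentially compact by Eberlein--\v{S}mulian. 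The main obstacle I anticipate is the Griffiths-type positivity of the multi-time spin-flip correlations, which underlies both the middle inequality in (i) and the monotonicity in (iii); the cleanest route is to write $\sigma_{1}(s_{k})=(-1)^{{\cal N}_{1}(s_{k})}$, decompose the exponent into contributions from disjoint time intervals on which the increments of ${\cal N}_{1}$ are independent Poisson variables, and then account for the conditioning on $\sigma_{1}(1)=1$.
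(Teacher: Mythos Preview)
Your proposal is correct and follows essentially the same approach as the paper's proof. In particular, you correctly identify the one non-routine ingredient---the Griffiths-type positivity of the conditioned multi-time correlations $\langle\sigma_1(t_1)\cdots\sigma_1(t_{2n})\rangle_{\beta\bb}$---which the paper isolates as a separate inequality (their \eqref{multipoint_inequality}, proved in Appendix~A via the explicit Poisson calculus) and then uses exactly as you do, namely termwise in the exponential series, both for $\Lambda(\psi)\leq\Lambda(|\psi|)$ and for the pointwise monotonicity $\Lambda'(\psi)\geq\mu$ when $\psi\geq0$.

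Two minor remarks. First, the statement of (c) also asserts that $(t,t')\mapsto\Lambda'(\psi)(t,t')$ is \emph{continuous}; you do not address this. The paper handles it by a Cauchy--Schwarz estimate combined with the same correlation inequality (for $n=2$), so it costs nothing extra once you have the Griffiths positivity in hand. Second, for the strict positivity $\langle 1,\varphi\rangle>0$ in (f), your argument via $X=(\int_0^1\sigma_1)^2>0$ almost surely is cleaner than the paper's, which invokes the strictly positive variance of $X$; strictly speaking, positive variance alone does not force $\Lambda(-a1)\to-\infty$ (one needs $\mathbb{P}(X=0)=0$, which is indeed true since the jump times are diffuse), so your formulation is the more precise one.
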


\begin{proof}
\begin{enumerate}
	\item The first and last inequality in \eqref{cumulant_bounds} are obvious. The second one is the {\scshape Jensen} inequality combined with \eqref{mu}, the fourth one follows from $\langle|\psi|,\xi_{1}\rangle\leq \langle|\psi|,1\rangle$, and the fifth one is the {\scshape Schwarz} inequality. The third inequality follows from the {\scshape Taylor} series of the exponential in \eqref{cumulant} by estimating termwise according to $\big\langle\langle\psi,\xi_{1}\rangle^{n}\big\rangle_{\beta\bb}\leq\big\langle\langle|\psi|,\xi_{1}\rangle^{n}\big\rangle_{\beta\bb}$ for $n\in\N$. This estimate is due to $\psi\leq|\psi|$ and the positivity implied by the inequalities
	\begin{equation}\label{multipoint_inequality}
		\Big\langle\prod_{k\in I}\sigma_{1}(t_{k})\Big\rangle_{\beta\bb}\geq\mu(t_{i},t_{j})\,\,\Big\langle\prod\limits_{k\in I\setminus\{i,j\}}\sigma_{1}(t_{k})\Big\rangle_{\beta\bb}\,\geq0\,.
	\end{equation}
	They are proved within a more general setting in Appendix\,\ref{poisson_appendix}, see \eqref{cond_positivity}. Here, $t_{1},\dots,t_{2n}$ denote $2n$ arbitrary points of the time interval ${\lbrack0,1\rbrack}$ and $i,j$ denote two arbitrary elements of the index set $I\ceq\{1,\dots,2n\}$. 
	For the proof of \eqref{cumulant_bound2} we restrict the {\scshape Poisson} expectation in definition \eqref{cumulant} to the single realization without any spin flip in ${\lbrack0,1\rbrack}$ by inserting $1(\sigma_{1})$, confer the proof of \eqref{f_ann_N_bound2}. By this we get $\Lambda(\psi)\geq\langle\psi,1\rangle+\ln\big(\langle 1(\sigma_{1})\rangle_{\beta\bb}\big)=\langle\psi,1\rangle-\ln\big(\cosh(\beta\bb)\big)$.
	\item By definitions \eqref{transform} and \eqref{cumulant} we have $\Lambda^{*}(\varphi)\geq \langle0,\varphi\rangle-\Lambda(0)=0$ for all $\varphi$, in particular,  $\Lambda^{*}(\mu)\geq 0$. On the other hand, \eqref{transform} also gives $\Lambda^{*}(\mu)\leq0 $ by $\Lambda(\psi)\geq\langle\psi,\mu\rangle$ from \eqref{cumulant_bounds}. The second inequality in \eqref{transform_bounds} follows from using \eqref{cumulant_bound2} in \eqref{transform}.
	\item The functional $\Lambda$ is convex by the {\scshape H\"older} inequality.
	The first inequality in \eqref{jensen3} follows from the {\scshape Jensen} inequality with respect to the expectation $\langle\,(\cdot)\,\rangle_{\beta\bb, \psi}$  and the {\scshape Fubini--Tonelli} theorem. The second inequality then  follows from interchanging $\psi$ and $\varphi$.
	The exchange symmetry ($t\leftrightarrow t'$) of the function $(t,t')\mapsto\big(\Lambda'(\psi)\big)(t,t')=\langle\xi_{1}(t,t')\rangle_{\beta\bb,\psi}=\big\langle\sigma_{1}(t)\sigma_{1}(t')\big\rangle_{\beta\bb,\psi}$ is obvious.
	The proof of the continuity of this function is postponed until the proof of \ref{proof_Lipschitz}. 
	The estimate in \eqref{Lambda'} is due to the triangle inequality  $|\langle\,(\cdot)\,\rangle_{\beta\bb,\psi}|\leq \langle|\,(\cdot)\,|\rangle_{\beta\bb,\psi}$ combined with $|\xi_{1}|=1$. It ensures that $\Lambda'(\psi)\in\L^{2}$.
	The (pointwise) equality in \eqref{Lambda'_bound} is obvious. The inequality there follows from the {\scshape Taylor} series of the exponential under the expectation in \eqref{Lambda'} and using \eqref{multipoint_inequality} with $n+1$ instead of $n$.
	The inequalities \eqref{Lambda'_bounds} are immediate consequences of the estimates $0<\mu\leq\Lambda'(\psi)\leq 1$, see \eqref{Lambda'} and \eqref{Lambda'_bound}, and the definitions \eqref{m} and \eqref{p}. 
	\item \label{proof_Lipschitz} Inequality \eqref{Lambda_continuity} follows from \eqref{jensen3} and the inequality in \eqref{Lambda'}.
	Moreover, for any sequence $(\varphi_{n})_{n\geq 1}\subset\L^{2}$ weakly converging to some $\psi\in\L^{2}$ we have $a\ceq\sup_{n\geq1}\|\varphi_{n}\|<\infty$ as a consequence of the {\scshape Banach}--{\scshape Steinhaus} theorem, see \cite[Lem.\,2.46 and Lem.\,2.22]{BC2017}. Therefore we get $\langle\psi, \xi_{1}\rangle=\lim_{n\to\infty}\langle\varphi_{n},\xi_{1}\rangle$ and $\langle\varphi_{n},\xi_{1}\rangle\leq\|\varphi_{n}\|\leq a$ for all realizations of the underlying {\scshape Poisson} process ${\cal N}_{1}$. The claimed weak sequential continuity $\Lambda(\psi)=\lim_{n\to\infty}\Lambda(\varphi_{n})$ now follows from the {\scshape Lebesgue} dominated-convergence theorem with respect to the {\scshape Poisson} expectation.
	For the proof of the (global) {\scshape G\^{a}teaux} differentiability we replace $\varphi$ in \eqref{jensen3}  by $\psi+a\varphi$ with	$a\in{\rbrack0,\infty\lbrack}$ and get $\langle\Lambda'(\psi),\varphi\rangle\leq\big(\Lambda(\psi+a\varphi)-\Lambda(\psi)\big)/a\leq\langle\Lambda'(\psi+a\varphi),\varphi\rangle$. The proof of \eqref{gateaux} is now completed by observing that $\lim_{a\downarrow 0}\langle\Lambda'(\psi+a\varphi),\eta\rangle=\langle\Lambda'(\psi),\eta\rangle$ for all $\varphi,\eta\in\L^{2}$. The latter follows from \eqref{Lambda'}, \eqref{Lambda_continuity}, and by dominated convergence with respect to the {\scshape Poisson} expectation and to the integration underlying the scalar product of $\L^{2}$. For the postponed proof of the continuity of $(t,t')\mapsto\eta(t,t')\ceq\big\langle\xi_{1}(t,t')\exp\big(\langle\psi,\xi_{1}\rangle\big)\big\rangle_{\beta\bb}$ for given $\psi\in\L^{2}$ we apply the {\scshape Schwarz} inequality to the {\scshape Poisson} expectation and obtain for the time being
	\begin{equation}\label{continuity_t_t'}
		\big|\eta(t+u,t'+u')-\eta(t,t')\big|^{2}\leq2\,\e^{\Lambda(2\psi)}\big(1-\big\langle\xi_{1}(t+u,t'+u')\xi_{1}(t,t')\big\rangle_{\beta\bb}\big)\,.
	\end{equation}
	The last {\scshape Poisson} expectation with $t,t'\in{\lbrack0,1\rbrack}$ and $t+u,t'+u'\in{\lbrack0,1\rbrack}$ is bounded from below by $\mu(t+u,t)\mu(t'+u',t')$ according to \eqref{multipoint_inequality} with $n=2$.
	 Thus, the left-hand side of \eqref{continuity_t_t'} tends to zero for all $t,t'\in{\lbrack0,1\rbrack}$ as $(u,u')$ tends to $(0,0)$. This implies the continuity of the  $\L^{2}$-function   $\Lambda'(\psi)=\eta\exp\big(-\Lambda(\psi)\big)$ on the unit square.
	\item The claimed inequality \eqref{Lambda'_continuity} is equivalent to 
	\begin{equation}\label{Lambda'_continuity2}
		\big\langle\Lambda'(\psi)-\Lambda'(\varphi),\eta\big\rangle\leq \|\psi-\varphi\|\|\eta\|\qquad(\psi,\varphi,\eta\in\L^{2})\,.
	\end{equation}
	In fact, \eqref{Lambda'_continuity} follows from \eqref{Lambda'_continuity2} for $\eta=\Lambda'(\psi)-\Lambda'(\varphi)$. Conversely, \eqref{Lambda'_continuity2} follows from \eqref{Lambda'_continuity} by the {\scshape Schwarz} inequality. For the proof of \eqref{Lambda'_continuity2} we write
	\begin{equation}
	\big\langle\Lambda'(\psi)-\Lambda'(\varphi),\eta\big\rangle=\int_{0}^{1}\dl{a}\,\,\frac{\dl{}}{\dl{a}}\big\langle\Lambda'\big(\varphi+a(\psi-\varphi)\big),\eta\big\rangle=\int_{0}^{1}\dl{a}\,\,\frac{\dl{}}{\dl{a}}\big\langle\langle\xi_{1},\eta\rangle\big\rangle_{a}
	\end{equation}
	using the $a$-expectation $\langle\,(\cdot)\,\rangle_{\!a}\ceq\langle\,(\cdot)\,\rangle_{\beta\bb,\varphi+a(\psi-\varphi)}$, see \eqref{Lambda'}. The integrand turns out to be the $a$-covariance of  the centered random variables $A\ceq\langle\xi_{1},\psi-\varphi\rangle-\big\langle\langle\xi_{1},\psi-\varphi\rangle\big\rangle_{\!a}$  and  $B\ceq\langle\xi_{1},\eta\rangle-\big\langle\langle\xi_{1},\eta\rangle\big\rangle_{\!a}$ and has an $a$-independent upper bound according to
	\begin{equation}
	\frac{\dl{}}{\dl{a}}\big\langle\langle\xi_{1},\eta\rangle\big\rangle_{a}=\langle AB\rangle_{\!a}\leq\big(\langle A^{2}\rangle_{a}\langle B^{2}\rangle_{a}\big)^{1/2}\leq \|\psi-\varphi\|\|\eta\|\,.
	\end{equation}
	Here, the first estimate is the {\scshape Schwarz} inequality with respect to $\langle\,(\cdot)\,\rangle_{a}$. For the second estimate we use $\langle B^{2}\rangle_{a}=\big\langle\big(\langle\xi_{1},\eta\rangle\big)^{2}\big\rangle_{a}-\big(\big\langle\langle\xi_{1},\eta\rangle\big\rangle_{a}\big)^{2}\leq \|\eta\|^{2}$ which follows from the positivity of squared real numbers, the {\scshape Schwarz} inequality for the scalar product, and $\|\xi_{1}\|=1$. We also use $\langle A^{2}\rangle_{a}\leq\|\psi-\varphi\|^{2}$ which follows analogously.

	\item For the first claim we use $\Lambda(\psi)\leq\langle|\psi|,1\rangle$ from \eqref{cumulant_bounds} in \eqref{transform} to obtain $\Lambda^{*}(\varphi)\geq\langle\psi,\varphi\rangle-\langle|\psi|,1\rangle$. Now we pick an arbitrary $\epsilon>0$ and apply the last inequality to a function $\varphi\in\L^{2}$ satisfying the lower estimate $|\varphi|\geq 1+\epsilon$  on some {\scshape Borel}-measurable set $B\subseteq{\lbrack0,1\rbrack}\times{\lbrack0,1\rbrack}$ of strictly positive {\scshape Lebesgue} area $|B|\ceq\langle\chi_{B},1\rangle>0$, where $\chi_{B}$ denotes the indicator function of $B$. For such a $\varphi$ we choose $\psi= a \varphi \chi_{B}\in\L^{2}$ with $a\in{\rbrack0,\infty\lbrack}$ and get $\Lambda^{*}(\varphi)\geq a\langle|\varphi|(|\varphi|-1),\chi_{B}\rangle\geq a(1+\epsilon)\epsilon|B|>0$. Taking the supremum over $a>0$ gives $\Lambda^{*}(\varphi)=\infty$ which is equivalent to the first claim.
	For the second claim we restrict the supremum in \eqref{transform} to $\psi=-a1$ with $a>0$ and obtain $\Lambda^{*}(\varphi)\geq-\Lambda(-a1)$ if $\langle1,\varphi\rangle\leq0$. Since the ${\lbrack0,1\rbrack}$-valued random variable $\langle1,\xi_{1}\rangle=\big(\int_{0}^{1}\dl{t}\sigma_{1}(t)\big)^{2}$ has the strictly positive variance $2(1-2m+p)/(\beta\bb)^{2}\,\big[\!>2(1-m)^{2}/(\beta\bb)^{2}\big]$, the supremum over $a>0$ gives $\Lambda^{*}(\varphi)=\infty$ which is equivalent to the second claim.
	For the third claim we note that $\Lambda(-a\rho\otimes\rho)\leq 0$ by \eqref{cumulant} and $\langle \rho\otimes\rho,\xi_{1}\rangle\geq 0$.  By \eqref{transform} we therefore get $\Lambda^{*}(\varphi)\geq -a\langle \rho\otimes\rho,\varphi\rangle$ for all $\varphi\in\L^{2}$. Taking the supremum over $a>0$ gives $\Lambda^{*}(\varphi)=\infty$ if $\langle \rho\otimes\rho, \varphi\rangle <0$. This is equivalent to the third claim. 
	\item The functional $\Lambda^{*}$ has the claimed two properties because it is, by definition \eqref{transform}, the pointwise supremum of a family of affine and weakly continuous functionals, see \cite[Prop.\,13.13]{BC2017}. The sets $\D^{*}$ and  $\D^{*}_{r}$ are convex because $\Lambda^{*}$ is convex and they are not empty because $\mu\in\D^{*}_{0}\subseteq\D^{*}_{r}\subset\D^{*}$.
	Since $\D^{*}$ is contained in the closed unit-ball $\overline{\B}_{1}\ceq\{\varphi\in\L^{2}:\|\varphi\|\leq1\}$ by the previous claim \ref{phibound}, it is bounded.\footnote{Since the unit-ball $\overline{\B}_{1}$ is closed with respect to the $\L^{2}$-norm and also convex (by the convexity of that norm), it is also weakly closed, see \cite[Thm.\,3.34]{BC2017}, and weakly (sequentially) compact \cite[Cor.\,2.38]{BC2017}.} Consequently, every sequence in $\D^{*}_{r}$ is bounded and therefore has a sub-sequence $ (\varphi_{n})_{n\geq1} $weakly converging to some $\psi\in\overline{\B}_{1}$, see \cite[Lem.\,2.45]{BC2017}. Since $\Lambda^{*}(\psi)\leq\liminf_{n\to\infty}\Lambda^{*}(\varphi_{n})\leq r$ by the weak (sequential) lower semi-continuity of $\Lambda^{*}$, we actually have $\psi\in \D^{*}_{r}$. In words, $\D^{*}_{r}$ is weakly sequentially closed. To conclude, $\D^{*}_{r}$ is weakly sequentially compact and, hence, by the {\scshape \v{S}mulian--Eberlein} equivalence also weakly compact, see \cite[Cor.\,2.38]{BC2017}.\qed
\end{enumerate}
\end{proof}

\begin{remark}\label{remlem4}

\renewcommand{\labelenumi}{(\roman{enumi})}
\renewcommand{\theenumi}{(\roman{enumi})}

\begin{enumerate}
	\item The fourth inequality in \eqref{cumulant_bounds} may be sharpened in a $\beta\bb$-dependent way according to
	\begin{equation}\label{jensen5}
	       \Lambda(|\psi|)\leq\langle\ln(\cosh(|\psi|)+\mu \sinh(|\psi|)),1\rangle \leq\langle|\psi|,1\rangle\,.
	\end{equation}
	The first inequality follows from the convexity of $\Lambda$ and the {\scshape Jensen} inequality applied to the two-fold integration underlying the $\L^{2}$-scalar product. The second inequality is due to $\mu\leq 1$.
        \item\label{wlsc} Since $\Lambda$ is weakly sequentially continuous, it is in particular weakly sequentially lower semi-continuous  (w.\,s.\,l.\,s.\,c.). By its convexity, $\Lambda$ is therefore even weakly lower semi-continuous, see \cite[Thm.\,9.1]{BC2017}.   
	
	\item Since the {\scshape G\^{a}teaux} differential  $\varphi\mapsto\langle\Lambda'(\psi),\varphi\rangle$ in Lemma\,\ref{some_properties}\,\ref{gateaux_diff} is linear and continuous, $\Lambda'(\psi)$  is even the {\scshape Fr\'{e}chet} derivative (or gradient) of $\Lambda$ at $\psi$, see \cite[Fact\,2.62]{BC2017}.
	\item \label{monoconvex} The inequalities \eqref{jensen3} imply monotonicity of $\Lambda'$ in the sense that $\langle\Lambda'(\psi)-\Lambda'(\varphi),\psi-\varphi\rangle\geq0$. In fact, this monotonicity is even equivalent to the convexity of $\Lambda$, see \cite[Prop.\,17.7]{BC2017}.

	\item We also have a pointwise monotonicity of $\Lambda$ in the sense that $0\leq\Lambda(\psi)\leq\Lambda(\varphi)$ if $0\leq\psi\leq\varphi$. This is a consequence of $0\leq\langle\mu,\varphi-\psi\rangle\leq\langle\Lambda'(\psi),\varphi-\psi\rangle\leq\Lambda(\varphi)-\Lambda(\psi)$. Here, the first inequality is obvious by $0<\mu$, the second one follows from \eqref{Lambda'_bound}, and the third one is \eqref{jensen3}.
	\item We only have the inequalities $0\leq\langle\Lambda'(\psi),1\rangle\leq 1$ and $0\leq\langle\Lambda'(\psi),\mu\rangle\leq m$ instead of \eqref{Lambda'_bounds} for general, not necessarily positive $\psi\in\L^{2}$. The first positivity follows from $0\leq\big(\int_{0}^{1}\dl{t}\sigma_{1}(t)\big)^{2}=\langle\sigma_{1}\otimes\sigma_{1},1\rangle=\langle\xi_{1},1\rangle$.
	The proof of the second positivity contains an additional argument according to $0\leq \int\dl{\sigma_{2}}\big(\int_{0}^{1}\dl{t}\!\sigma_{1}(t)\sigma_{2}(t)\big)^{2}=\int\dl{\sigma_{2}}\langle\sigma_{1}\otimes\sigma_{1},\sigma_{2}\otimes\sigma_{2}\rangle=\big\langle\sigma_{1}\otimes\sigma_{1},\langle\sigma_{2}\otimes\sigma_{2}\rangle_{\beta\bb}\big\rangle=\langle\xi_{1},\mu\rangle$. Here, we are using $\int\dl{\sigma_{2}}(\,\cdot\,)\ceq\langle(\,\cdot\,)\rangle_{\beta\bb}$ as another notation for the (partial) expectation with respect to the {\scshape Poisson} process steering the spin-flip process $\sigma_{2}$. \label{Lambda'_scalar_bounds}
\end{enumerate}
\end{remark}
Eventually we are prepared to present the main result of this section.
\begin{theorem}[Dual variational formulas for the macroscopic annealed free energy]\label{f_ann_variational}
\begin{enumerate}
	\item\label{varadhan_gauss}For any $\lambda>0$ the limit $\beta f_{\infty}^{\ann}$ of the dimensionless annealed free energy satisfies the following two equivalent variational formulas:
	\begin{align}
	\beta f_{\infty}^{\ann}+\ln\big(2\cosh(\beta\bb)\big)&=\inf\limits_{\varphi\in\D^{*}}\big(\Lambda^{\!*}(\varphi)-\lambda\langle\varphi,\varphi\rangle\big)\label{varadhan}\\
		&=\inf\limits_{\psi\in\L^{2}}\Big(\frac{1}{4\lambda}\langle\psi,\psi\rangle-\Lambda(\psi)\Big)\label{gauss}\,.
	\end{align}
	\item The infimum in \eqref{varadhan} and the infimum in \eqref{gauss} are attained and each (global) minimizer $\varphi_{\lambda}\in\D^{*}\subset\L^{2}$ in \eqref{varadhan}  (at given $\beta\bb>0$) corresponds to a minimizer $\psi_{\lambda}\in\L^{2}$ in \eqref{gauss}, and vice versa, through the relation $\psi_{\lambda}=2\lambda\varphi_{\lambda}$. In \eqref{gauss}, and hence in \eqref{varadhan}, one may restrict the infimization to positive and exchange symmetric $\psi\in\L^{2}$ without losing generality.
\label{correspondence}
\item	Each minimizer $\psi_{\lambda}\in\L^{2}$ in \eqref{gauss} is a continuous and exchange symmetric function solving the {\scshape Euler--Lagrange} critical equation\label{critical}
\begin{equation}\label{critical_equation}
	\psi=2\lambda\Lambda'(\psi)
\end{equation}
and obeying the (pointwise) bounds $2\lambda\mu\leq\psi_{\lambda}\leq 2\lambda\,1$ (so that $2\lambda\sqrt{p}\leq\|\psi_{\lambda}\|\leq2\lambda$). Corresponding properties hold for the minimizer in \eqref{varadhan} by \ref{correspondence}.
\item	For any $\lambda<1/2$ there exists only one solution of \eqref{critical_equation} and hence, by \ref{critical}, only one minimizer in \eqref{gauss} and hence, by \ref{correspondence}, only one minimizer in \eqref{varadhan}.
\end{enumerate}
\end{theorem}

\begin{proof}
\begin{enumerate}
\item At first we note that the infimum in \eqref{varadhan} is lower bounded by $-\lambda$, as follows from $\Lambda^{*}(\varphi)\geq 0$ and $\langle\varphi,\varphi\rangle\leq 1$ for $\varphi\in\D^{*}$, see Lemma\,\ref{some_properties}\ref{phibound}. By \eqref{varadhan} and \eqref{inf_strong} this lower bound may be recognized as a weakened version of the one in \eqref{f_ann_inf_bounds}. From this lemma and Remark\,\ref{remlem4}\,\ref{wlsc} we also recall that $\Lambda$ is convex and weakly lower semi-continuous, two properties which are well-known to be shared by the $\L^{2}$-norm and its square (due to the {\scshape Schwarz} and the {\scshape Jensen} inequality). The equality \eqref{gauss} therefore easily follows from {\scshape Legendre--Fenchel} duality, see \cite[Thm.\,2.2]{T1979} or \cite[Cor.\,14.20]{BC2017}).
	The proof of \eqref{varadhan} requires more work. We begin by rewriting \eqref{F_N2} as
	\begin{equation}\label{F_N3}
		F_{N}=\ln\big(\big\langle\exp\big(N\lambda\langle\xi_{N},\xi_{N}\rangle\big)\big\rangle_{\!\beta\bb}\big)  \qquad(N\geq2)
	\end{equation}
	using \eqref{P_N}, \eqref{Q_N}, and  the empirical averages $\xi_{N}$ introduced above Lemma\,\ref{some_properties}.
	In view of \eqref{F_N1} and \eqref{f_ann_inf} we need to show that 
	\begin{equation}\label{varadhan2}
		\lim\limits_{N\to\infty}\frac{F_{N}}{N}=\sup\limits_{\varphi\in\D^{*}}\big(\lambda\langle\varphi,\varphi\rangle-\Lambda^{*}(\varphi)\big)\,.
	\end{equation}
	To this end, we observe that the sequence $(\xi_{N})_{N\geq 1}$ satisfies a large-deviation principle (LDP) with convex (good) rate functional $\Lambda^{*}$ with respect to the weak topology on $\L^{2}$. Equivalently, this LDP is satisfied by the sequence of {\scshape Borel}\footnote{The norm and weak topologies on $ \L^{2}$ induce the same {\scshape  Borel} sigma algebra of events \cite{E1977}.} probability measures  $(\DD_{N})_{N\geq 1}$  on $\L^{2}$, where $\DD_{N}$ is the distribution of $\xi_{N}$ characterized by its {\scshape Lap\-lace} functional $\psi\mapsto\int_{\L^{2}}\dD{\varphi}\,\exp\big(\langle\psi,\varphi\rangle\big)=\exp\big(N\Lambda(\psi/N)\big)$ on $\L^{2}$. This follows from Lemma\,\ref{some_properties} and \cite[Thm.\,3.3.11]{DS1989} or \cite[Sec.\,6.1]{DZ2010}.\footnote{This functional LDP is one natural extension from $\R^{d}$- to $\L^{2}$-valued random variables of the pioneering refinement of the weak law of large numbers due to {\scshape Cram\'er} (1938) and {\scshape H.~Chernoff} (1952).}
	That given, the {\scshape Varadhan} integral lemma \cite[Thm.\,4.3.1]{DZ2010} (see also \cite[Thm.\,27.10]{K2002}), as an extension of the classic asymptotic method of {\scshape Lap\-lace}, then yields $\lim_{N\to\infty}F_{N}/N=\sup_{\varphi\in\B_{1}}\big(\lambda\langle\varphi,\varphi\rangle-\Lambda^{*}(\varphi)\big)$. Here we have used the norm-continuity of the squared norm $\varphi\mapsto\langle \varphi,\varphi\rangle=\|\varphi\|^{2}$ and the fact that the measure $\DD_{N}$ is supported on the closed unit-ball $\overline{\B}_{1}\subset\L^{2}$ because $\|\xi_{N}\|\leq 1$. Since $\lambda\langle\varphi,\varphi\rangle-\Lambda^{*}(\varphi)=-\infty$ for all $\varphi\in\overline{\B}_{1}\setminus\D^{*}$, the desired equality \eqref{varadhan2} follows.
\item At first we show that the infimum $I\ceq\inf_{\psi\in\L^{2}}\Omega(\psi)>-\lambda$ in \eqref{gauss} is attained. Here, 
	\begin{equation}\label{Omega} 
		\Omega(\psi)\ceq \frac{1}{4\lambda}\|\psi\|^{2}-\Lambda(\psi)\qquad\big(\psi\in\L^{2},\,\,\,\lambda\in]0,\infty[\,\,\big)
	\end{equation}
	defines the underlying non-linear functional $\Omega:\L^{2}\to\R$. 
	We will use two properties of $\Omega$.
	By $\Lambda(\psi)\leq\|\psi\|$ from \eqref{cumulant_bounds} we have $\Omega(\psi)\geq\|\psi\|\big(\|\psi\|-4\lambda\big)/(4\lambda)$ so that $\Omega$ is (super-)coercive, that is, $\lim_{\|\psi\|\to\infty}\Omega(\psi)/\|\psi\|=\infty$.{}
	Furthermore, we claim that $\Omega$ is w.\,s.\,l.\,s.\,c. in the sense of Remark\,\ref{remlem4}\,\ref{wlsc}, because it is the sum of two functionals with this property. For the first functional, the squared norm, this is well-known. Namely, if  $(\varphi_{n})_{n\geq1}\subset\L^{2}$ converges weakly to $\psi\in\L^{2}$, then we get $\|\psi\|^{2}\leq\liminf_{n\to\infty}\|\varphi_{n}\|^{2}$ from the obvious inequality $\|\psi\|^{2}\leq \|\varphi_{n}\|^{2}+2\|\psi\|^{2}-2\langle\psi,\varphi_{n}\rangle$. The second functional, $-\Lambda$, is w.\,s.\,l.\,s.\,c., because $\Lambda$ is even weakly sequentially continuous by Lemma\,\ref{some_properties}\,\ref{gateaux_diff}.
	Now we choose an arbitrary sequence $(\eta_{j})_{j\geq1}\subset\L^{2}$ infimizing $\Omega$ in the sense that $\lim_{j\to\infty}\Omega(\eta_{j})=I$. Since the (converging) sequence $\big(\Omega(\eta_{j}) \big)_{j\geq1}\subset{[-\lambda,\infty[}$ is bounded, the coerciveness of $\Omega$ implies the same for the underlying sequence $(\eta_{j})_{j\geq1}$, see \cite[Prop.\,11.20]{BC2017}. Therefore this has at least one sub-sequence $(\varphi_{n})_{n\geq1}$, $\varphi_{n}\ceq\eta_{j(n)}$, weakly converging to some limit in $\L^{2}$, see \cite[Lem.\,2.45]{BC2017}. We name this (unknown) limit $\psi_{\lambda}$, use the fact that $\Omega$ is w.\,s.\,l.\,s.\,c., and conclude by a well-known beautiful argument going back to {\scshape Bolzano} and {\scshape Weierstrass}, confer for example \cite{BB1992}:
	\begin{equation}
		I\leq\Omega(\psi_{\lambda})\leq\liminf_{n\to\infty}\Omega(\varphi_{n})\leq\limsup_{j\to\infty}\Omega(\eta_{j})=\lim_{j\to\infty}\Omega(\eta_{j})=I\,.
	\end{equation}
	To summarize, each weak accumulation point of any infimizing sequence is a minimizer.
	The other claims about the minimizers follow from the simplicity of the squared norm $\|(\cdot)\|^{2}$ and from the finiteness, convexity, weak sequential continuity, and {\scshape G\^{a}teaux} differentiability of $\Lambda$ according to \cite[Sec.\,2.1]{T1979} or by extending an argument in the proof of \cite[Thm.\,A.1]{CET2005} from the {\scshape Euclid}ean\, $\R^{d}$ to the real {\scshape Hilbert}ian\, $\L^{2}$. One may restrict to positive $\psi$ in \eqref{gauss} because $\langle\psi,\psi\rangle=\langle|\psi|,|\psi|\rangle$ and $\Lambda(\psi)\leq\Lambda(|\psi|)$ by \eqref{cumulant_bounds}. One may restrict to exchange symmetric $\psi$ because $\langle\psi,\psi\rangle\geq\langle\psi_{+},\psi_{+}\rangle$ and $\Lambda(\psi)=\Lambda(\psi_{+})$. Here, $\psi_{+}(t,t')\ceq \big(\psi(t,t')+\psi(t',t)\big)/2$ defines the exchange symmetric part $\psi_{+}$ of $\psi$.
\item At first we assert the two inequalities
	\begin{equation}\label{Omega_bounds}
		\lambda\|\Omega'(\varphi)\|^{2}\leq\Omega(\varphi)-\Omega(\psi_{\lambda})\leq \frac{1}{4\lambda}\|\varphi-\psi_{\lambda}\|^{2}\,.
	\end{equation}
	Here $\varphi\in\L^{2}$ is arbitrary, $ \psi_{\lambda}$ is an arbitrary minimizer of $\Omega$, and $\Omega'(\varphi)=(2\lambda)^{-1}\varphi-\Lambda'(\varphi)$ is the {\scshape Fr\'{e}chet} gradient of $\Omega$ at $\varphi$.
	Both inequalities follow from the second inequality in \eqref{jensen3}, rewritten as $\lambda\|\Omega'(\varphi)\|^{2}-\|\psi-2\lambda\Lambda'(\varphi)\|^{2}/(4\lambda)\leq \Omega(\varphi)-\Omega(\psi)$. To obtain the first inequality in \eqref{Omega_bounds} we take the supremum over $\psi$. For a minimizing $\varphi$ this then yields $\|\Omega'(\varphi)\|^{2}=0$, hence a solution of \eqref{critical_equation}.
	The claimed continuity and exchange symmetry just reflect the corresponding properties of $\Lambda'(\psi)$ for any $\psi\in\L^{2}$, according to Lemma\,\ref{some_properties}\,\ref{Lambda_convexity}.
	The claimed bounds follow from \eqref{critical_equation} combined with \eqref{Lambda'} and \eqref{Lambda'_bound}.
	The second inequality in \eqref{Omega_bounds} will be used only later in the proof of Theorem\,\ref{f_ann_second_order} in the next section. It follows from the above rewritten 	inequality in \eqref{jensen3} with $\varphi=\psi_{\lambda}$ by using the just obtained $\Omega'(\psi_{\lambda})=0$ and renaming $\psi$ to $\varphi$.
\item For two solutions $\psi$, $\widetilde{\psi}\in\L^{2}$ of \eqref{critical_equation} we have
	$\big\|\psi-\widetilde{\psi}\big\|=2\lambda\big\|\Lambda'\big(\psi\big)-\Lambda'\big(\widetilde{\psi}\big)\big\|\leq 2\lambda\big\|\psi-\widetilde{\psi}\big\|$ by \eqref{Lambda'_continuity}. It follows that $1\leq2\lambda$ if  $\psi\neq\widetilde{\psi}$. This implication is equivalent to the claim.
	\qed
\end{enumerate}
\end{proof}

\begin{remark}\label{remthm3}

\renewcommand{\labelenumi}{(\roman{enumi})}
\renewcommand{\theenumi}{(\roman{enumi})}
\begin{enumerate}
\item As to the proof of \eqref{varadhan2}, it is interesting to notice that the inequality
\begin{equation}\label{wolfbound}
	\frac{F_{N}}{N}\geq\sup_{\psi\in\L^{2}}\Big(\Lambda(\psi)-\frac{1}{4\lambda}\langle\psi,\psi\rangle\Big)=\sup_{\varphi\in\D^{*}}\big(\lambda\langle\varphi,\varphi\rangle-\Lambda^{*}(\varphi)\big)\quad(N\geq2)
\end{equation}
can be derived easily without large-deviation techniques.
One only has to use in \eqref{F_N3} the obvious inequality $\langle\xi_{N},\xi_{N}\rangle\geq 2\langle\xi_{N},\eta\rangle-\langle\eta,\eta\rangle$ for arbitrary $\eta\in\L^{2}$, combined with properties of the exponential, the definition \eqref{cumulant}, and the fact that the random variables $\sigma_{1}\otimes\sigma_{1},\dots,\sigma_{N}\otimes\sigma_{N}$ are independent and identically distributed. The equality in \eqref{wolfbound} is  \eqref{gauss}. 
\item \label{phimu} From \eqref{varadhan} and \eqref{gauss} we immediately rediscover that $\beta f^{\ann}_{\infty}$ depends on $\vv$ only via $\lambda>0$ and that $\beta f^{\ann}_{\infty}$ is not increasing in $\lambda$ because it is the pointwise infimum of a family of such functions. Since $\inf_{\psi\in\L^{2}}\Omega(\psi)=\inf_{\eta\in\L^{2}}\big(\lambda\langle\eta,\eta\rangle-\Lambda(2\lambda\eta)\big)$ by scaling, the concavity of $\beta f^{\ann}_{\infty}$ in $\lambda$ is seen to follow similarly from the convexity of $\Lambda$.
From \eqref{varadhan} and \eqref{gauss} also the upper estimates in \eqref{f_ann_inf_bounds} and \eqref{f_ann_inf_bound2} can easily be rediscovered. 
Namely, the upper estimate in \eqref{f_ann_inf_bounds} follows from \eqref{varadhan} by restricting to the single function $\varphi=\mu$ and from \eqref{gauss} by using $\Lambda(\psi)\geq\langle\psi,\mu\rangle$, see \eqref{cumulant_bounds}.
The estimate \eqref{f_ann_inf_bound2} follows from \eqref{varadhan} or \eqref{gauss} by restricting to the constant function $\varphi=1$ or $\psi=2\lambda\,1$ and by observing \eqref{transform_bounds} or \eqref{cumulant_bound2}, respectively. 
The lower estimate in \eqref{f_ann_inf_bounds} does not seem to be obtainable so easily from \eqref{varadhan} or \eqref{gauss}. However, the weaker lower estimate $-\lambda$, see \eqref{inf_strong}, immediately follows from \eqref{varadhan} according to the beginning of the proof of Theorem\,\ref{f_ann_variational}\,\ref{varadhan_gauss}. Alternatively, it follows from \eqref{gauss} by $\Lambda(\psi)\leq\|\psi\|$, see \eqref{cumulant_bounds}.
\item 	The {\scshape Lipschitz} continuity \eqref{Lambda'_continuity}  implies  $\big\langle\Lambda'(\psi)-\Lambda'(\varphi),\psi-\varphi\big\rangle\leq\|\psi-\varphi\|^{2}$ by the {\scshape Schwarz} inequality. Applying this to \eqref{Omega} yields
	\begin{equation}
	\big\langle\Omega'(\psi)-\Omega'(\varphi),\psi-\varphi\big\rangle\geq\Big(\frac{1}{2\lambda}-1\Big)\|\psi-\varphi\|^{2}
	\end{equation}
and hence strict monotonicity of $\Omega'$, equivalently \cite[Prop.\,17.7]{BC2017}, strict convexity of $\Omega$ on $\L^{2}$, if $2\lambda<1$. This shows again, without referring to the critical equation \eqref{critical_equation}, that  $\Omega$ has exactly one minimizer for $2\lambda<1$. 
\item The variational formula~\eqref{gauss} may be derived \emph{informally}, and without referring to \eqref{varadhan}, from rewriting \eqref{F_N3} as follows
	\label{gauss_white_noise}
	\begin{equation}\label{F_N4}
		F_{N}=\ln\Big(\int\d{\phi}\exp\big(N\Lambda(\sqrt{4\lambda}\,\phi)\big)\Big)\,.
	\end{equation}
Here we have interchanged the {\scshape Poisson} expectation with the ``linearizing'' integration over the (generalized) sample paths ${\lbrack0,1\rbrack}\times{\lbrack0,1\rbrack}\ni(t,t')\mapsto \phi(t,t')$ of centered two-time {\scshape Gauss}ian white noise with (generalized) covariance $\int\d{\phi}\phi(t,t')\phi(u,u')=\delta(t-u)\delta(t'-u')/(2N)$ in terms of the {\scshape Dirac} delta. The symbolic equation ``\,$\d{\phi}=\dll{\phi}\exp\big(-N\langle\phi,\phi\rangle\big)$'' for the {\scshape Gauss}ian probability measure $\GG_{N}$ in \eqref{F_N4} suggests  the {\scshape Laplace} method for the asymptotic evaluation as $N\to\infty$. The resulting variational formula then turns into \eqref{gauss} by the replacement $\phi\mapsto\psi/\sqrt{4\lambda}$. In the physics literature such a derivation often goes under the name {\scshape Hubbard--Stratonovich} trick or transformation, in particular when a {\scshape Dyson}--{\scshape Feynman} ``time-ordered exponential'' or ``product integral'' [instead of the {\scshape PFK} representation \eqref{fk}] is employed in order to ``disentangle'' non-commuting {\scshape Hil\-bert}-space operators. See, for example \cite{M1975,BM1980,S1981,KK2002}. For a rigorous approach to {\scshape Feynman}'s disentangling formalism we refer to the monograph \cite{JL2000}.
\item \label{landau} Upon dividing by $\beta$, the functional \eqref{Omega} in \eqref{gauss} may be viewed as a simple {\scshape Landau}--{\scshape Ginz\-burg}--{\scshape Wilson} free energy \cite{H1987} obtained by integration over the paths $\sigma_{1}:t\mapsto\sigma_{1}(t)$ of a single spin-flip process in the sense of \eqref{teleproc}:
\begin{equation}
\beta^{-1}\Omega(\psi)=-\beta^{-1}\ln\big(\big\langle\exp\big(-\beta {\cal H}_{1}(\psi,\sigma_{1})\big)\big\rangle_{\beta\bb}\big)\,.
\end{equation}
Here, the effective {\scshape Hamilton}ian\, ${\cal H}_{1}(\psi,\sigma_{1})$ associates to any $\psi\in\L^{2}$ and any $\sigma_{1}$ a non-instantaneous self-interaction energy by $\beta{\cal H}_{1}(\psi,\sigma_{1})\ceq \langle\psi,\psi\rangle/(4\lambda)-\langle\psi,\sigma_{1}\otimes\sigma_{1}\rangle \geq-\lambda$. The corresponding effective {\scshape Gibbs} expectation is given by  $\langle(\,\cdot\,)\rangle_{\beta\bb,\psi}$, see \eqref{Lambda'}. The critical equation $\Omega'(\psi)=0$, see \eqref{critical_equation} combined with \eqref{Lambda'}, then identifies the (positive) self-interaction function $\psi$ ``self-consistently'' (up to a factor $2\lambda$) with the ``perturbed'' auto-correlation function of the spin $\langle\sigma_{1}\otimes\sigma_{1}\rangle_{\beta\bb,\psi}\geq\langle\sigma_{1}\otimes\sigma_{1}\rangle_{\beta\bb}=\mu$. The combination of Theorem\,\ref{f_ann_variational} and Theorem\,\ref{difference} below therefore constitutes, for $4\lambda<1$, a rigorous theory of the ``self-consistency equations" in the respective physics literature, among them \cite{FS1986,YI1987,US1987,GL1990,MH1993}. In this literature the condition $\langle\psi_{\lambda},1\rangle=1/2$ is sometimes used to determine the critical line between the paramagnetic and the spin-glass phase in the temperature-field plane, see for example \cite{US1987}. By the (pointwise) inequalities $2\lambda\mu\leq\psi_{\lambda}\leq2\lambda\,1$ from Theorem\,\ref{f_ann_variational}\,\ref{critical} this condition implies $1\leq4\lambda\leq1/m$ which indicates the absence of spin-glass order for $4\lambda<1$ in agreement with Corollary\,\ref{no_sg} below, see also Section\,\ref{conclusion}. In this connection we mention that the parameter $m$ defined in \eqref{m} may be recognized as the zero-field single-spin (or ``local") \z\z-susceptibility for $\lambda=0$.
\end{enumerate}
\end{remark}

\section{The macroscopic annealed free energy for weak disorder}

Unfortunately, we do not know explicitly a single minimizer in \eqref{varadhan} or \eqref{gauss} if $\lambda>0$.\footnote[4]{In the limiting case $\lambda\downarrow0$ we simply have $\varphi_{0}=\mu$, see \eqref{transform_bounds}, and therefore $\psi_{0}=0\cdot\mu=0$.}
In this section we therefore compare the global minimum $\Omega(\psi_{\lambda})=\beta f^{\ann}_{\infty}+\ln\big(2\cosh(\beta\bb)\big)$, see Theorem\,\ref{f_ann_variational}\,\ref{correspondence}, to its simple upper bound $\Omega(2\lambda\mu)$, that is, to the functional \eqref{Omega} evaluated at $\psi=2\lambda\mu$. Fortunately, it turns out that $\Omega(2\lambda\mu)$ not only shares with $\Omega(\psi_{\lambda})$ the properties of convexity and monotonicity in $\lambda$, but also constitutes a very good approximation to $\Omega(\psi_{\lambda})$ for small $\lambda$. More precisely, their respective asymptotic expansions, as $\lambda\downarrow0$, coincide up to the second order.\footnote[3]{Coincidence up to the first order follows already by an argument in Remark\,\ref{remthm3}\,\ref{phimu}.} The corresponding second-order coefficient turns out to be a rather complicated function of $\beta\bb>0$ taking only negative values larger than $-0.14$, see Figure\,\ref{c0_plot} below. The main drawback of $\Omega(2\lambda\mu)$ is the fact that it does not yield the true behavior of $\Omega(\psi_{\lambda})$ for large $\lambda$. However, due to our main result Theorem\,\ref{difference} in the next section, it is (presently) only the weak-disorder regime, $4\lambda<1$, for which $\Omega(\psi_{\lambda})$ is known to be physically relevant. We begin with
\begin{lemma}[Some properties of the upper bound $\Omega(2\lambda\mu)$] \label{lem_Omega_mu}

\noindent
	The function $\lambda\mapsto\Omega(2\lambda\mu)=p\lambda-\Lambda(2\lambda\mu)$
	\begin{enumerate}
		\item is concave and strictly decreasing,
		\item obeys for any $\lambda>0$ the estimates
		\begin{align}
			p\lambda-\ln\big(1+(p/m)(\e^{2m\lambda}-1)\big)\leq\Omega(2\lambda\mu)&\leq-p\lambda\,,\label{Omega_mu_bounds}\\
			-(2m-p)\lambda\leq\Omega(2\lambda\mu)&\leq -(2m-p)\lambda+\ln\big(\cosh(\beta\bb)\big)\,,\label{Omega_mu_bound2}
		\end{align}
		\item has the second-order {\scshape Taylor} formula
			\begin{equation}\label{Omega_mu_second_order}
			 \Omega(2\lambda\mu)=-p\lambda-2c_{0}\lambda^{2}-\frac{4}{3}r(\lambda)\lambda^{3}
		\end{equation}
		with the variance
		\begin{equation}\label{c}
		c_{0}\ceq\big\langle\langle\mu,\xi_{1}\rangle^{2}\big\rangle_{\beta\bb}-p^{2}=\frac{m-p}{4(\beta\bb)^{2}}+\frac{2p-m}{6}-\Big(\frac{2p-m}{2}\Big)^{2}
		\end{equation}
		and some continuous function $r:{\lbrack0,\infty\lbrack}\to{\lbrack-m^{3},m^{3}\rbrack}$, $\lambda\mapsto r(\lambda)$,
		\item has the strong-disorder limit
			\begin{equation}\label{Omega_mu_limit}
				\lim\limits_{\lambda\to\infty}\frac{1}{\lambda}\Omega(2\lambda\mu)=-(2m-p)\,.
			\end{equation}
	\end{enumerate}
\end{lemma}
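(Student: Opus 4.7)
The plan is to reduce everything to the study of $K(t)\ceq\Lambda(t\mu)=\ln\langle\e^{tX}\rangle_{\beta\bb}$, the log-moment-generating function of the bounded random variable $X\ceq\langle\mu,\xi_{1}\rangle=\int_{0}^{1}\!\int_{0}^{1}\mu(t,t')\sigma_{1}(t)\sigma_{1}(t')\dl{t}\dl{t'}$. Using $\|\mu\|^{2}=p$ from \eqref{p}, the identity $\Omega(2\lambda\mu)=\|2\lambda\mu\|^{2}/(4\lambda)-\Lambda(2\lambda\mu)=p\lambda-K(2\lambda)$ is immediate and confirms the displayed form. Throughout I would exploit the fact that, since $\mu(t,t')=\langle\sigma_{1}(t)\sigma_{1}(t')\rangle_{\beta\bb}$ is a positive-semidefinite kernel on the unit square, every realisation of $X$ satisfies $0\leq X\leq m$.

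For (a), the convexity of $\Lambda$ from Lemma\,\ref{some_properties}\,\ref{Lambda_convexity} and the affineness of $\lambda\mapsto 2\lambda\mu$ make $\lambda\mapsto\Lambda(2\lambda\mu)$ convex, so $\Omega(2\lambda\mu)$ is concave. For strict monotonicity, the {\scshape G\^ateaux} derivative yields $\frac{d}{d\lambda}\Omega(2\lambda\mu)=p-2\langle\Lambda'(2\lambda\mu),\mu\rangle\leq p-2p=-p<0$, where \eqref{Lambda'_bounds} is applied to the positive function $2\lambda\mu$. For (b), the estimates \eqref{Omega_mu_bounds} come directly from the inequalities $\Lambda(\psi)\geq\langle\psi,\mu\rangle$ and $\Lambda(\psi)\leq\langle|\psi|,1\rangle$ of \eqref{cumulant_bounds}, both evaluated at $\psi=2\lambda\mu$ and using $\langle\mu,1\rangle=m$; the refinement \eqref{Omega_mu_bound2} follows in the same way from \eqref{cumulant_bound2}.

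For (c) I would apply {\scshape Taylor}'s theorem with integral remainder to $K$ on $[0,2\lambda]$. Since $K$ is a cumulant generating function, $K(0)=0$, $K'(0)=\langle X\rangle_{\beta\bb}=p$, and $K''(0)=c_{0}$. The explicit expression \eqref{c} for $c_{0}$ would be extracted by writing $\langle X^{2}\rangle_{\beta\bb}$ as the four-fold integral against $\langle\sigma_{1}(t_{1})\sigma_{1}(t_{2})\sigma_{1}(t_{3})\sigma_{1}(t_{4})\rangle_{\beta\bb}$, using the {\scshape Markov} property of the spin-flip process to factorise this four-point function on the ordered simplex into products of two-point functions, and then carrying out the resulting integrations against $\mu\otimes\mu$. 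Setting
\[
R(\lambda)\ceq\int_{0}^{2\lambda}\frac{(2\lambda-s)^{2}}{2}\,K'''(s)\,\dll{s},
\]
I would get $\Omega(2\lambda\mu)=-p\lambda-2c_{0}\lambda^{2}-R(\lambda)$, so defining $r(\lambda)\ceq 3R(\lambda)/(4\lambda^{3})$ (extended continuously to $r(0)=K'''(0)$) produces \eqref{Omega_mu_second_order}. For the bound, $K'''(s)$ is the third central moment of $X$ under the tilted measure $\langle\,(\cdot)\,\rangle_{\beta\bb,s\mu}$; setting $Y\ceq X/m\in[0,1]$ gives $|Y-\langle Y\rangle_{\beta\bb,s\mu}|\leq 1$, hence $|K'''(s)|\leq m^{3}$. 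Integration then yields $|R(\lambda)|\leq m^{3}(2\lambda)^{3}/6=\frac{4}{3}m^{3}\lambda^{3}$, i.e.\ $|r(\lambda)|\leq m^{3}$; the continuity of $r$ is inherited from the smoothness of $K$.

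For (d) I would use the standard asymptotics for the log-MGF of a bounded variable: if $X\leq m$ almost surely with $\PP_{\beta\bb}(X=m)>0$, then $tm+\ln\PP_{\beta\bb}(X=m)\leq K(t)\leq tm$ for all $t\geq 0$, so $K(t)/t\to m$ as $t\to\infty$. The event $\{X=m\}$ is realised precisely on the no-flip realisation of the Poisson process ${\cal N}_{1}$, which has probability $1/\cosh(\beta\bb)>0$ as already computed in the proof of \eqref{f_ann_N_bound2}; hence $K(2\lambda)/\lambda\to 2m$ and $\Omega(2\lambda\mu)/\lambda\to p-2m=-(2m-p)$. The main obstacle I foresee is purely computational, namely extracting the explicit expression \eqref{c} for $c_{0}$ from the Markovian four-point function and verifying the algebra that compresses it into the displayed combination of $m$, $p$, and $\beta\bb$.
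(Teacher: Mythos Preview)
Your proposal is correct and follows essentially the same route as the paper: both reduce everything to the cumulant generating function of $X=\langle\mu,\xi_{1}\rangle\in[0,m]$, obtain (a) from convexity of $\Lambda$, (b) from \eqref{cumulant_bounds} and \eqref{cumulant_bound2}, and (c) via a second-order Taylor expansion of $K$ with the third-cumulant bound $|K'''|\le m^{3}$ (the paper uses the Lagrange remainder, you the integral remainder, which makes continuity of $r$ a bit more transparent). For (d) you re-derive the lower bound $K(t)\ge tm-\ln\cosh(\beta\bb)$ from the no-flip event, whereas the paper simply observes that the two estimates already established in (b) sandwich $\Omega(2\lambda\mu)/\lambda$ between $-(2m-p)$ and $-(2m-p)+\lambda^{-1}\ln\cosh(\beta\bb)$; your argument is valid but reproves \eqref{Omega_mu_bound2} rather than citing it.
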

\begin{proof}
	\begin{enumerate}
		\item The concavity follows from the convexity of $\Lambda$. The strict monotonicity follows from concavity and the fact that the first derivative at $\lambda=0$ equals $-			p<0$.
		\item In terms of the ${\lbrack0,m\rbrack}$-valued random variable $q\ceq\langle\mu,\xi_{1}\rangle$, see Remark\,\ref{remlem4}\,\ref{Lambda'_scalar_bounds}, we have 
		\begin{equation}\label{elementary}
			\e^{\langle 2\lambda\mu,\xi_{1}\rangle}=\e^{2m\lambda q/m}\leq 1+(q/m)(e^{2m\lambda}-1)
		\end{equation}
		by the elementary ({\scshape Jensen}) inequality used in the proof of the first inequality in \eqref{G_N_weak}. Taking now the {\scshape Poisson} expectation and
		the logarithm gives an upper bound on $\Lambda(2\lambda\mu)$ which, in turn, yields the lower estimate in \eqref{Omega_mu_bounds}. The upper estimate in 			\eqref{Omega_mu_bounds} is $\Lambda(2\lambda\mu)\geq 2p\lambda$ from \eqref{cumulant_bounds}. The lower estimate in \eqref {Omega_mu_bound2} follows from 			weakening that in \eqref {Omega_mu_bounds} by using $p/m\leq 1$. The upper estimate in \eqref {Omega_mu_bound2} follows from
		 \eqref{cumulant_bound2}. 
		\item We use the expectation $\langle(\,\cdot\,)\rangle_{\lambda}\ceq\langle(\,\cdot\,)\rangle_{\beta\bb,2\lambda\mu}$, see \eqref{Lambda'}. 
	Since $q\in{\lbrack0,m\rbrack}$, the convex function $\lambda\mapsto\Lambda(2\lambda\mu)=\ln\big(\langle\e^{2\lambda q}\rangle_{0}\big)$ is arbitrarily often differentiable and its second-order {\scshape Taylor} formula (at $\lambda=0$ with remainder in {\scshape Lagrange} form) affirms that for each $\lambda>0$ there exists some (unknown) number $a\in{\rbrack0,1\lbrack}$ such that
	\begin{equation}\label{taylor}
		\Lambda(2\lambda\mu)= (2\lambda)\langle q\rangle_{0}+\frac{(2\lambda)^{2}}{2!}\big\langle(q-\langle q\rangle_{0})^{2}\big\rangle_{0}+\frac{ (2\lambda)^{3} }{3!}r(\lambda)	\end{equation}
	with the third cumulant $r(\lambda)\ceq\big\langle(q-\langle q\rangle_{a\lambda})^{3}\big\rangle_{a\lambda}\in{\lbrack-m^{3},m^{3}\rbrack}$. The claim now follows from $\langle q\rangle_{0}=p$ and an explicit calculation of $\langle q^{2}\rangle_{0}$. The latter is based on \eqref{mu}, \eqref{prod_exp_cond} in Appendix\,\ref{poisson_appendix} for $ \langle\, \sigma_1(t_{1})\sigma_1(t_{2})\sigma_1(t_{3})\sigma_1(t_{4})\, \rangle_{0}$, and a straightforward but somewhat tedious integration over the four-dimensional unit-cube ${\lbrack0,1\rbrack}^{4}$. 
	\item The limit \eqref{Omega_mu_limit} follows from \eqref{Omega_mu_bound2}.
	\qed
	\end{enumerate}
\end{proof}

\begin{remark}\label{rem_lem_second_order}
\renewcommand{\labelenumi}{(\roman{enumi})}
\renewcommand{\theenumi}{(\roman{enumi})}
\begin{enumerate}
	\item \label{c0_remark} Since the explicit expression \eqref{c} for the variance $c_{0}$ is rather complicated, we mention its simple bounds according to
	\begin{equation}\label{simple_bounds}
	\frac{(m-p)^{2}}{\cosh(\beta\bb)}\leq c_{0}\leq (m-p)p\,.
	\end{equation}
	In the notation of the proof of Lemma\,\ref{lem_Omega_mu} the lower bound follows from $c_{0}=\big\langle(q-p)^{2}\big\rangle_{0}\geq\big\langle(q-p)^{2} 1(\sigma_{1})\big\rangle_{0}=(m-p)^{2}\langle1(\sigma_{1})\rangle_{0}$, confer the proof of \eqref{f_ann_N_bound2}. The upper bound simply follows from $q^{2}=qq\leq mq$. The lower bound implies the expected strict positivity of $c_{0}$ (for $\beta\bb>0$). The upper bound shows that $c_{0}$ vanishes (only) in the limiting cases $\beta\bb\downarrow0$ and $\beta\bb\to\infty$, and is smaller than $p/4$. As a function of $\beta\bb$ the variance $c_{0}$ is continuous and attains its maximum value $0.0695\dots$ at $\beta\bb=0.9089\dots$ (according to a simple numerical computation), see Figure\,\ref{c0_plot}. For small $\beta\bb$ we have the asymptotic expansion $c_{0}=\frac{4}{5}(\beta\bb)^{2}-\frac{13}{36}(\beta\bb)^{4}+\cdots$ as $\beta\bb\downarrow0$.
\begin{figure}
\begin{center}
	\psset{xunit=2.5cm,yunit=38cm}
	\begin{pspicture}(-0.9cm,-0.5cm)(10.1cm,3.5cm)
	\psaxes[Dx=0.5,Dy=0.025]{->}(3.8,0.085)
	\psset{algebraic,plotpoints=500}
	\psplot[linecolor=blue]{0.02}{3.6}{TANH(x)/(8*x^3)-0.5/(2*x*COSH(x))^2+1/(6*(COSH(x))^2)-1/(2*(COSH(x))^2)^2}
	\rput(0,0.09){ $c_{0}$}
	\rput(3.95,0){$\beta\bb$}
	\end{pspicture}
\caption{Plot of the variance $c_{0}$ given by \eqref{c} as a function of $\beta\bb$. See Remark\,\ref{rem_lem_second_order}\,\ref{c0_remark}.}
\label{c0_plot}
\end{center}
\end{figure}
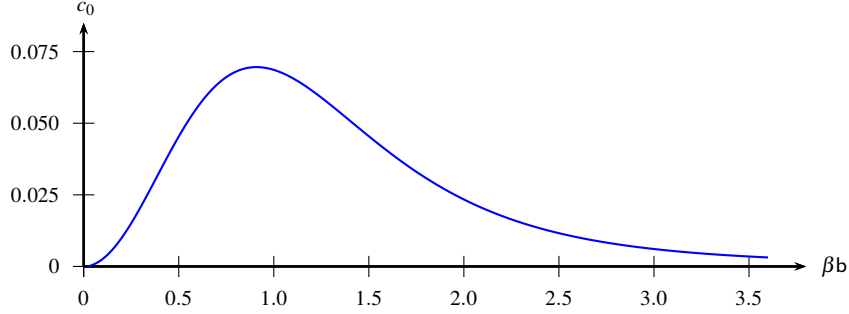
	\item The rather explicit lower estimate in \eqref{Omega_mu_bounds} shares with $\Omega(2\lambda\mu)$ the properties of concavity and monotonicity. It also has the same leading asymptotic behaviors in the limits of small and large $\lambda$ and $\bb$. But the second-order coefficient of its small-$\lambda$ {\scshape Taylor} series is (already) smaller and given by $-2(m-p)p$, confer \eqref{Omega_mu_second_order} and \eqref{simple_bounds}. Nevertheless, the (positive) difference between $\Omega(2\lambda\mu)$ and the lower estimate in \eqref{Omega_mu_bounds} does not exceed $2(m-p)\lambda\min\{m\lambda,1\}$ for all $\lambda$. This follows from \eqref{Omega_mu_bounds} combined with \eqref{log_inequality}, respectively with $p/m\leq 1$.

	\item Obviously, \eqref{Omega_mu_limit} does not reflect the true strong-disorder limit \eqref{strong_disorder} of $\Omega(\psi_{\lambda})$ because $2m-p\leq m^2/p <1$. But by 	generalizing $\Omega(2\lambda\mu)$ to the one-parameter variational expression $\min_{x\in{\lbrack0,1\rbrack}}\Omega\big(2\lambda(\mu+x(1-\mu))\big)\leq\Omega(2\lambda\mu)$ this limit may be included (for $x=1$) without changing the first two terms on the right-hand side of \eqref{Omega_mu_second_order}.
	\end{enumerate}
\end{remark}
The next theorem shows that $\Omega(2\lambda\mu)$ is a very good approximation to the global minimum $\Omega(\psi_{\lambda})=\beta f^{\ann}_{\infty}+\ln\big(2\cosh(\beta\bb)\big)$ for small $\lambda$, see Theorem\,\ref{f_ann_variational}\,\ref{correspondence} and \eqref{Omega}.

\begin{theorem}[The macroscopic annealed free energy up to second order in $\lambda$] \label{f_ann_second_order}

\noindent
We have the following error estimates
\begin{equation}\label{Omega_diff}
	0\leq\Omega(2\lambda\mu)-\Omega(\psi_{\lambda})\leq 4\lambda^{3}\qquad (\lambda>0)
\end{equation}
and the two-term asymptotic expansion for weak disorder
\begin{equation}\label{asymptotic}
	\beta f^{\ann}_{\infty} +\ln\big(2\cosh(\beta\bb)\big)=\Omega(\psi_{\lambda})=-p\lambda-2c_{0}\lambda^{2}+\mathcal{O}(\lambda^{3})\qquad(\lambda\downarrow0)
\end{equation}
with $c_{0}$ given by \eqref{c} and the usual understanding  of the {\scshape Landau} big-Oh notation that $\mathcal{O}(\lambda^{3})$ stands for some function of $\lambda$ with $\limsup_{\lambda\downarrow0}|\mathcal{O}(\lambda^{3})|/\lambda^{3}<\infty$.
\end{theorem}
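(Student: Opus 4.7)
The lower inequality $0\leq\Omega(2\lambda\mu)-\Omega(\psi_{\lambda})$ in \eqref{Omega_diff} is immediate, since Theorem~\ref{f_ann_variational} identifies $\psi_{\lambda}$ as a global minimizer of $\Omega$. For the upper inequality my plan is to use the second inequality in \eqref{Omega_bounds}, which has already been recorded (precisely in anticipation of this theorem) in the proof of Theorem~\ref{f_ann_variational}\,\ref{critical}: applied to the trial function $\varphi=2\lambda\mu$ it yields $\Omega(2\lambda\mu)-\Omega(\psi_{\lambda})\leq\|2\lambda\mu-\psi_{\lambda}\|^{2}/(4\lambda)$, so everything reduces to an $\mathcal{O}(\lambda^{2})$ bound on $\|\psi_{\lambda}-2\lambda\mu\|$.

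To obtain that bound I would rewrite both vectors as images of $2\lambda\Lambda'(\cdot)$: by the critical equation \eqref{critical_equation} we have $\psi_{\lambda}=2\lambda\Lambda'(\psi_{\lambda})$, and by the equality in \eqref{Lambda'_bound} we have $2\lambda\mu=2\lambda\Lambda'(0)$. The {\scshape Lipschitz} estimate \eqref{Lambda'_continuity} with constant $1$ then gives
\[
\|\psi_{\lambda}-2\lambda\mu\|=2\lambda\,\|\Lambda'(\psi_{\lambda})-\Lambda'(0)\|\leq 2\lambda\,\|\psi_{\lambda}\|\leq 4\lambda^{2},
\]
where the last step uses the a~priori bound $\|\psi_{\lambda}\|\leq 2\lambda$ from Theorem~\ref{f_ann_variational}\,\ref{critical}. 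Inserting this into the quadratic majorization yields $\Omega(2\lambda\mu)-\Omega(\psi_{\lambda})\leq (4\lambda^{2})^{2}/(4\lambda)=4\lambda^{3}$, which is exactly \eqref{Omega_diff}.

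The asymptotic expansion \eqref{asymptotic} is then pure bookkeeping. Combining \eqref{Omega_diff} with the explicit second-order formula \eqref{Omega_mu_second_order} of Lemma~\ref{lem_Omega_mu}, and using $|r(\lambda)|\leq m^{3}\leq 1$, one gets
\[
\Omega(\psi_{\lambda})=\Omega(2\lambda\mu)-\big[\Omega(2\lambda\mu)-\Omega(\psi_{\lambda})\big]=-p\lambda-2c_{0}\lambda^{2}+\mathcal{O}(\lambda^{3})\qquad(\lambda\downarrow 0),
\]
since both the {\scshape Taylor} remainder and the variational error are $\mathcal{O}(\lambda^{3})$. Rewriting $\Omega(\psi_{\lambda})$ as $\beta f^{\ann}_{\infty}+\ln\bigl(2\cosh(\beta\bb)\bigr)$ via Theorem~\ref{f_ann_variational}\,\ref{correspondence} and \eqref{cumulant_bounds} finishes the argument.

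The only non-trivial step is the distance estimate $\|\psi_{\lambda}-2\lambda\mu\|\leq 4\lambda^{2}$. A naive use of $\|\psi_{\lambda}\|\leq 2\lambda$ and $\|2\lambda\mu\|\leq 2\lambda$ would merely give $\mathcal{O}(\lambda)$, which after insertion into the quadratic majorization \eqref{Omega_bounds} yields an $\mathcal{O}(\lambda)$ error that is useless here. The decisive observation is that both $\psi_{\lambda}$ and $2\lambda\mu$ are fixed points, respectively iterates, of the same contractive map $\varphi\mapsto 2\lambda\Lambda'(\varphi)$, so the {\scshape Lipschitz} constant $1$ of $\Lambda'$ multiplies the already small size $\|\psi_{\lambda}\|\leq 2\lambda$ by another factor of $2\lambda$. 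This single gain of one power of $\lambda$ is what converts the trivial bound into the required $\mathcal{O}(\lambda^{2})$ distance estimate and, after the quadratic step, into the sharp $4\lambda^{3}$ remainder.
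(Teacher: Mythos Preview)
Your proof is correct and follows exactly the same route as the paper: the lower bound is by minimality, the upper bound comes from the quadratic majorization \eqref{Omega_bounds} together with the distance estimate $\|2\lambda\mu-\psi_{\lambda}\|=2\lambda\|\Lambda'(0)-\Lambda'(\psi_{\lambda})\|\leq 2\lambda\|\psi_{\lambda}\|\leq(2\lambda)^{2}$ via \eqref{Lambda'_bound}, \eqref{critical_equation}, and \eqref{Lambda'_continuity}, and the expansion follows by combining with \eqref{Omega_mu_second_order}. Your closing paragraph nicely isolates the one non-trivial idea.
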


\begin{proof}
The first inequality in \eqref{Omega_diff} is obvious because $\psi_{\lambda}$ is a minimizer of $\Omega$. For the second inequality we observe $\|2\lambda\mu-\psi_{\lambda}\|=\|2\lambda\Lambda'(0)-2\lambda\Lambda'(\psi_{\lambda})\|\leq2\lambda\|0-\psi_{\lambda}\|\leq(2\lambda)^{2}$ by \eqref{Lambda'_bound}, Theorem\,\ref{f_ann_variational}\,\ref{critical}, and \eqref{Lambda'_continuity}. Using this estimate in the second inequality of \eqref{Omega_bounds} yields \eqref{Omega_diff}. Combining \eqref{Omega_diff} with \eqref{Omega_mu_second_order} yields \eqref{asymptotic}.
\qed
\end{proof}

\begin{remark}\label{remcor}
\renewcommand{\labelenumi}{(\roman{enumi})}
\renewcommand{\theenumi}{(\roman{enumi})}
\begin{enumerate}
	\item The first two terms on the right-hand side of \eqref{asymptotic} reflect the concavity and monotonicity of $\Omega(\psi_{\lambda})$ in $\lambda$. Both of the (positive) coefficients $p$ and $c_{0}$ vanish in the limit $\bb\to\infty$ in agreement with \eqref{high_field} and \eqref{Omega_mu_bounds}. Recall also Remark\,\ref{remthm2}\,\ref{high}.
	\item The estimates \eqref{Omega_diff} can be sharpened according to
	\begin{equation}\label{Omega_diff2}
		\lambda\|\Omega'(2\lambda\mu)\|^{2}\leq\Omega(2\lambda\mu)-\Omega(\psi_{\lambda})\leq\min\big\{4\lambda^{3},\inf_{N\geq 2}G_{N}/N-p\lambda\big\}\,.
	\end{equation}
	The lower estimate in \eqref{Omega_diff2} follows from the lower estimate in \eqref{Omega_bounds} and is strictly positive, because $2\lambda\mu$ is not a minimizer of $\Omega$ (for $\lambda>0$).
	The second upper estimate in \eqref{Omega_diff2} follows from the lower estimate in \eqref{f_ann_inf_bounds} and the upper estimate in \eqref{Omega_mu_bounds}.
	For given $\lambda>0$ it sharpens the one in \eqref{Omega_diff} for sufficiently small and sufficiently large $\beta\bb$, see Remark\,\ref{rem_inf_G_N}\,\ref{inf_bounds}.
	We also have the (simplified) relative-error estimates
	\begin{equation}\label{relative_deviation}
		\|\Omega'(2\lambda\mu)\|^{2}\leq\frac{\Omega(2\lambda\mu)-\Omega(\psi_{\lambda})}{|\Omega(\psi_{\lambda})|}\leq \min\Big\{\frac{4\lambda^{2}}{p},1-p\Big\}\,.
	\end{equation}
	The lower and the first upper estimate follow by combining \eqref{Omega_diff2} with the estimates $p\lambda\leq|\Omega(\psi_{\lambda})|\leq \inf_{N\geq 2}G_{N}/N\leq\lambda$ from \eqref{f_ann_inf_bounds} and Remark\,\ref{rem_inf_G_N}\,\ref{inf_bounds}.The second upper estimate follows from writing the relative error as $\Omega(2\lambda\mu)/|\Omega(\psi_{\lambda})|+1$ and using again $\Omega(2\lambda\mu)\leq-p\lambda$ as well as $|\Omega(\psi_{\lambda})|\leq \lambda$.
	\item In view of \eqref{critical_equation} combined with \eqref{Lambda'_continuity} the (unique) minimizer $\psi_{\lambda}$ of $\Omega$ for $2\lambda<1$ may be determined (numerically) with arbitrary precision by the successive approximations
	\begin{equation}
		\psi_{\lambda}^{(n+1)}\ceq 2\lambda\Lambda'\big(\psi_{\lambda}^{(n)}\big)\qquad\big(\psi_{\lambda}^{(1)}\ceq2\lambda\mu\,,\quad n\in\N\big)\,.
	\end{equation}
	The (norm-)convergence of this minimizing sequence is exponentially fast according to
	\begin{equation}\label{fixed_point_diff}
		\|\psi^{(n)}_{\lambda}-\psi_{\lambda}\|\leq (2\lambda)^{n-1}\|\psi^{(1)}_{\lambda}-\psi_{\lambda}\|\leq(2\lambda)^{n}\|\psi_{\lambda}\|\leq(2\lambda)^{n+1}\,.
	\end{equation}
	Here, the first inequality follows from \eqref{Lambda'_continuity} by mathematical induction. For the next inequalities see the proof of Theorem\,\ref{f_ann_second_order}.\footnote{This is, of course, consistent with the {\scshape Banach} fixed-point theorem.} Using $\varphi=\psi^{(n)}_{\lambda}$ in \eqref{Omega_bounds} combined with \eqref{fixed_point_diff} yields an approximation to the macroscopic annealed free energy with an error not exceeding $(2\lambda)^{2n+1}/2$. For $n=1$ we get back to \eqref{Omega_diff}.
	\item By Theorem\,\ref{f_ann_second_order} we know that $\Omega\big(\psi^{(1)}_{\lambda}\big)$ coincides with $\Omega(\psi_{\lambda})$ up to the order $\lambda^{2}$, as $\lambda\downarrow0$.
	By \eqref{fixed_point_diff} we see that $\psi^{(2)}_{\lambda}$ coincides with the minimizer $\psi_{\lambda}$ up to the same order. Therefore it is of interest to determine $\psi^{(2)}_{\lambda}$ up to that order. To this end, we recall that for each non-zero $\eta\in\L^{2}$ the mapping $\psi\mapsto\langle\eta,\psi\rangle\eta$ defines a positive rank-one operator on $\L^{2}$, which we denote by $|\eta\rangle\langle\eta|$ following {\scshape Dirac}. The {\scshape Poisson} average $\text E\ceq\big\langle|\xi_{1}\rangle\langle\xi_{1}|\big\rangle_{\beta\bb}$ of the projection $|\xi_{1}\rangle\langle\xi_{1}|$ is a positive integral operator with a continuous ${[0,1]}$-valued kernel given by $\big\langle\xi_{1}(t,t')\xi_{1}(u,u')\big\rangle_{\beta\bb}$ for $(t,t'),(u,u')\in{[0,1]}\times{[0,1]}$. By an extension of \eqref{taylor} we now have
	
\begin{equation}
	\Lambda(2\lambda\varphi)=(2\lambda)\langle \varphi,\mu\rangle+\frac{(2\lambda)^{2}}{2!}\langle\varphi,(\text E-|\mu\rangle\langle\mu|)\varphi\rangle+{\cal O}(\lambda^{3})
\end{equation}

for any $\varphi\in\L^{2}$ with the derivative
\begin{equation}
	\Lambda'(2\lambda\varphi)=\mu+2\lambda(\text E-|\mu\rangle\langle\mu|)\varphi+{\cal O}(\lambda^{2})\,.
\end{equation}
Hence, we arrive at
\begin{equation}\label{psi_2}
	\psi^{(2)}_{\lambda}=2\lambda\Lambda'(2\lambda\mu)=2\lambda\mu+(2\lambda)^{2}(\text E-p\mathds{1})\mu+2\lambda{\cal O}(\lambda^{2})=\psi_{\lambda}+{\cal O}(\lambda^{3})
\end{equation}
with $\mathds{1}$ denoting the identity operator on $\L^{2}$. Like the first-order term in \eqref{psi_2} also the second-order term is a continuous, exchange symmetric, and positive $\L^{2}$-function, in agreement with Theorem\,\ref{f_ann_variational}\,\ref{critical}. The first two properties are directly inherited from the integral kernel of the operator $\text E$. The positivity follows from the (pointwise) inequality $\text E\psi\geq\langle\mu,\psi\rangle\mu$ for all $\psi\geq0$ due to  \eqref{multipoint_inequality}. The function $\text E\mu$ can be calculated explicitly. By \eqref{c} we have, in particular, $\langle\mu,\text E\mu\rangle=c_0+p^2$. Further properties of $\text E$ are given by the operator inequalities $0\leq|\mu\rangle\langle\mu|\leq\text E\leq\mathds{1}$ and the equality $\textrm{tr}\,\text E=1$ for its trace. Consequently, the uniform norm of the operator difference $\text A\ceq\text E-|\mu\rangle\langle\mu|\geq0$ obeys $\|\text A\|\leq\textrm{tr}\,\text A=1-p$.
\end{enumerate}
\end{remark}
Sometimes variational problems in function spaces like \eqref{gauss} are drastically simplified by restricting the set of all allowed variational functions to the one-parameter subset of functions of the form $\psi=y1$ where $1$ is the constant unit function (in $\L^{2}$ for the present case) and $y\in\R$ is arbitrary. This is often called, for an obvious reason, the \emph{static approximation}, confer for example \cite{BM1980,S1981,YI1987,US1987,KK2002,T2007}. In view of Lemma\,\ref{lem_Omega_mu} and Theorem\,\ref{f_ann_second_order} it is not surprising that this approximation in the present case does not yield the true behavior for small $\lambda$, not even up to the first order in $\lambda$. As opposed to that, $\Omega(2\lambda1)$ has the same strong-disorder limit as $\Omega(\psi_{\lambda})$ which, however, does not reflect the true behavior of the macroscopic (quenched) free energy in this limit. The main properties of the static approximation to the macroscopic annealed free energy are compiled in
\begin{corollary}[On the static approximation and its insufficiency for weak disorder] \label{constant_psi}

\noindent
Let the restriction of the infimization in \eqref{gauss} to the one-dimensional subspace of all constant $\L^{2}$-functions be denoted as
\begin{equation}\label{J1}
	J(\lambda)\ceq\inf\limits_{x\in\R}\Omega(2\lambda x1)=\inf\limits_{x\in\R}\big(\lambda x^{2}-\Lambda(2\lambda x1)\big)\,.
\end{equation}
Then the function $\lambda\mapsto J(\lambda)$
\begin{enumerate}
	\item is concave and not increasing,
	\item obeys for any $\lambda>0$ the three estimates
		\begin{align}
			-\inf_{N\geq2}G_{N}/N \leq&\,\,J(\lambda)\leq -m^{2}\lambda\,,\label{J1_bounds}\\
			&\,\,J(\lambda)\leq -\lambda+\ln\big(\cosh(\beta\bb)\big)\,,\label{J1_bound2}
		\end{align}
	\item obeys for any $\lambda\leq r/\big(2(1-m)\big)$, with arbitrary $r\in{\rbrack0,1\lbrack}$\,, the lower estimate
		\begin{equation}
			-\frac{m^{2}\lambda}{1-r} \leq J(\lambda)\,,\label{J1_bound3}
		\end{equation}
	\item has the weak- and strong-disorder limits
		\begin{equation}
		\lim\limits_{\lambda\downarrow 0}\frac{J(\lambda)}{\lambda}=-m^{2}\,,\qquad\qquad
		\lim\limits_{\lambda\to\infty}\frac{J(\lambda)}{\lambda}=-1\label{J1_limits}\,.
	\end{equation}
\end{enumerate}

\end{corollary}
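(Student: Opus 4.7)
The plan is to handle the four parts in order, exploiting the one-dimensional reduction in \eqref{J1} together with the general properties of $\Lambda$ collected in Lemma~\ref{some_properties}.

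For (a) the concavity is immediate from two observations: for each fixed $x\in\R$, the map $\lambda\mapsto\lambda x^{2}-\Lambda(2\lambda x 1)$ is concave (linear minus a convex function, convexity of $\Lambda$ being applied to the affine family $\lambda\mapsto 2\lambda x 1$), and the pointwise infimum of concave functions is concave (via the elementary inequality $\inf_{i}(ta_{i}+(1-t)b_{i})\geq t\inf_{i}a_{i}+(1-t)\inf_{i}b_{i}$). Monotonicity is cleanest after the substitution $y=2\lambda x$, which recasts $J(\lambda)=\inf_{y\in\R}\bigl(y^{2}/(4\lambda)-\Lambda(y1)\bigr)$ as the infimum of a family of functions each nonincreasing in $\lambda>0$.

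For (b) the two upper bounds follow by testing \eqref{J1} at $x=m$ and at $x=1$. Using $\Lambda(\psi)\geq\langle\psi,\mu\rangle$ from \eqref{cumulant_bounds} together with $\langle1,\mu\rangle=m$ yields $J(\lambda)\leq\lambda m^{2}-2\lambda m^{2}=-m^{2}\lambda$, while \eqref{cumulant_bound2} with $\psi=2\lambda 1$ gives $J(\lambda)\leq\lambda-2\lambda+\ln\bigl(\cosh(\beta\bb)\bigr)$. The lower bound $-\lambda\leq J(\lambda)$ comes from the trivial estimate $\Lambda(2\lambda x 1)\leq\|2\lambda x 1\|=2\lambda|x|$ of \eqref{cumulant_bounds} by completing the square in $\lambda x^{2}-2\lambda|x|$.

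Part (c) is the main work. First I would dispose of $x\leq 0$: writing $X\ceq\langle 1,\xi_{1}\rangle=(\int_{0}^{1}\sigma_{1}(t)\dl{t})^{2}\in{\lbrack0,1\rbrack}$ one has $\Lambda(2\lambda x 1)=\ln\langle\e^{2\lambda x X}\rangle_{\beta\bb}\leq 0$ for $x\leq 0$, so $\lambda x^{2}-\Lambda(2\lambda x 1)\geq 0>-m^{2}\lambda$ and the infimum is attained over $x>0$. The key step is then the second-order estimate
\begin{equation}\label{plan_key}
\Lambda(y 1)\leq my+(1-m)y^{2}/2\qquad(y\geq 0)\,,
\end{equation}
which I would derive from $g(y)\ceq\Lambda(y1)$ via $g'(y)=\langle X\rangle_{\beta\bb,y1}$ and $g''(y)=\langle X^{2}\rangle_{\beta\bb,y1}-\langle X\rangle_{\beta\bb,y1}^{2}$ in the tilted measure; monotonicity of $g'$ gives $\langle X\rangle_{\beta\bb,y1}\geq m$ for $y\geq 0$, while $X\in{\lbrack0,1\rbrack}$ yields $\langle X^{2}\rangle\leq\langle X\rangle$, so $g''(y)\leq\langle X\rangle_{\beta\bb,y1}(1-\langle X\rangle_{\beta\bb,y1})\leq 1\cdot(1-m)$. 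Integrating twice from $y=0$ with $g(0)=0$ produces \eqref{plan_key}. Substituting in \eqref{J1} for $x\geq 0$ gives $\lambda x^{2}-\Lambda(2\lambda x 1)\geq\lambda\bigl(1-2\lambda(1-m)\bigr)x^{2}-2\lambda m x$; under the hypothesis $2\lambda(1-m)\leq r<1$ the quadratic coefficient is positive, and minimizing over $x\geq 0$ yields $J(\lambda)\geq-\lambda m^{2}/\bigl(1-2\lambda(1-m)\bigr)\geq-\lambda m^{2}/(1-r)$.

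Finally, part (d) follows by sandwiching. The weak-disorder limit comes from $-m^{2}/(1-r)\leq J(\lambda)/\lambda\leq-m^{2}$ of (b) and (c) by first sending $\lambda\downarrow 0$ (for any fixed $r$) and then $r\downarrow 0$; the strong-disorder limit is immediate from $-1\leq J(\lambda)/\lambda\leq-1+\ln\bigl(\cosh(\beta\bb)\bigr)/\lambda$, obtained from the two bounds of (b). The main obstacle I anticipate is the sharp constant in \eqref{plan_key}: the factor $1-m$, rather than the Hoeffding constant $1/4$ that would come from a blind bound on $g''$, is precisely what makes the hypothesis of (c) read $2(1-m)$ in the denominator and not $2$, so matching the inverted form $r/(2(1-m))$ stated in the corollary.
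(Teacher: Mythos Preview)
Your proof is correct and follows essentially the same architecture as the paper's: the same substitution $y=2\lambda x$ for monotonicity in (a), the same test points $x=m$ and $x=1$ in (b), the same key quadratic upper bound $\Lambda(y1)\leq m|y|+(1-m)y^{2}/2$ driving (c), and the same sandwich for (d). The only genuine difference is your derivation of that key bound: the paper obtains it in two steps, first the convexity estimate $\e^{yq}\leq 1+q(\e^{y}-1)$ for $q\in[0,1]$ giving $\Lambda(y1)\leq\ln\bigl(1+m(\e^{y}-1)\bigr)$, then the elementary inequality \eqref{log_inequality}; you instead bound the second derivative $g''(y)=\mathrm{Var}_{\beta\bb,y1}(X)\leq\langle X\rangle_{\beta\bb,y1}(1-\langle X\rangle_{\beta\bb,y1})\leq 1\cdot(1-m)$ and integrate twice. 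Your route is arguably cleaner (no auxiliary log inequality) but requires disposing of $x\leq 0$ separately, which the paper's $|y|$ handles in one stroke; either way the quantitative content is identical.
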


\begin{proof}
\begin{enumerate}
		\item The claim holds because it is true for any function defined by an arbitrary restriction of the infimization in \eqref{gauss}, confer Remark\,\ref{remthm3}\,\ref{phimu}. More explicitly, $J$ is concave because it is the pointwise infimum of a family of such functions according to the convexity of $\Lambda$. And similarly, $J$ is not increasing because it is the pointwise infimum of a family of such functions according to $J(\lambda)=\inf_{y\in\R}\big(y^{2}/(4\lambda)-\Lambda(y1)\big)$.
	\item The lower estimate in \eqref{J1_bounds} is obvious from $-\inf_{N\geq2}G_{N}/N\leq\Omega(\psi_{\lambda})\leq J(\lambda)$.
	The upper estimate in \eqref{J1_bounds} follows by restricting to $x=m$ in \eqref{J1} and using $\Lambda(\psi)\geq\langle\psi,\mu\rangle$ from \eqref{cumulant_bounds}.
	Estimate \eqref{J1_bound2} follows by restricting to $x=1$  and using \eqref{cumulant_bound2}.

	\item  In terms of the ${\lbrack0,1\rbrack}$-valued random variable  $\widetilde q\ceq\big(\int_{0}^{1}\dl{t}\sigma_{1}(t)\big)^{2}$  we have $\langle y 1,\xi_{1}\rangle=y\widetilde q$ 	and by an analogy to \eqref{elementary} therefore
	\begin{equation}
	\Lambda(y1)\leq \ln\big(1+m(\e^{y}-1)\big)\leq m|y|+(1-m)\,y^{2}/2\label{b2}
	\end{equation}
	for all $y\in\R$. The second inequality follows from \eqref{log_inequality}. This gives
	\begin{equation}
		y^{2}-4\lambda\Lambda\big(y1\big)\geq\big(1-2(1-m)\lambda\big)y^{2}-4\lambda m|y|\geq(1-r)y^{2}-4\lambda m|y|\label{b3}\,.
	\end{equation}
	The proof of \eqref{J1_bound3} is completed by completing the square in \eqref{b3} and dividing by $4\lambda$.

	\item The strong-disorder limit follows from \eqref{J1_bound2}, the lower estimate in \eqref{J1_bounds}, and \eqref{g_limits}. For the weak-disorder limit we start from $m^{2}/(r-1)\leq \liminf_{\lambda\downarrow0}J(\lambda)/\lambda$ by \eqref{J1_bound3}, take the supremum over $r\in{\rbrack0,1\lbrack}$, and observe the upper estimate in \eqref{J1_bounds}.\qed

\end{enumerate}
\end{proof}

\begin{remark}\label{remcor_J1}

\renewcommand{\labelenumi}{(\roman{enumi})}
\renewcommand{\theenumi}{(\roman{enumi})}
\begin{enumerate}

	\item The small-$\lambda$ estimate \eqref{J1_bound3} is not only useful for the proof of the weak-disorder limit in \eqref{J1_limits}, which differs from the true limit in \eqref{asymptotic} since $m^{2}<p$, but it also implies that $J(\lambda)$ is strictly larger than the minimum $\Omega(\psi_{\lambda})$ in \eqref{gauss} for sufficiently small $\lambda>0$. More precisely, by choosing $r<1-(m^{2}/p)$ we get from \eqref{f_ann_inf_bounds}, \eqref{gauss} with \eqref{varadhan}, and \eqref{J1_bound3} that
	\begin{equation}
		\beta f^{\ann}_{\infty}+\ln\big(2\cosh(\beta\bb)\big)=\Omega(\psi_{\lambda})\leq -p\lambda<J(\lambda)
	\end{equation}
	for all $\lambda<(p-m^{2})/\big(2p(1-m)\big)\,\,\big[\!\!<(m-p)/(2p)<1/2\big]$. This upper bound on $\lambda$ is a continuous and strictly increasing function of $\beta\bb>0$ and approaches its extreme values $0$ and $1/2$ in the limiting cases $\beta\bb\downarrow0$ and $\beta\bb\to\infty$, respectively. It attains the value $1/4$ approximately at $\beta\bb=3$. On the other hand, the true strong-disorder limit in \eqref{J1_limits} implies that $J(\lambda)\leq-p\lambda$ for sufficiently large $\lambda$.
	\item Our proof of Corollary\,\ref{constant_psi} is based on rather crude estimates of $\Lambda(y1)$ that easily follow from its definition. Additional information on $J(\lambda)$, for intermediate values of $\lambda$, may be obtained from the formula
	\begin{equation}\label{Lambda_gauss}
		\Lambda(y1)=\ln\Big(\int_{\R}\dll{z}\e^{-\pi z^{2}}\cosh\Big(\sqrt{(\beta\bb)^{2}+ 4\pi y z^{2}}\Big)\Big/\cosh(\beta\bb)\Big)\qquad (y\geq0)\,. 
	\end{equation}
	It follows from a {\scshape Gauss}ian linearization and a consequence of the {\scshape PFK} formula, see Remark\,\ref{rem_pfk}\,\ref{Laplace_transform} in Appendix\,\ref{pfk_proof}. The restriction to $y\geq 0$ in \eqref{Lambda_gauss} causes no problem, because one may restrict to $x\geq 0$ in \eqref{J1} without losing generality. This follows from $\Lambda(\psi)\leq\Lambda(|\psi|)$ in \eqref{cumulant_bounds}.
	\item For related models without disorder one may restrict to constant variational functions without losing generality as has been shown in \cite{D2009,CCIL2008}. In particular, for the quantum {\scshape CW} model (defined by \eqref{H_N} with non-random  $g_{ij}=1/\sqrt{N}$) this observation provides one, but not the simplest, rigorous approach to its well-known macroscopic free energy and to the equation $\tanh(\beta\bb)=\bb/\vv$ of its critical line \cite{BMT1966,S1986}.
\end{enumerate}
\end{remark}
\section{The macroscopic free energy and absence of spin-glass order for weak disorder}

In this section we are going to prove that for weak disorder, more precisely for any $4\lambda(=\beta^{2}\vv^{2})$ in the open unit interval ${\rbrack0,1\lbrack}$ and any $\beta\bb>0$, the free energy $f_N$ coincides almost surely with the annealed free energy $f^\ann_N$ in the macroscopic limit $N\to\infty$. We begin by comparing the first and the second moment of the partition function with respect to the {\scshape Gauss}ian disorder average.
By the positivity of general variances we know that $\big(\EE[Z_N]\big)^2\leq \EE\big[(Z_N)^2\big]$. 
In the present case of \eqref{H_N} we also have 
\begin{equation}
		\EE\big[(Z_{N})^{2}\big]\leq c\big(\EE[Z_{N}]\big)^{2}\quad\T{ with }\quad c\ceq\frac{\e^{-2\lambda}}{\sqrt{1-4\lambda}}>1\label{Z_N_secmom_upbound}
\end{equation}
provided that $4\lambda<1$. This is a special case of the following lemma, which in its turn is an extension of \cite[Lem.\,11.2.3]{T2011b} for the zero-field {\scshape SK} model to the present (quantum) case with a transverse field. For its formulation we recall definition \eqref{Q_N} and introduce three ``tensor expectations''. We write $\langle\,(\cdot)\,\rangle^{\otimes}_{\beta\bb}$ for the joint (conditional) expectation with respect to the given set $\{{\cal N}_{1},\dots,{\cal N}_{N}\}$ of {\scshape Poisson} processes and an independent copy (or replica) $\{\widehat{\cal N}_{1},\dots,\widehat{\cal N}_{N}\}$ thereof.
The joint {\scshape Gibbs} expectation $\langle\,(\cdot)\,\rangle^{\otimes}$ corresponding to the duplicated quantum {\scshape SK} model with {\scshape Hilbert} space $\C^{2^{N}}\!\!\!\otimes\C^{2^{N}}$ and {\scshape Hamilton}ian defined as the sum of $H_{N}\otimes\um$, see \eqref{H_N}, and a copy $\um\otimes H_{N}$ thereof (with spin operators $\widehat{S^{\alpha}_{i}}$, but the \emph{same} random variables $(g_{ij})_{1\leq i<j\leq N}$ and parameters $\bb$, $\vv$) then, in the spin-flip process representation, takes the form
\begin{equation}\label{Gibbs_expectation}
\frac{\big(Z_{N}\big)^{2}\llangle[\big]\,(\cdot)\,\rrangle[\big]^{\!\!\otimes}_{\beta\bb}}{\big(\cosh(\beta\bb)\big)^{2N}}\ceq\sum_{s,\widehat{s}}\Big\langle\exp\Big(-\beta\int_0^1\dl{t} \big[h_{N}\big(s\sigma(t)\big)+h_{N}\big(\widehat{s}\,\widehat{\sigma}(t)\big)\big]\Big)\big(\,\cdot\,\big)\Big\rangle_{\beta\bb}^{\!\otimes}\,.
\end{equation}
Two simple examples for the expectation $\llangle[\big]\,(\cdot)\,\rrangle[\big]^{\!\!\otimes}_{\beta\bb}$, revealing the (dynamical) independence and symmetry between the original  {\scshape SK} model and its copy, are given by
\begin{equation} \label{eg1}
\llangle[\Big]\int_{0}^{1}\dl{t}\int_{0}^{1}\dl{t'} s_{i}\sigma_{i}(t)\widehat{s}_{i}\widehat{\sigma}_{i}(t')\rrangle[\Big]^{\!\!\otimes}_{\beta\bb}=\langle S^{\z}_{i}\widehat{S^{\z}_{i}}\rangle^{\otimes}=  \langle S^{\z}_{i}\rangle\langle\widehat{S^{\z}_{i}}\rangle                 =\big(\langle S^{\z}_{i}\rangle\big)^{2}=0
\end{equation}
and
\begin{equation}\label{eg2}
\llangle[\Big]\int_{0}^{1}\dl{t}\int_{0}^{1}\dl{t'} s_{i}\sigma_{i}(t)s_{j}\sigma_{j}(t)\widehat{s}_{i}\widehat{\sigma}_{i}(t')\widehat{s}_{j}\widehat{\sigma}_{j}(t')\rrangle[\Big]^{\!\!\otimes}_{\beta\bb	}=\langle S^{\z}_{i}S^{\z}_{j}\widehat{S^{\z}_{i}}\widehat{S^{\z}_{j}}\rangle^{\otimes}=\big(\langle S^{\z}_{i}S_{j}^{\z}\rangle\big)^{2}
\end{equation}
for $i,j\in\{1,\dots,N\}$. The last equality in \eqref{eg1} is due to the identities $U_{N}S^{\z}_{i}U_{N}^{*}=-S^{\z}_{i}$ and $U_{N}H_{N}U_{N}^{*}=H_{N}$ with the unitary operator $U_{N}\ceq\exp\big(\i\frac{\pi}{2}\sum_{n=1}^{N}S^{\x}_{n}\big)=(\i S^{\x})^{\otimes N}$ on $\C^{2^{N}}$.
\begin{lemma}[Controlling a generalized second moment of\,\,$Z_{N}$ by its first moment]\label{comparison}

\noindent
For any $N\geq2$, $\lambda>0$, and $a\geq0$ with $4a\lambda<1$ we have
\begin{equation}\label{Q_N_bound}
	\EE\Big[\big(Z_{N}\big)^{2}\llangle[\Big]\exp\big(2N(a-1)\lambda R_{N}\big)\rrangle[\Big]^{\!\!\otimes}_{\beta\bb}\Big]\leq\frac{\e^{-2\lambda}}{\sqrt{1-4a\lambda}}\,\big(\EE[Z_{N}]\big)^{2}\,,
\end{equation}
with the ${[0,1]}$-valued random variable $R_{N}\ceq\int_{0}^{1}\dl{t}\int_{0}^{1}\dl{t'}\big[Q_{N}\big(s\sigma(t),\widehat{s}\,\widehat{\sigma}(t')\big)\big]^{2}$, see \eqref{Q_N}.
\end{lemma}

\begin{proof}
Throughout the proof we will, without mention, repeatedly interchange the order of various integrations according to the {\scshape Fubini--Tonelli} theorem.
In a first step, we observe the following identity for the {\scshape Gauss}ian disorder average
\begin{equation}
	\e^{4\lambda}\EE\Big[\exp\Big(-\beta\int_0^1\dl{t} \big[h_{N}\big(s\sigma(t)\big)+h_{N}\big(\widehat{s}\,\widehat{\sigma}(t)\big)\big]\Big)\Big]
	={\cal Z}_{N}(\sigma){\cal Z}_{N}(\widehat{\sigma})\exp(2N\lambda R_{N})\,.\label{Z_N_secmom2}
\end{equation}
Here we have used \eqref{h_N_mean}, \eqref{h_N_cov}, \eqref{Phi_N3}, and \eqref{Phi_N1}. The left hand-side (\textsf{LHS}) of \eqref{Q_N_bound} can therefore be written as
\begin{equation}\label{R_N}
\textsf{LHS}=\e^{-4\lambda}\big(\cosh(\beta\bb)\big)^{2N}\Big\langle{\cal Z}_{N}(\sigma){\cal Z}_{N}(\widehat{\sigma})\sum\limits_{s,\widehat{s}}\exp\big(2Na\lambda R_{N}\big)\Big\rangle^{\!\otimes}_{\!\beta\bb}\,.{}
\end{equation}
In a second step, we use the {\scshape Jensen} inequality
\begin{equation}\label{jensen2}
	\exp\big(2Na\lambda R_{N}\big)\leq\int_0^1\dll{t}\int_0^1\dll{t'}\exp\Big(2Na\lambda\big[Q_{N}\big(s\sigma(t),\widehat{s}\,\widehat{\sigma}(t')\big)\big]^{2}\Big)
\end{equation}
and the linearization formula
\begin{align}
\exp\Big(2Na\lambda\big[Q_{ N}\big(s\sigma(t),\widehat{s}\,\widehat{\sigma}(t')\big)\big]^{2}\Big)&=\int_{\R}\dl{x}w_{N}(x)\exp\Big(x\sqrt{8a\lambda}\sum\limits_{i=1}^Ns_{i}\sigma_{i}(t)\widehat{s}_{i}\widehat{\sigma}_{i}(t')\Big)\label{linear}
\end{align}
with the  {\scshape Gauss}ian probability density $w_{N}$ given by $w_{N}(x)=\sqrt{N/\pi}\,\exp\big(-Nx^{2}\big)$, as in Remark\,\ref{remthm1}\ref{rem_W_N}.
By combining \eqref{R_N}, \eqref{jensen2}, and \eqref{linear} we get
\begin{align}
	\textsf{LHS}&\leq\e^{-4\lambda}\big(\cosh(\beta\bb)\big)^{2 N}\int_{\R}\dl{x}w_{N}(x)\int_0^1\dll{t}\int_0^1\dll{t'}\nonumber\\
	&\quad\times\Big\langle{\cal Z}_{N}(\sigma){\cal Z}_{N}(\widehat{\sigma})\sum_{s,\widehat{s}}\prod_{i=1}^{N}\exp\Big(x\sqrt{8a\lambda}\,s_{i}\sigma_{i}(t)\widehat{s}_{i}\,\widehat{\sigma}_{i}(t')\Big)\Big\rangle^{\!\otimes}_{\!\beta\bb}\,.\label{Z_N_secmom_upbound1}
\end{align}
By observing the identities
\begin{equation}
\sum_{s,\widehat{s}}\prod_{i=1}^{N}\exp\big(\dots\big)=\prod_{i=1}^{N}\sum_{s_{i},\widehat{s}_{i}}\exp\big(\dots\big)=\prod_{i=1}^{N}4\cosh(x\sqrt{8a\lambda})=\big(4\cosh(x\sqrt{8a\lambda})\big)^{N}
\end{equation}
and \eqref{Z_N_mean}, the inequality \eqref{Z_N_secmom_upbound1} takes the simpler form
\begin{align}
	\textsf{LHS}&\leq \e^{-4\lambda}\big(2\cosh(\beta\bb)\big)^{2 N}\label{simpler}
	\big(\big\langle{\cal Z}_{N}(\sigma)\big\rangle_{\!\beta\bb}\big)^{2}\int_{\R}\dl{x}w_{N}(x)\big(\cosh(x\sqrt{8a\lambda})\big)^{N}\\
	&=\e^{-2\lambda}\big(\EE\big[Z_{N}\big]\big)^{2}\int_{\R}\dl{x}w_{N}(x)\big(\cosh(x\sqrt{8a\lambda})\big)^{N}\,.
\end{align}
By the crude inequalities $(\cosh(y))^{N}\leq\exp(N|y|)\leq\exp(Ny)+\exp(-Ny)$ for $y\in\R$ the last integral is seen to be bounded from above by $2\exp(2Na\lambda)$ for arbitrary $a\geq0$. If $4a\lambda<1$, then it even has the $N$-independent upper bound $1/\sqrt{1-4a\lambda}$ as claimed in \eqref{Q_N_bound}. It is due to the inequality $\cosh(y)\leq \exp\big(y^{2}\!/2\big)$ mentioned at the end of the proof of Lemma\,\ref{quasi_classical_lemma}.\qed
\end{proof}
\begin{remark}
For the zero-field {\scshape SK} model equality holds in \eqref{jensen2} and hence in \eqref{simpler}, because $\bb=0$ implies $\sigma_{i}(t)=1$ for all $t\in{\lbrack0,1\rbrack}$ and all $i\in\{1,\dots,N\}$, see Remark \ref{remthm1}\,\ref{zero-field}. 
\end{remark}
Inequality \eqref{Q_N_bound} will be applied with a suitable  $a > 1$ in the proof of Corollary\,\ref{no_sg} below. The choice $a=0$ leads to equality in \eqref{Q_N_bound}, see \eqref{R_N} and/or \eqref{jensen2}.The special case $a=1$, see \eqref{Z_N_secmom_upbound}, is the main ingredient for the proof of the next theorem. This theorem and its two corollaries extend two of the pioneering results of {\scshape Aizenman}, {\scshape Lebowitz}, and {\scshape Ruelle} in \cite{ALR1987} for the zero-field {\scshape SK} model, see also \cite {FZ1987,CN1995} and \cite[Ch.\,11]{T2011b}, to the present (quantum) model with a transverse field of arbitrary strength $\bb>0$.

\begin{theorem}[The macroscopic quenched free energy for weak disorder]\label{difference}

\noindent
If $4\lambda<1$, then the macroscopic limit of the quenched free energy exists and is given by that of the annealed free energy, in symbols
	\begin{equation}\label{diff_zero}
		\lim\limits_{N\to\infty}\EE[f_{N}]=\lim\limits_{N\to\infty} f_{N}^{\ann}=f^{\ann}_{\infty}\,.
	\end{equation}
	\end{theorem}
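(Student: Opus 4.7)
The plan is to combine the second-moment estimate of Lemma\,\ref{comparison} with the {\scshape Gaussian} concentration of measure for $f_{N}$, following the paradigm of {\scshape Aizenman--Lebowitz--Ruelle} \cite{ALR1987}. In a first step I would invoke the special case $\gamma=0$ of Lemma\,\ref{comparison} to obtain $\EE[(Z_{N})^{2}]\leq c\,(\EE[Z_{N}])^{2}$ with the finite constant $c\ceq\e^{-2\lambda}/\sqrt{1-4\lambda}$, where finiteness precisely requires the hypothesis $4\lambda<1$. The {\scshape Paley--Zygmund} inequality then yields, for any $\alpha\in\rbrack0,1\lbrack$,
\begin{equation*}
\PP\big(Z_{N}\geq\alpha\,\EE[Z_{N}]\big)\geq (1-\alpha)^{2}(\EE[Z_{N}])^{2}/\EE[(Z_{N})^{2}]\geq (1-\alpha)^{2}/c\eqc\delta_{\alpha}>0,
\end{equation*}
so that on an event of probability at least the $N$-independent $\delta_{\alpha}$ one has $f_{N}\leq f^{\ann}_{N}+\ln(1/\alpha)/(\beta N)$.

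In a second step I would establish sub-{\scshape Gaussian} concentration of $f_{N}$ around $\EE[f_{N}]$ by viewing $f_{N}$ as a {\scshape Lipschitz} function of the $N(N-1)/2$ independent standard {\scshape Gaussian} disorder variables $(g_{ij})$. The cyclic invariance of the trace gives $\partial f_{N}/\partial g_{ij}=-(\vv/N^{3/2})\langle S^{\z}_{i}S^{\z}_{j}\rangle$ with $|\langle S^{\z}_{i}S^{\z}_{j}\rangle|\leq 1$, so the {\scshape Euclidean} {\scshape Lipschitz} constant of $f_{N}$ is at most $\vv/\sqrt{2N}$. The {\scshape Borell--Tsirelson--Ibragimov--Sudakov} {\scshape Gaussian} concentration inequality then yields
\begin{equation*}
\PP\big(|f_{N}-\EE[f_{N}]|>t\big)\leq 2\e^{-Nt^{2}/\vv^{2}}\qquad(t>0).
\end{equation*}

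Finally, I would combine the two ingredients: for any fixed $t>0$ and $N$ so large that $2\e^{-Nt^{2}/\vv^{2}}<\delta_{\alpha}$, the concentration event and the {\scshape Paley--Zygmund} event are forced to intersect, yielding some realization of the disorder with simultaneously $|f_{N}-\EE[f_{N}]|\leq t$ and $f_{N}\leq f^{\ann}_{N}+\ln(1/\alpha)/(\beta N)$. This gives $\EE[f_{N}]\leq f^{\ann}_{N}+\ln(1/\alpha)/(\beta N)+t$. Letting first $N\to\infty$ (using $\lim_{N}f^{\ann}_{N}=f^{\ann}_{\infty}$ from Theorem\,\ref{f_ann_exist}) and then $t\downarrow 0$ yields $\limsup_{N}\EE[f_{N}]\leq f^{\ann}_{\infty}$, and combined with the reverse inequality $\EE[f_{N}]\geq f^{\ann}_{N}$ from \eqref{jensen} this establishes \eqref{diff_zero}. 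The one delicate point will be the {\scshape Lipschitz} estimate of the second step: although $\partial H_{N}/\partial g_{ij}$ and $H_{N}$ generally do not commute, the cyclic invariance of the trace makes differentiation of $Z_{N}$ transparent and produces a thermal {\scshape Gibbs} expectation of $S^{\z}_{i}S^{\z}_{j}$, bounded in modulus by the operator norm $\|S^{\z}_{i}S^{\z}_{j}\|=1$.
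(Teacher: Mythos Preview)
Your proposal is correct and follows essentially the same route as the paper's own proof: the second-moment bound from Lemma\,\ref{comparison} (special case $\gamma=0$) feeds the {\scshape Paley--Zygmund} inequality, and this is combined with {\scshape Gaussian} concentration of $f_{N}$ to force the quenched and annealed free energies together. The only organizational differences are cosmetic: the paper packages the concentration step as Lemma\,\ref{gc} (proved via the {\scshape Jensen--Peierls--Bogolyubov} inequality rather than your direct gradient computation, though both yield the same {\scshape Lipschitz} constant up to harmless factors), fixes $\alpha=1/2$, and uses inclusion--exclusion to exhibit an explicit null sequence $\gamma_{N}=\ln(2)/(\beta N)+\delta/\sqrt{N}$ bounding $\EE[f_{N}]-f^{\ann}_{N}$ for every $N$, whereas you take $N\to\infty$ first and then $t\downarrow0$.
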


\begin{proof}
By Theorem\,\ref{f_ann_exist} it is sufficient to show that the (positive) difference $\Delta_N\ceq\EE[f_N]- f^{\ann}_N$ tends to 0 as $N\to\infty$. In order to do so we adopt the so-called second-moment method as applied in \cite[Ch.\,11]{T2011b} to the zero-field {\scshape SK} model. For this method to work we build on the large-deviation estimate of Lemma\,\ref{gc} in Appendix\,\ref{wsa} and on the elementary {\scshape Paley}--{\scshape Zygmund} inequality \cite{PZ1932} (see also \cite[Lem.\,4.1]{K2002})
	\begin{equation} \label{pz}
		 (1-q)^{2}\frac{\big(\EE[X]\big)^{2}}{\EE\big[X^{2}\big]}\leq \PP\big\{X\geq q\EE[X]\big\}
	\end{equation}
for any $[0,\infty[$-valued random variable $X$ with $\EE[X]\in{\rbrack0,\infty\lbrack}$\, and for any $q\in{\lbrack0,1\rbrack}$. Here $\PP$ denotes the probability measure underlying the (disorder) expectation $\EE$.

We begin by rewriting the given (non-random) difference as the sum of two random differences
\begin{equation}
	0\leq\Delta_N=f_N-f^{\ann}_N+\EE[f_N]-f_N\leq f_N-f^{\ann}_N+\big|f_N-\EE[f_N]\big|\label{delta1}\,.
\end{equation}
Next we show that there exist constants $\varepsilon>0$ (independent of $N$) and $\gamma_N>0$ (with $\gamma_N\downarrow 0$ as $N\to\infty$) such that the probability of finding the right-hand side of \eqref{delta1} to be smaller than $\gamma_N$, is larger than $\varepsilon$. This then yields $\Delta_N\leq\gamma_N$ and hence $\lim_{N\to\infty}\Delta_N=0$.

In fact, with an (initially) arbitrary energy $\delta>0$ we estimate as follows:
\begin{align}
	1/(4c)-&2\exp\big(-\delta^2/(2\vv^2)\big)\nonumber\\
\leq&\,\PP\big\{f_N-f^{\ann}_N\leq\ln(2)/(\beta N)\big\}+\PP\big\{\big|f_N-\EE[f_N]\big|\leq\delta/\sqrt{N}\big\}-1\label{pzwsa}\\
\leq&\,\PP\big\{f_N-f^{\ann}_N\leq\ln(2)/(\beta N)\T{ and }\big|f_N-\EE[f_N]\big|\leq\delta/\sqrt{N}\big\}\label{inex}\\
\leq&\,\PP\big\{f_N-f^{\ann}_N+\big|f_N-\EE[f_N]\big|\leq\ln(2)/(\beta N)+\delta/\sqrt{N}\big\}\,.\label{subset}
\end{align}
Here \eqref{pzwsa} is due to \eqref{pz} with $X=Z_{N}$ and $q=1/2$, combined with \eqref{Z_N_secmom_upbound}, and due to the large-deviation estimate \eqref{gc_e} in Appendix\,\ref{wsa} using $N-1<N$ and replacing $\delta$ by $\beta\delta/\sqrt{N}$. For \eqref{inex} we have used the inclusion-exclusion formula for two sets/events and the fact that probabilities do not exceed the value $1$. Inequality\,\eqref{subset} is just the monotonicity of (probability) measures. Finally we choose $\delta$ so large that $\varepsilon\ceq1/(4c)-2\exp\big(-\delta^2/(2\vv^2)\big)>0$ and put $\gamma_N\ceq\ln(2)/(\beta N)+\delta/\sqrt{N}$.\qed
\end{proof}
Theorem\,\ref{difference} has two important consequences.
\begin{corollary}[Absence of spin-glass order for weak disorder]\label{no_sg}

\noindent
If $4\lambda<1$, then we have
	\begin{equation}\label{absence}
		\lim_{N\to\infty}\EE\big[\big(\langle S^{\z}_{1} S^{\z}_{2}\rangle\big)^{2}\big]=0\,.
	\end{equation}
\end{corollary}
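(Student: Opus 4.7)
The plan is to control the quantity $\llangle Y_N\rrangle^{\!\otimes}$ with $Y_N\ceq\int_0^1\dll{t}\int_0^1\dll{t'}\big[Q_N\big(s\sigma(t),\widehat{s}\,\widehat{\sigma}(t')\big)\big]^{2}\in{\lbrack0,1\rbrack}$. By \eqref{eg2} and the exchangeability of the disorder under permutations of spin indices,
\[
\EE\big[\llangle Y_N\rrangle^{\!\otimes}\big]=\frac{1}{N^{2}}\sum_{i,j=1}^{N}\EE\big[\langle S_i^{\z}S_j^{\z}\rangle^{2}\big]=\frac{1}{N}+\frac{N-1}{N}\EE\big[\langle S_1^{\z}S_2^{\z}\rangle^{2}\big]\,,
\]
so the corollary reduces to showing $\lim_{N\to\infty}\EE\big[\llangle Y_N\rrangle^{\!\otimes}\big]=0$.

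Since $4\lambda<1$, I fix some $\gamma>0$ with $4(\lambda+\gamma)<1$. The {\scshape Jensen} inequality $\llangle\e^{2N\gamma Y_N}\rrangle^{\!\otimes}\geq \e^{2N\gamma\llangle Y_N\rrangle^{\!\otimes}}$ combined with Lemma\,\ref{comparison} yields
\[
\EE\big[Z_N^{2}\,\e^{2N\gamma\llangle Y_N\rrangle^{\!\otimes}}\big]\leq A_\gamma\ceq\frac{\e^{-2\lambda}\big(\EE[Z_N]\big)^{2}}{\sqrt{1-4(\lambda+\gamma)}}\,.
\]
From the proof of Theorem\,\ref{difference} one already knows the rate $\Delta_N=\EE[f_N]-f^{\ann}_N\leq \ln(2)/(\beta N)+\delta_0/\sqrt{N}$ for some fixed $\delta_0>0$, and the {\scshape Gaussian} concentration Lemma\,\ref{gc} provides $\PP\{|f_N-\EE[f_N]|\geq \delta/\sqrt{N}\}\leq 2\e^{-\delta^{2}/(2\vv^{2})}$ for every $\delta>0$. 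Combining these, there is some constant $C>0$ such that, for all $N$ sufficiently large and all $\delta>\delta_0$, the event $\{Z_N\geq \EE[Z_N]\,\e^{-\beta(C+\delta)\sqrt{N}}\}$ has probability at least $1-2\e^{-\delta^{2}/(2\vv^{2})}$. Inserting this lower bound on $Z_N$ into the preceding inequality and applying {\scshape Markov}'s inequality with respect to the disorder measure, I obtain
\[
\PP\big\{\llangle Y_N\rrangle^{\!\otimes}\geq t\big\}\leq 2\e^{-\delta^{2}/(2\vv^{2})}+\frac{\e^{-2\lambda}}{\sqrt{1-4(\lambda+\gamma)}}\exp\!\big(2\beta(C+\delta)\sqrt{N}-2N\gamma t\big)
\]
for any $t>0$. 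Choosing $\delta=\delta_N\ceq(\ln N)^{1/2}$ and $t=t_N\ceq(\ln N)/\sqrt{N}$ makes both terms on the right vanish as $N\to\infty$, so $\llangle Y_N\rrangle^{\!\otimes}\to 0$ in probability, and since $\llangle Y_N\rrangle^{\!\otimes}\in{\lbrack0,1\rbrack}$ dominated convergence then yields $\EE[\llangle Y_N\rrangle^{\!\otimes}]\to 0$.

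The main obstacle is the bridge between the second-moment-type estimate of Lemma\,\ref{comparison}, which weights $\e^{2N\gamma\llangle Y_N\rrangle^{\!\otimes}}$ by the random factor $Z_N^{2}$, and the \emph{un-weighted} tail probability of $\llangle Y_N\rrangle^{\!\otimes}$. A direct {\scshape Paley--Zygmund} argument on $Z_N$ alone would only furnish $Z_N\geq\EE[Z_N]/2$ on a set of fixed probability $1/(4c)$, which is not enough to deduce $L^{1}$-vanishing of $\llangle Y_N\rrangle^{\!\otimes}$. The sharper ingredient needed is that $Z_N$ lies within a factor $\e^{\pm O(\sqrt{N}\,(\ln N)^{1/2})}$ of $\EE[Z_N]$ with probability tending to $1$; this forces the threshold $t_N$ in the tail bound to exceed $(\ln N)^{1/2}/\sqrt{N}$, but any such $t_N\to 0$ is enough to complete the argument.
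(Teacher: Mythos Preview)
Your argument is correct, but it is considerably more elaborate than necessary. The paper's proof avoids the detour through concentration, tail bounds, and convergence in probability by applying the {\scshape Jensen} inequality one level higher: instead of using $\llangle \e^{2N\gamma Y_N}\rrangle^{\!\otimes}\geq \e^{2N\gamma\llangle Y_N\rrangle^{\!\otimes}}$ and then fighting the random weight $Z_N^{2}$, one observes that $\EE\big[\llangle\,(\cdot)\,\rrangle^{\!\otimes}\big]$ is itself a probability measure (since $Z_N^{2}$ can be pulled inside $\llangle\,\cdot\,\rrangle^{\!\otimes}$). Applying $\ln\EE\big[\llangle X\rrangle^{\!\otimes}\big]\geq\EE\big[\llangle\ln X\rrangle^{\!\otimes}\big]$ to $X=Z_N^{2}\e^{2N\gamma Y_N}$ and combining with Lemma\,\ref{comparison} gives in one stroke
\[
\gamma\,\EE\big[\langle S_1^{\z}S_2^{\z}\rangle^{2}\big]\leq\frac{N}{N-1}\big(\EE[\beta f_N]-\beta f_N^{\ann}\big)+\frac{C_{\lambda,\gamma}}{N-1}\,,
\]
whence Theorem\,\ref{difference} finishes immediately. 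Your route buys nothing extra here; the single clean Jensen step subsumes the second-moment bookkeeping you carry out by hand. What your version does illustrate is that the result is robust: even a cruder handling of the $Z_N^{2}$ weight, via concentration rather than Jensen, still closes the argument.
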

\begin{proof}
By applying the {\scshape Jensen} inequality to the left-hand side of \eqref{Q_N_bound} with respect to the joint expectation $\EE[\llangle\,(\cdot)\,\rrangle^{\!\!\otimes}_{\beta\bb}]$ we obtain (for any $N\geq2$)
\begin{align}
	\ln\Big(&\EE\Big[\big(Z_{N}\big)^{2}\llangle[\big]\exp\big(2N(a-1)\lambda R_{N}\big)\rrangle[\big]^{\!\!\otimes}_{\beta\bb}\Big]\Big)-2\EE\big[\ln(Z_{N})\big]-2(a-1)\lambda\nonumber\\
	&\geq2N(a-1)\lambda\EE\big[\llangle R_{N}\rrangle^{\!\!\otimes}_{\beta\bb}\big]-2(a-1)\lambda\label{LHS_inequality}\\
	&=2(N-1)(a-1)\lambda\EE\Big[\llangle[\Big]\int_{0}^{1}\dl{t}\int_{0}^{1}\dl{t'}s_{1}\sigma_{1}(t)\widehat{s}_{1}\widehat{\sigma}_{1}(t')s_{2}\sigma_{2}(t)\widehat{s}_{2}\widehat{\sigma}_{2}(t')\rrangle[\Big]^{\!\!\otimes}_{\beta\bb}\Big]\label{order_parameter1}\\
	&=2(N-1)(a-1)\lambda\EE\big[\big(\langle S^{\z}_{1}S^{\z}_{2}\rangle\big)^{2}\big]\,.\label{order_parameter2}
\end{align}
For \eqref{order_parameter1} we have used spin-index symmetry and for \eqref{order_parameter2} we refer to the example \eqref{eg2}.
By combining this with \eqref{Q_N_bound} we get 
\begin{equation}\label{combi}
	(a-1)\lambda\EE\big[\big(\langle S^{\z}_{1} S^{\z}_{2}\rangle\big)^{2}\big]\leq \frac{N}{N-1}\big(\EE[\beta f_{N}]-\beta f^{\ann}_{N}\big)-\frac{4a\lambda+\ln\big(1-4a\lambda\big)}{4(N-1)}
\end{equation}
under the assumption $4a\lambda\in{\lbrack0,1\lbrack}$ of Lemma\,\ref{comparison}. For given $4\lambda\in {]0,1[}$ we now choose an arbitrary $a\in{\rbrack1,1/(4\lambda)\lbrack}\,\neq\,\emptyset$. Then the claim \eqref{absence} follows from \eqref{combi} by observing $\EE\big[\big(\langle S_{1}^{\z}S_{2}^{\z}\rangle\big)^{2}\big]\geq 0$ and Theorem\,\ref{difference}.\qed
\end{proof}
\begin{remark}\label{rem_order_parameter}
Following \cite{ALR1987}, see also \cite{PS1991,WB2004}, the left-hand side of \eqref{absence} is the mean of the \emph{spin-glass order para\-meter} in the macroscopic limit, because the pre-limit $\EE\big[\big(\langle S^{\z}_{1} S^{\z}_{2}\rangle\big)^{2}\big]$ is, by spin-index symmetry, identical to the disorder average of the ${\lbrack0,1\rbrack}$-valued random variable
\begin{equation}
	q_{N}\ceq\frac{2}{N(N-1)}\sum\limits_{1\leq i<j\leq N}\big(\langle S^{\z}_{i}S^{\z}_{j}\rangle\big)^{2}=\frac{N}{N-1}\Big\langle\Big(\frac{1}{N}\sum_{i=1}^{N}S^{\z}_{i}\widehat{S}^{\z}_{i}\Big)^{2}\Big\rangle^{\!\!\otimes}-\frac{1}{N-1}\, ,
\end{equation}
using the (squared) quantum analog of the overlap \eqref{Q_N} and its {\scshape Gibbs} expectation $\big\langle(\,\cdot\,)\big\rangle^{\!\!\otimes}$  induced by the model \eqref{H_N} upon duplication. In the spin-flip-process representation this identity takes the form \eqref{order_parameter1}  (for  $a\neq 1$). 
By $q_{N}^2\leq q_{N}$ and Corollary\,\ref{no_sg} also the variance $\EE\big[\big(q_{N}-\EE[q_{N}]\big)^2\big]$ of $q_{N}$ is seen to vanish as $N\to\infty$. Since $|q_{N}-\EE[q_{N}]|\leq 1$, the dominated-convergence theorem establishes the $\PP$-almost-sure relation $\lim_{N\to\infty}q_{N}=0$, that is, the self-averaging property of the sequence $(q_{N})_{N\geq 2}$. In particular, the distribution of $q_{N}$ converges weakly to the {\scshape Dirac} distribution at $0\in\R$ as $N\to\infty$ if $4\lambda <1$, confer \cite[Thm.\,5.1]{B1996}. 
In the physics literature this asymptotic concentration is often interpreted as absence of replica-symmetry breaking for high temperatures.
\end{remark} 
\begin{corollary}[The macroscopic free energy for weak disorder]\label {weak_disorder_free_energy}

\noindent
If $4\lambda<1$, then the sequence of random variables $(f_N)_{N\geq2}$ defined by \eqref{f_N} converges almost surely to the macroscopic limit \eqref{f_ann_inf} of the annealed free energy, in symbols
	\begin{equation}\label{free_energy_limit}
		\lim\limits_{N\to\infty}f_{N}=\lim\limits_{N\to\infty}f_{N}^{\ann}=f^{\ann}_{\infty}\qquad (\PP\T{-almost surely}).
	\end{equation}
\end{corollary}
\begin{proof}
	By Theorem\,\ref{f_ann_exist} it sufficies to show that $\lim_{N\to\infty}|f_N-f^{\ann}_N|=0$, almost surely. By the triangle inequality and by \eqref{jensen} we have
	\begin{equation}\label{triangle}
	|f_N-f^{\ann}_N|\leq \EE[f_N]-f^{\ann}_N+\big|f_N-\EE[f_N]\big|
	\end{equation}
By Theorem\,\ref{difference} the first difference on the right-hand side tends to $0$ as $N\to\infty$. Moreover, the large-deviation estimate \eqref{gc_e} in Appendix\,\ref{wsa} implies the summability
\begin{equation}\label{summability}
\sum\limits_{N=2}^{\infty}\PP\big\{\big|\beta f_N-\EE[\beta f_N]\big|> \delta\big\}\leq \frac{2a^{2}}{1-a}<\infty\,,\qquad a\ceq\e^{-\delta^{2}/(8\lambda)}
\end{equation}
for any $\delta>0$. A simple and standard application \cite[\S11, Example\,1]{B1996} of the easy part of the {\scshape Borel--Cantelli} lemma (see \cite[Lem.\,11.1]{B1996} or \cite[Thm.\,3.18]{K2002}) now shows that also the second difference in \eqref{triangle} tends to zero, $\PP$-almost surely.\qed 
\end{proof}

\begin{remark}
In the above proof we have used the fact that the summability \eqref{summability} implies the almost-sure relation $\lim_{N\to\infty}(f_{N}-\EE[f_{N}])=0$. Clearly, the summability and hence the relation hold for arbitrary $\lambda>0$. It may be dubbed as ``self-averaging in the mean'' of the sequence $(f_{N})_{N\geq 2}$. The physically indispensable \emph{self-averaging} (or \emph{ergodicity}) in the sense of the almost-sure relation $\lim_{N\to\infty}f_{N}= \lim_{N\to\infty}\EE[f_{N}]$ additionally requires the existence of one of the latter limits. 
Until now the model \eqref{H_N} seems to be the only quantum mean-field spin-glass model for which the second limit is known to exist. For arbitrary $\lambda >0$ this is due to {\scshape Crawford} \cite{C2007}. Theorem\,\ref{difference} above provides for $4\lambda<1$ a (variational) formula for the limit and therefore its existence for the weak-disorder regime as a by-product, similarly as in \cite{ALR1987} for the case $\bb=0$. In view of the complexity of the {\scshape Parisi} formula \cite{P1980a,P1980b,D1981,T2006,P2009,T2011b,P2013,AC2015}\footnote{The last three references also contain results for classical {\scshape SK} models with multi-spin interactions.}, even for vanishing longitudinal field, we conjecture a more complicated (variational) formula to hold for $4\lambda\geq 1$ and $\bb>0$. 
\end{remark}

\section{Concluding remarks and open problems}\label{conclusion}
The present paper contains the first rigorous explicit results on the thermostatics of the quantum {\scshape Sherrington--Kirkpatrick} spin-glass model \eqref{H_N} for the regime $\beta\vv<1$. Unfortunately, the opposite (and more important) regime remains not nearly as well understood as in the ``classical limit'' $\bb\downarrow 0$, at least from a rigorous point of view. Over the 35 years of research several investigators have provided stimulating and possibly correct results by approximate arguments and/or numerical methods. But for low temperatures these results are typically less reliable, for example due to the unjustified interchange of various limits and/or because of too small ``{\scshape Lie--Trotter} numbers''. Therefore one should find rigorous arguments for the actual shape of the (red) dashed line in Figure\,\ref{regionplot}. In view of Remark\,\ref{remthm3}\,\ref{landau} it is tempting to conjecture that the assertions \eqref{diff_zero}, \eqref{absence}, and \eqref{free_energy_limit} remain true under the ($\bb$-dependent) condition $\beta\vv<1/m$. This would enlarge the heavy gray region in Figure\,\ref{regionplot} slightly beyond the vertical line $\beta\vv=1$ and help to ``localize'' the critical line somewhat further. In any case, the precise determination of this line is a demanding problem.
A similar challenge is to aspire after the analog of the {\scshape Parisi} formula for the macroscopic (quenched) free energy of the quantum {\scshape SK} model \eqref{H_N}. A first step in this direction has been achieved recently by {\scshape Adhikari} and {\scshape Brennecke} \cite{AB2020}, see the end of Section\,\ref{introduction}. At present we do not know, how their variational formula reduces to our \eqref{varadhan} if $\beta\vv<1$.  Unfortunately, for $\beta\vv\geq 1$ we only have the inequalities \eqref{quasi_classical} which may be used to bound the free energy of \eqref{H_N} from below and above in terms of the zero-field {\scshape Parisi} formula. Nevertheless, it could be that the quantum analog of the {\scshape Parisi} formula is in certain respects simpler than the classical one \emph{because} of quantum fluctuations, confer \cite{RCC1989,BU1990b,MRC2018}.

\appendix

\section{The positivity of certain {\scshape Poisson}-process covariances}\label{poisson_appendix}

For the proofs of \eqref{mu}, \eqref{multipoint_inequality}, and related facts it is convenient to consider {\scshape Poisson} (point) processes being more general than the one used in the main text (see, for example, \cite{K2002,LP2018,K1993}). A {\scshape Poisson} process in a general sigma-finite measure space $(\Gamma,{\cal A},\rho)$ is a random measure $\nu$ on $(\Gamma,{\cal A})$. The distribution of $\nu$ is uniquely defined, in terms of the (positive) measure $\rho$, by the elegant and powerful formula 
\begin{equation}\label{campbell}
	\Big\langle\exp\Big(-\int_{\Gamma}\nu(\dll{x})f(x)\Big)\Big\rangle=\exp\Big(-\int_{\Gamma}\rho(\dll{x})\big(1-\e^{-f(x)}\big)\Big)
\end{equation}
for its {\scshape Laplace} functional, which dates back to {\scshape Campbell} \cite{C1909}. Here and in Appendix \ref{pfk_proof} the angular brackets $\langle\,(\cdot)\,\rangle$ denote the expectation with respect to the probability measure steering the randomness of $\nu$ and $f:\Gamma\to{[0,\infty[}$ is an arbitrary measurable function into the positive half-line. For $f=a\chi_{A}$ with $a\in {[0,\infty[}$, $A\in\cal{A}$, and  $\rho(A)<\infty$ the right-hand side of \eqref{campbell} equals the {\scshape Laplace} transform of the {\scshape Poisson} distribution with mean $\rho(A)$. Hence the random variable $\nu(A)$ is $\N_{0}$-valued and {\scshape Poisson} distributed with mean $\langle\nu(A)\rangle=\rho(A)$. In words, the mean number of {\scshape Poisson} points lying in $A$ equals its  $\rho$-measure. By choosing $f=\sum_{j=1}^{m}a_{j}\chi_{A_{j}}$ it also follows from \eqref{campbell} that the random variables $\nu(A_{1}),\dots,\nu(A_{m})$ are independent for (pairwise) disjoint sets $A_{1},\dots,A_{m}\in{\cal A}$ of finite $\rho$-measures for all  $m\in\N$. Finally, we note that \eqref{campbell} remains valid when $f$ is replaced by the imaginary function $\i f$ with $f:\Gamma\to\R$ obeying $\int_{\Gamma}\rho(\dll{x})\min\{|f(x)|,1\}<\infty$. 

In the main text we are mainly interested in $\{-1,1\}$-valued random variables corresponding to $\sigma(A)\ceq(-1)^{\nu(A)}$ with $A\in{\cal A}$ obeying $\rho(A)<\infty$. By choosing $f=\i\pi\sum_{j=1}^{m}\chi_{A_{j}}$ with an arbitrary collection of $m\in\N$ such sets, $\widehat{A}\ceq\{A_{1},\dots,A_{m}\}\subset{\cal A}$, we get from \eqref{campbell}
\begin{equation}\label{prod_exp}
		\big\langle\sigma(\widehat{A})\big\rangle=\exp\Big(-\int_{\Gamma}\rho(\dll{x})\big(1-\tau_{\widehat{A}}(x)\big)\Big)
\end{equation}
in terms of the $\{-1,1\}$-valued products $\sigma(\widehat{A})\ceq\prod_{j=1}^{m}\sigma(A_{j})$ and  $\tau_{\widehat{A}}\ceq\prod_{j=1}^{m}\big(1-2\chi_{A_{j}}\big)$. In particular, we have $\langle\sigma(A_{j})\rangle=\exp\big(-2\rho(A_{j})\big)$ by choosing $A_{k}=\emptyset$ for all $k\neq j$. If $\widehat{B}\ceq\{B_{1},\dots,B_{n}\}\subset{\cal A}$ is another arbitrary collection of $n\in\N$ such sets, we obtain the positive covariance
\begin{equation}\label{uncond_positivity}
	\big\langle\sigma(\widehat{A})\sigma(\widehat{B})\big\rangle\geq\big\langle\sigma(\widehat{A})\big\rangle\big\langle\sigma(\widehat{B})\big\rangle
\end{equation}
by \eqref{prod_exp}, the pointwise inequality $\tau_{\widehat{A}}\tau_{\widehat{B}}\geq\tau_{\widehat{A}}+\tau_{\widehat{B}}-1$, and the functional equation of the exponential. A simple consequence of \eqref{uncond_positivity} by iteration is
\begin{equation}
	\big\langle\sigma(\widehat{A})\big\rangle\geq\prod_{j=1}^{m}\big\langle\sigma(A_{j})\big\rangle=\exp\Big(-2\sum_{j=1}^{m}\rho(A_{j})\Big)>0\,.
\end{equation}

As in the main text we are going to introduce a conditional {\scshape Poisson} expectation. For a fixed $\Lambda\subseteq\Gamma$ with $\Lambda\in{\cal A}$ and $\rho(\Lambda)<\infty$ we write the two {\scshape Kronecker} deltas $\delta_{\sigma(\Lambda),\pm1}$ as $\delta_{\sigma(\Lambda),\pm1}=\big(1\pm\sigma(\Lambda)\big)/2$. The {\scshape Poisson} expectation conditional on $\sigma(\Lambda)=1$, equivalently on even $\nu(\Lambda)$, can therefore be written as
\begin{equation}\label{cond_expectation}
	\big\langle\,(\,\cdot\,)\,\big\rangle_{\!\Lambda}\ceq\frac{\langle\delta_{\sigma(\Lambda),1}\,(\,\cdot\,)\rangle}{\langle\delta_{\sigma(\Lambda),1}\rangle}=\frac{\langle\,(\,\cdot\,)\,\rangle+\langle\sigma(\Lambda)\,(\,\cdot\,)\rangle}{1+\e^{-2\rho(\Lambda)}}\,.
\end{equation}
By \eqref{cond_expectation} and  \eqref{uncond_positivity} we immediately see that $\big\langle\sigma(\widehat{A})\big\rangle_{\!\Lambda}\geq\big\langle\sigma(\widehat{A})\big\rangle$. The ``conditional analog" of \eqref{uncond_positivity} is 
\begin{lemma}
	\begin{equation}\label{cond_positivity}
		\big\langle\sigma(\widehat{A})\sigma(\widehat{B})\big\rangle_{\!\Lambda}\geq\big\langle\sigma(\widehat{A})\big\rangle_{\!\Lambda}\big\langle\sigma(\widehat{B})\big\rangle_{\!\Lambda}\,.
	\end{equation}
\end{lemma}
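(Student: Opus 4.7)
The plan is to first use \eqref{cond_expectation} to turn the conditional inequality into a scalar one between unconditional expectations. Abbreviating $\alpha\ceq\langle\sigma(\widehat A)\rangle$, $\alpha'\ceq\langle\sigma(\widehat A)\sigma(\Lambda)\rangle=\langle\sigma(\widehat A\cup\{\Lambda\})\rangle$, defining $\beta,\beta',\gamma,\gamma'$ analogously, and setting $l\ceq\langle\sigma(\Lambda)\rangle=e^{-2\rho(\Lambda)}$, the claim \eqref{cond_positivity} is equivalent to
\begin{equation*}
	(1+l)(\gamma+\gamma')\,\geq\,(\alpha+\alpha')(\beta+\beta')\,.
\end{equation*}

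Next I would exploit the independence of $\nu|_\Lambda$ and $\nu|_{\Lambda^c}$ to reduce to the case $\Gamma=\Lambda$. Writing $\sigma(\widehat A)=\sigma(\widehat A\cap\Lambda)\,\sigma(\widehat A\cap\Lambda^c)$ (and analogously for $\widehat B$), each of the three conditional expectations factors into a $\Lambda$-piece and a $\Lambda^c$-piece, the conditioning touching only the $\Lambda$-piece. Hence the ratio $\langle\sigma(\widehat A)\sigma(\widehat B)\rangle_\Lambda/\bigl(\langle\sigma(\widehat A)\rangle_\Lambda\langle\sigma(\widehat B)\rangle_\Lambda\bigr)$ splits multiplicatively; its $\Lambda^c$-factor is $\geq 1$ by the already-established \eqref{uncond_positivity}, so it suffices to prove the claim in the reduced setting where all $A_j,B_k\subseteq\Lambda=\Gamma$.

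In this reduced setting $\tau_\Lambda\equiv-1$ on $\Gamma$, so $\tau_{\widehat A\cup\{\Lambda\}}=-\tau_{\widehat A}$ and likewise for $\widehat B$; \eqref{prod_exp} then immediately yields the clean multiplicative constraints $\alpha\alpha'=\beta\beta'=\gamma\gamma'=l$. Parameterizing $\alpha=\sqrt{l}\,e^a$, $\beta=\sqrt{l}\,e^b$, $\gamma=\sqrt{l}\,e^c$ gives $\alpha+\alpha'=2\sqrt{l}\cosh(a)$ (and similarly for the others) and $1+l=2\sqrt{l}\cosh(r)$ with $r\ceq\rho(\Lambda)$, so the scalar inequality collapses to the purely elementary
\begin{equation*}
	\cosh(r)\cosh(c)\,\geq\,\cosh(a)\cosh(b)\,,
\end{equation*}
with $a=\int_\Lambda\tau_{\widehat A}\dl{\rho}$, $b=\int_\Lambda\tau_{\widehat B}\dl{\rho}$, $c=\int_\Lambda\tau_{\widehat A}\tau_{\widehat B}\dl{\rho}$.

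Finally I would verify this inequality by partitioning $\Lambda=\Lambda_{++}\sqcup\Lambda_{+-}\sqcup\Lambda_{-+}\sqcup\Lambda_{--}$ according to the joint sign of $(\tau_{\widehat A},\tau_{\widehat B})$ and setting $\rho_{\epsilon\epsilon'}\ceq\rho(\Lambda_{\epsilon\epsilon'})\geq 0$, so that $r=\sum_{\epsilon,\epsilon'}\rho_{\epsilon\epsilon'}$ and $a,b,c$ are the obvious signed sums. The product-to-sum identity $2\cosh(x)\cosh(y)=\cosh(x+y)+\cosh(x-y)$ expresses both sides in terms of $\cosh(2\rho_{++}\pm 2\rho_{--})$ and $\cosh(2\rho_{+-}\pm 2\rho_{-+})$, and the conclusion falls out termwise from $\cosh(x+y)\geq\cosh(x-y)$ for $x,y\geq 0$ (convexity and evenness of $\cosh$). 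The main conceptual obstacle is spotting the three-way multiplicative constraint $\alpha\alpha'=\beta\beta'=\gamma\gamma'=l$, which is what rewrites the FKG-type inequality as a symmetric trigonometric one; without it, a direct combination of the four natural instances of \eqref{uncond_positivity} yields only $2(\gamma+\gamma')\geq(\alpha+\alpha')(\beta+\beta')$, with the wrong prefactor $2>1+l$.
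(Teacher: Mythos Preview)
Your proof is correct and lands on the very same scalar inequality $\cosh(r)\cosh(c)\geq\cosh(a)\cosh(b)$ that the paper proves, via the same product-to-sum identity; your sign-partition of $\Lambda$ is just an explicit rendering of the paper's pointwise identity $1\pm\tau_{\widehat A}\tau_{\widehat B}=|\tau_{\widehat A}\pm\tau_{\widehat B}|$ combined with the triangle inequality. The only organizational difference is that the paper reaches the cosh inequality directly from the closed-form expression \eqref{prod_exp_cond} (which already separates the $\Lambda$- and $\Lambda^c$-contributions), whereas you first reduce to $\Gamma=\Lambda$ by independence and a black-box appeal to \eqref{uncond_positivity}; both routes are equivalent and equally short.
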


\begin{proof}
From \eqref{prod_exp} and  \eqref{cond_expectation} we get the ``conditional analog" of \eqref{prod_exp}
\begin{equation}\label{prod_exp_cond}
\big\langle\sigma(\widehat{A})\big\rangle_{\!\Lambda}=\frac{\cosh\big(I_{\Lambda}(\widehat{A})\big)}{\cosh\big(\rho(\Lambda)\big)}\exp\Big(-\int_{\Gamma\setminus\Lambda}\rho(\dll{x})\big(1-\tau_{\widehat{A}}(x)\big)\Big)
\end{equation}
with $I_{\Lambda}(\widehat{A})\ceq\int_{\Lambda}\rho(\dll{x})\tau_{\widehat{A}}(x)$. Corresponding formulas hold for $ \big\langle\sigma(\widehat{B})\big\rangle_{\!\Lambda}\,$
and $\big\langle\sigma(\widehat{A})\sigma(\widehat{B})\big\rangle_{\!\Lambda}$. In the latter case $\tau_{\widehat{A}}$ has to be replaced with the product $\tau_{\widehat{A}}\tau_{\widehat{B}}$ and $I_{\Lambda}(\widehat{A})$ with $I_{\Lambda}(\widehat{A},\widehat{B})\ceq\int_{\Lambda}\rho(\dll{x})\big(1-\tau_{\widehat{A}}(x)\tau_{\widehat{B}}(x)\big)$.
For the proof of \eqref{cond_positivity} we firstly employ again the above inequality $\tau_{\widehat{A}}\tau_{\widehat{B}}\geq\tau_{\widehat{A}}+\tau_{\widehat{B}}-1$. Then it remains to show that $\cosh\big(\rho(\Lambda)\big)\cosh\big(I_{\Lambda}(\widehat{A},\widehat{B})\big)\geq\cosh\big(I_{\Lambda}(\widehat{A})\big)\cosh\big(I_{\Lambda}(\widehat{B})\big)$. To this end, we use the hyperbolic relation $2\cosh\big(\rho(\Lambda)\big)\cosh\big(I_{\Lambda}(\widehat{A},\widehat{B})\big)=\cosh\big(\rho(\Lambda)+I_{\Lambda}(\widehat{A},\widehat{B})\big)+\cosh\big(\rho(\Lambda)-I_{\Lambda}(\widehat{A},\widehat{B})\big)$. Therefore the two inequalities $\rho(\Lambda)\pm I_{\Lambda}(\widehat{A},\widehat{B})\geq|I_{\Lambda}(\widehat{A})\pm I_{\Lambda}(\widehat{B})|$, based on the pointwise equalities $1\pm\tau_{\widehat{A}}\tau_{\widehat{B}}=|\tau_{\widehat{A}}\pm\tau_{\widehat{B}}|$ and the triangle inequality, combined with the relation $\cosh(|y|)=\cosh(y)$ for $y\in\R$ complete the proof.\qed
\end{proof}
\begin{remark}\label{appendix_A_remarks}

\renewcommand{\labelenumi}{(\roman{enumi})}
\renewcommand{\theenumi}{(\roman{enumi})}

\begin{enumerate}
	\item If $A_{j} \subseteq\Lambda$ for all $j$, then the exponential factor in \eqref{prod_exp_cond} takes its maximum value $1$. For example, in the case of two such sets, say $A$ and $B$, we simply have 
	\begin{equation}\label{two_sets}   
		\big\langle\sigma(A)\sigma(B)\big\rangle_{\!\Lambda}=\cosh \big(\rho(\Lambda)-2\rho(A)-2\rho(B)+4\rho(A\cap B)\big)/\cosh\big(\rho(\Lambda)\big).            
	\end{equation}  
	The numerator further simplifies to $\cosh\big(\rho(\Lambda)-2\big|\rho(A)-\rho(B)\big|\big)$ if $A\subseteq B$ or $B\subseteq A$.
 	\item \label{special_case_main_text}In the main text and in Appendix\,\ref{pfk_proof} we only consider the special case corresponding to $\Gamma = {[0,\infty[}$, ${\cal A} ={\cal B}\big({[0,\infty[}\big)\ceq${\scshape Borel} sigma-algebra in ${[0,\infty[}$, $\rho=\beta\bb\times${\scshape Lebesgue} measure, and $\Lambda = {[0,1]}$. There we write ${\cal N}(t)$ and $\sigma(t)=(-1)^{{\cal N}(t)}=(-1)^{-{\cal N}(t)}$  instead of $\nu({[0,t]})$ and $\sigma ({[0,t]})$, respectively, for any $t\in {[0,\infty[}$. We also write  $\langle\,(\,\cdot\,)\,\rangle_{\beta\bb}$  instead of $\langle\,(\,\cdot\,)\,\rangle_{{[0,1]}}$. It is well-known that the stochastic process $\big\{{\cal N}(t):t\in{[0,\infty[}\big\}$ has independent and, in distribution, time-homogeneous increments ${\cal N}(t+u)-{\cal N}(u)=\nu\big({]t,t+u]}\big)$ for $u\geq0$. This implies that it is a {\scshape Markov} process, more precisely, a continuous-time homogeneous {\scshape Markov} chain with transition probabilities
	\begin{equation}
 	p_{n,n'}(t,t')\ceq\e^{-\beta\bb(t-t')}\frac{\big(\beta\bb(t-t')\big)^{n-n'}}{(n-n')!}\qquad \big(0< t'\leq t,\quad n,n'\in\N_{0},\quad n'\leq n\big).
	\end{equation}
 	Also the spin-flip process $\big\{\sigma(t):t\in{[0,\infty[}\big\}$ is such a {\scshape Markov} process. Its transition probabilities are $p_{s,s'}(t,t')\ceq\big[1+ss'\exp\big(-2\beta\bb(t-t')\big)\big]\big/2$ with $s,s'\in{\{-1,1\}}$.
	
\end{enumerate}
\end{remark}

\section{The {\scshape Poisson--Feynman--Kac} formula}\label{pfk_proof}

In this appendix we consider an independent collection of $N\in\N$ {\scshape Poisson} processes in the positive half-line ${[0,\infty[}$ with common rate $\beta\bb>0$ in the sense and notation of Remark\,\ref{appendix_A_remarks}\,\ref{special_case_main_text}. We begin with the case of a single spin. Here $\gg\in\R$ is an arbitrary parameter. The rest of the notation has been introduced in Section\,\ref{introduction}.

\begin{lemma}[Operator-valued {\scshape PFK} formula for a single spin]\label{pfk_1_lemma}

\noindent
For a single spin we have the operator identity
\begin{equation}\label{PFK_single_spin}
	\exp\big(\beta\bb S^{\x}_{i}+\beta \gg S^{\z}_{i}\big) = \e^{\beta\bb}\Big\langle\big(S^{\x}_{i}\big)^{{\cal N}_{i}(1)}\exp\Big(\beta \gg S^{\z}_{i}\int_{0}^{1}\dl{t}\sigma_{i}(t)\Big)\Big\rangle\qquad \big(i \in\{1,\dots, N\}\big)
\end{equation}	
on the $N$-spin {\scshape Hilbert} space $(\C^2)^{\otimes N}\cong\C^{2^N}$.
\end{lemma}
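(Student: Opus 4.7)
The plan is to expand both sides of \eqref{PFK_single_spin} as absolutely convergent series in the number of $S^{\x}_{i}$-factors (equivalently, in the random count ${\cal N}_{i}(1)$ on the right) and match them term by term. On the left-hand side, apply the {\scshape Dyson}--{\scshape Duhamel} perturbation expansion with ``free'' generator $A\ceq\beta\gg S^{\z}_{i}$ and ``perturbation'' $B\ceq\beta\bb S^{\x}_{i}$,
\[
	\e^{A+B}=\sum_{n=0}^{\infty}\int_{\Delta_{n}}\!\dl{t_{1}}\cdots\dl{t_{n}}\,\e^{(1-t_{n})A}\,B\,\e^{(t_{n}-t_{n-1})A}\,B\cdots B\,\e^{t_{1}A}\,,
\]
where $\Delta_{n}\ceq\{(t_{1},\dots,t_{n})\in\R^{n}:0<t_{1}<\cdots<t_{n}<1\}$ and the $n=0$ term is just $\e^{A}$. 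On the right-hand side, I partition the {\scshape Poisson} sample space by the value of ${\cal N}_{i}(1)$: the event $\{{\cal N}_{i}(1)=n\}$ has probability $\e^{-\beta\bb}(\beta\bb)^{n}/n!$, and conditional on it the ordered jump times have joint density $n!$ on $\Delta_{n}$. The prefactor $\e^{\beta\bb}$ cancels the survival exponential $\e^{-\beta\bb}$, the $n!$ cancels the ordering, so that the right-hand side becomes
\[
	\sum_{n=0}^{\infty}(\beta\bb)^{n}\int_{\Delta_{n}}\!\dl{t_{1}}\cdots\dl{t_{n}}\,(S^{\x}_{i})^{n}\,\exp\big(\beta\gg S^{\z}_{i}\,I_{n}(t_{1},\dots,t_{n})\big)\,,
\]
with $I_{n}(t_{1},\dots,t_{n})\ceq\sum_{k=0}^{n}(-1)^{k}(t_{k+1}-t_{k})$ ($t_{0}\ceq 0$, $t_{n+1}\ceq 1$); this is precisely the value of $\int_{0}^{1}\sigma_{i}(s)\,\dl{s}$ on any sample path with jumps exactly at $t_{1},\dots,t_{n}$ (recall $\sigma_{i}(0)=+1$).

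Next, I verify equality of the $n$-th integrands. The single algebraic ingredient is the anti-commutation $S^{\x}_{i}S^{\z}_{i}=-S^{\z}_{i}S^{\x}_{i}$, equivalently $S^{\x}_{i}\,\e^{aS^{\z}_{i}}=\e^{-aS^{\z}_{i}}\,S^{\x}_{i}$ for every $a\in\R$ (by Taylor expansion). Moving each of the $n$ occurrences of $B$ in the {\scshape Dyson} integrand past the interposed exponentials of $A$ to the far left flips the sign of one interval $t_{k+1}-t_{k}$ in the accumulated exponent. A short induction on $n$ collapses the factor to
\[
	(\beta\bb)^{n}(S^{\x}_{i})^{n}\exp\Big(\beta\gg S^{\z}_{i}\sum_{k=0}^{n}(-1)^{n-k}(t_{k+1}-t_{k})\Big)\,;
\]
reindexing $j=n-k$ identifies the alternating sum with $I_{n}$, exactly matching the {\scshape Poisson} integrand.

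The main --- and essentially only --- obstacle will be the sign-bookkeeping in this induction, since both the product order of the $B$'s and the initial convention $\sigma_{i}(0)=+1$ enter the computation and must be anchored consistently. All interchanges of the summations with the {\scshape Poisson} expectation and the simplex integrations are routine: each term is bounded in operator norm by $(\beta\bb)^{n}\e^{|\beta\gg|}/n!$, yielding the convergent majorant $\e^{\beta\bb+|\beta\gg|}$ so that {\scshape Fubini}--{\scshape Tonelli} applies on both sides. Once the two series are matched term by term, the identity \eqref{PFK_single_spin} follows.
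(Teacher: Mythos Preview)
Your approach is correct and constitutes a genuinely different proof from the paper's. The paper (in both its main and its alternative argument) proceeds by a semigroup method: it introduces a time parameter $u$, defines $T_{\gg}(u)\ceq\big\langle(S^{\x})^{{\cal N}(u)}\exp\big(\beta\gg S^{\z}\!\int_{0}^{u}\sigma(t)\,\dll t\big)\big\rangle$, and shows that $u\mapsto T_{\gg}(u)$ satisfies the same {\scshape Duhamel--Dyson--Phillips} integral equation (equivalently, the same first-order ODE with the same initial condition $\um$) as the semigroup $u\mapsto\exp\big(u\beta\bb(S^{\x}-\um)+u\beta\gg S^{\z}\big)$; uniqueness of the solution then yields \eqref{PFK_single_spin} at $u=1$. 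By contrast, you expand both sides as absolutely convergent series indexed by the number of $S^{\x}$-insertions and match the integrands on each simplex $\Delta_{n}$ directly, using only the anti-commutation $S^{\x}e^{aS^{\z}}=e^{-aS^{\z}}S^{\x}$ and the well-known fact that, conditional on ${\cal N}(1)=n$, the ordered jump times have density $n!$ on $\Delta_{n}$. Your route is more elementary and makes the ``sum over {\scshape Poisson} paths'' structure completely explicit, at the price of the sign bookkeeping you rightly flag; the paper's route sidesteps that bookkeeping entirely and extends to the multi-spin case of Corollary~\ref{PFK_N_corollary} by simply repeating the generator computation spin by spin.

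One small correction to your bookkeeping: moving the $n$ factors of $B=\beta\bb S^{\x}$ in the {\scshape Dyson} integrand $e^{(1-t_{n})A}B\,e^{(t_{n}-t_{n-1})A}B\cdots B\,e^{t_{1}A}$ to the far left (starting with the leftmost $B$) produces directly
\[
(\beta\bb)^{n}(S^{\x})^{n}\exp\Big(\beta\gg S^{\z}\sum_{k=0}^{n}(-1)^{k}(t_{k+1}-t_{k})\Big)=(\beta\bb)^{n}(S^{\x})^{n}\exp\big(\beta\gg S^{\z} I_{n}\big)\,,
\]
so the exponent already equals $I_{n}$ and no reindexing is needed; your stated $(-1)^{n-k}$ is off by a global factor $(-1)^{n}$ (which your reindexing does not remove for odd $n$). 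This is precisely the ``sign anchoring'' you anticipated as the only obstacle, and once fixed the term-by-term match is exact. Your dominated-convergence bound $(\beta\bb)^{n}\e^{|\beta\gg|}/n!$ is fine and justifies all interchanges.
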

\begin{proof}
It is enough to prove \eqref{PFK_single_spin} on the single-spin {\scshape Hilbert} space $\C^{2}$. So we suppress the spin index $i$. For the auxiliary operator $K_{\gg}(u,t)\ceq\exp\big(\int_{t}^{u}\,\dl{t'}\beta\gg S^{\z} \sigma(t')\big)$ with $0\leq t\leq t'\leq u$ we write
\begin{equation}\label{K_g_alt}
	K_{\gg}(u,0)=\um-\int_{0}^{u}\dl{t}\frac{\dl{}}{\dl{t}}K_{\gg}(u,t)=\um+\int_{0}^{u}\dl{t}K_{\gg}(u,t)\beta\gg S^{\z}\sigma(t)
\end{equation}
and define the operator
\begin{equation}\label{T_g_alt}
T_{\gg}(u)\ceq(S^{\x})^{{\cal N}(u)}K_{\gg}(u,0)=T_{0}(u)+\int_{0}^{u}\dl{t}(S^{\x})^{{\cal N}(u)}K_{\gg}(u,t)\beta\gg S^{\z}\sigma(t)\,.
\end{equation}
The last integrand can be rewritten as follows
\begin{equation}\label{integrand}
   (S^{\x})^{{\cal N}(u)-{\cal N}(t)}(S^{\x})^{{\cal N}(t)}K_{\gg}(u,t)\beta\gg S^{\z} \sigma(t)=(S^{\x})^{{\cal N}(u)-{\cal N}(t)}K_{\gg\sigma(t)}(u,t)\beta\gg S^{\z}(S^{\x})^{{\cal N}(t)}\,.
\end{equation}
Here we have moved $\big(S^{\x}\big)^{{\cal N}(t)}$ to the utmost right by using ${\cal N}(t)$ times the relation $S^{\x}f(S^{\z})=f(-S^{\z})S^{\x}$, based on \eqref{dirac} and the elementary spectral identity $f(A)=\um\big(f(1)+f(-1)\big)/2+A\big(f(1)-f(-1)\big)/2$ for any selfadjoint operator $A$ with $A^{2}=\um$ and any {\scshape Borel}-measurable function $f: \{-1,1\}\to\C$.
Now we assert that the {\scshape Poisson} expectation of \eqref{T_g_alt} leads to
\begin{align}\label{DDP1}
	\big\langle T_{\gg}(u)\big\rangle&=\big\langle T_{0}(u)\big\rangle + \int_{0}^{u}\dl{t}\big\langle\big(S^{\x}\big)^{{\cal N}(u)-{\cal N}(t)} K_{\gg\sigma(t)}(u,t)\big\rangle \beta\gg S^{\z}\big\langle T_{0}(t)\big\rangle\\\label{DDP2}
	&=\big\langle T_{0}(u)\big\rangle + \int_{0}^{u}\dl{t}\big\langle T_{\gg}(u-t)\big\rangle \beta\gg S^{\z}\big\langle T_{0}(t)\big\rangle\qquad (u\geq0)\,.
\end{align}
Equation\,\eqref{DDP1} relies on the fact that the increments ${\cal N}(u)-{\cal N}(t)$ as well as ${\cal N}(t')-{\cal N}(t)$ occurring in $K_{\gg\sigma(t)}(u,t)=\exp\big(\int_{t}^{u}\dl{t'}(-1)^{{\cal N}(t')-{\cal N}(t)}\beta\gg S^{\z}\big)$ are independent of ${\cal N}(t)-{\cal N}(0)= {\cal N}(t)$. For \eqref{DDP2} we recall 
that ${\cal N}(t')-{\cal N}(t)$ has the same distribution as ${\cal N}(t'-t)-{\cal N}(0)=  {\cal N}(t'-t)$ by time-homogeneity.
Since
	\begin{equation}\label{Sx_formula_alt}
		(S^{\x})^{{\cal N}(u)}=\um\delta_{\sigma(u),1}+S^{\x}\delta_{\sigma(u),-1}=\frac{1}{2}(\um+S^{\x})+\frac{1}{2}(\um-S^{\x})\sigma(u),\quad  \langle\sigma(u)\rangle= \e^{-2\beta\bb u}\,,
	\end{equation}
we see that the mapping $u\mapsto\langle T_{0}(u)\rangle=\exp\big(u\beta\bb(S^{\x}-\um)\big)$ is the ``free'' or ``unperturbed" semigroup on $\C^{2}$ (corresponding to $\gg=0$ and up to the factor $\e^{u\beta\bb}$). Consequently, the combination of \eqref{DDP1} and \eqref{DDP2} implies that $u\mapsto\langle T_{\gg}(u)\rangle$ satisfies the same {\scshape Duhamel--Dyson--Phillips} integral equation as the ``perturbed" semigroup $u\mapsto\exp\big(u\beta\bb(S^{\x}-\um)+u\beta\gg S^{\z}\big)$.
Actually, this equation is equivalent to the differential equation $\frac{\partial}{\partial u}\langle T_{\gg}(u)\rangle=\langle T_{\gg}(u)\rangle\big(\beta\bb(S^{\x}-\um)+\beta\gg S^{\z}\big)$ with the initial condition $ \langle T_{\gg}(0)\rangle=\um$. Since the solution is unique, the proof is completed by considering $\langle T_{\gg}(1)\rangle$.
\qed
\end{proof}
\begin{remark}\label{rem_pfk}

\renewcommand{\labelenumi}{(\roman{enumi})}
\renewcommand{\theenumi}{(\roman{enumi})}

	\begin{enumerate}
		\item Formula \eqref{PFK_single_spin} dates back to {\scshape Kac} \cite{K1974}. There he has not written down it explicitly, but it is the backbone of his {\scshape PFK} formula for the solution of the telegraph (or damped-wave) equation. For a modern account of this genre see \cite{KR2013} and also \cite{CD2006}.
		\item We learned the {\scshape  PFK} formula \eqref{PFK_single_spin} for a single-spin semigroup from {\scshape Gaveau} and {\scshape Schulman} \cite{GS1989} who proved it by a suitable {\scshape Lie--Trotter}(--{\scshape P.\,R.~Chernoff}) product formula. Our proof avoids time-slicing and is in the spirit of {\scshape Simon}'s ``second proof'' of the ({\scshape Wiener}--){\scshape Feynman--Kac} formula for {\scshape Schr\"odinger} semigroups \cite[Thm.\,6.1]{S2005a}, see also \cite[Sec.\,2.2]{R1994}. It easily extends to time-dependent integrable $\gg:{[0,1]}\to\R$. So does the alternative, slightly more direct, proof at the end of this appendix.
		\item\label{Laplace_transform} The PFK formula  \eqref {PFK_single_spin} is equivalent to the explicit formulas 
		\begin{equation}
			{\cal L}_{-1}(\gg)=\frac{\bb}{\ww}\sinh(\beta \ww)\,,\quad {\cal L}_{1}(\gg)=\cosh(\beta \ww)+\frac{\gg}{\ww}\sinh(\beta \ww)\,,\quad  \ww\ceq\sqrt{\bb^{2}+\gg^{2}}
		\end{equation}
		for the {\scshape Laplace} transforms of the two conditional distributions of the random variable $\beta\int_{0}^{1}\dl{t}\sigma_{i}(t)$,
		\begin{equation}
			{\cal L}_{s_{i}}(\gg)\ceq\e^{\beta\bb}\Big\langle\delta_{\sigma_{i}(1),\,s_{i}}\exp\Big(\beta \gg\int_{0}^{1}\dl{t}\sigma_{i}(t)\Big)\Big\rangle\qquad\big(s_{i}\in\{-1,1\}\,,\, i\in\{1,\dots,N\}\big)\,,
		\end{equation}
		up to a $\gg$-independent factor. This follows from \eqref{Sx_formula_alt} with $u=1$ and the special case 
		\begin{equation}\label{H}
			\exp(-\beta H)=\um\cosh(\beta\ww)-H\frac{\sinh(\beta\ww)}{\ww}\,,\quad H\ceq-\bb S^{\x}_{i}-\gg S^{\z}_{i}\,,\quad H^{2}=\um\ww^{2}\,
		\end{equation}
		of the elementary spectral identity mentioned between \eqref{integrand} and \eqref{DDP1}.
		\item Clearly, we have $\tr\exp(-\beta H)=2\cosh(\beta\ww)$ by \eqref{H}. On the other hand, by evaluating this trace in the eigenbasis of $S^{\z}_{i}$ and using 		\eqref{PFK_single_spin} we get its spin-flip representation
		\begin{equation}\label{PFK_trace}
			\tr{\e^{-\beta H}}=\e^{\beta\bb}\sum\limits_{s_{i}}\Big\langle\delta_{\sigma_{i}(1),1}\exp\Big(\beta\gg s_{i}\int_{0}^{1}\dl{t}\sigma_{i}(t)\Big)\Big\rangle=\cosh(\beta\bb)\sum\limits_{s_{i}}\Big\langle\exp\Big(\beta\gg\int_{0}^{1}\dl{t}s_{i}\sigma_{i}(t)\Big)\Big\rangle_{\beta\bb}\,.
		\end{equation}
		\item \label{autocorrelation} As a simple example of a {\scshape Gibbs} expectation value in the spin-flip representation we consider the case $\gg=0$ and the (non-selfadjoint) operator
		\begin{equation}
		S_{i}^{\z}(t)\ceq\e^{-\beta\bb\,tS_{i}^{\x}}S_{i}^{\z}\e^{\beta\bb\,tS_{i}^{x}}= 
		S_{i}^{\z}\e^{2\beta\bb\,tS_{i}^{\x}} \qquad\big(t\in{[0,\infty[}\big)\,,
		\end{equation}
		that is, the operator $S_{i}^{\z}$ in the ``imaginary-time'' {\scshape Heisenberg}\, picture. Then we have for the {\scshape Duhamel--Kubo}\, auto-correlation function of the $\z$-component of a single spin the formula
		\begin{equation}\label{Duhamel_Kubo}
			\tr\big({\e^{\beta\bb S_{i}^{\x}}S_{i}^{\z}(t)S_{i}^{\z}(t')}\big)\big/\tr{\e^{\beta\bb S_{i}^{\x}}}=\big\langle\sigma_{i}(t)\sigma_{i}(t')\big\rangle_{\beta\bb}\qquad\big(0\leq t'\leq t\leq1\big)\,.
		\end{equation}
		It is due to the (positive) product $S_{i}^{\z}(t)S_{i}^{\z}(t')=\exp{\big(2\beta\bb(t'-t)S_{i}^{\x}\big)}$ and \eqref{two_sets} for the special case used in the main text. If $t<t'$, then the factor order of the two spin operators in \eqref{Duhamel_Kubo} has to be reversed. It also follows that \eqref{multipoint_inequality} may be viewed as an inequality for multi-time correlation functions of the $\z$-component of a single spin which interacts with a transverse field only. For an odd number of instants these functions vanish by $S_{i}^{\z}$-reversal symmetry, confer the argument immediately above Lemma\,\ref{comparison}.
	\end{enumerate}
\end{remark}
	\begin{corollary}[Operator-valued {\scshape PFK} formula for several spins]\label{PFK_N_corollary}

\noindent
		Let $S^{\alpha}_{1},\dots,S^{\alpha}_{N}$ be a collection of $N\in\N$ (pairwise commuting) spin operators with component  $\alpha\in\{\x,\y,\z\}$ and
		let ${\cal N}_{1},\dots,{\cal N}_{N}$ be associated mutually independent {\scshape Poisson} processes in the positive half-line ${[0,\infty[}$ with the common rate $\beta\bb>0$. Moreover, let $\varv_{N}:\{-1,1\}^{N}\to\R$ be a {\scshape Borel}-measurable function and consider the {\scshape Boltzmann--Gibbs} operator
		\begin{equation}
			W_{N}(\beta)\ceq\exp\Big(\beta\bb\sum_{i=1}^{N}S^{\x}_{i}+\beta \varv_{N}(S_{1}^{\z},\dots,S_{N}^{\z}\big)\Big)\,.
		\end{equation}
		Then the operator identity
		\begin{equation}\label{PFK_multi_spins}
			W_{N}(\beta)=\e^{N\beta\bb}\Big\langle\Big(\prod_{i=1}^{N}\big(S_{i}^{\x}\big)^{{\cal N}_{i}(1)}\Big)\exp\Big(\beta\int_{0}^{1}\dl{t} \varv_{N}\big(S_{1}^{\z}\sigma_{1}(t),\dots,S_{N}^{\z}\sigma_{N}(t)\big)\Big)\Big\rangle
		\end{equation}
		holds on the $N$-spin {\scshape Hilbert} space $\C^{2^{N}}$.
\end{corollary}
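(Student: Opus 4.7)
The plan is to lift the single-spin argument of Lemma~\ref{pfk_1_lemma} to the $N$-spin setting, replacing the anticommutation used there with its multi-spin analog and invoking, in addition, the pairwise commutation of the $S^{\alpha}_{i}$'s and the mutual independence plus time-homogeneity of the $N$ Poisson processes.

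First I would introduce the path-dependent operator family $K_{\varv}(u,t)\ceq\exp\bigl(\int_{t}^{u}\dl{t'}\,\beta \varv_{N}(S^{\z}_{1}\sigma_{1}(t'),\dots,S^{\z}_{N}\sigma_{N}(t'))\bigr)$, which is well defined for each realization because the integrand at different times lies in the commutative algebra generated by the pairwise commuting $S^{\z}_{j}$'s, together with $T_{\varv}(u)\ceq\bigl(\prod_{i=1}^{N}(S^{\x}_{i})^{{\cal N}_{i}(u)}\bigr)K_{\varv}(u,0)$. The identity $K_{\varv}(u,0)=\um+\int_{0}^{u}\dl{t}\,K_{\varv}(u,t)\beta \varv_{N}(S^{\z}_{1}\sigma_{1}(t),\dots,S^{\z}_{N}\sigma_{N}(t))$ together with the splitting $\prod_{i}(S^{\x}_{i})^{{\cal N}_{i}(u)}=\prod_{i}(S^{\x}_{i})^{{\cal N}_{i}(u)-{\cal N}_{i}(t)}\cdot\prod_{i}(S^{\x}_{i})^{{\cal N}_{i}(t)}$ then yields an integral representation of $T_{\varv}(u)-T_{0}(u)$ analogous to \eqref{T_g_alt}.

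Next I would move the middle factor $\prod_{i}(S^{\x}_{i})^{{\cal N}_{i}(t)}$ across the operators on its right using the componentwise flip rule $(S^{\x}_{i})^{{\cal N}_{i}(t)}f(S^{\z}_{1},\dots,S^{\z}_{N})=f(S^{\z}_{1},\dots,\sigma_{i}(t)S^{\z}_{i},\dots,S^{\z}_{N})(S^{\x}_{i})^{{\cal N}_{i}(t)}$, which follows from $S^{\x}_{i}S^{\z}_{i}=-S^{\z}_{i}S^{\x}_{i}$ combined with $[S^{\x}_{i},S^{\z}_{j}]=0$ for $j\neq i$, applied in the joint spectral decomposition of the commuting family $\{S^{\z}_{j}\}$. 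The instantaneous factor collapses to $\varv_{N}(S^{\z}_{1},\dots,S^{\z}_{N})$ (since $\sigma_{i}(t)^{2}=1$), while $K_{\varv}(u,t)$ becomes an operator $\widetilde K(u,t)$ in which each $\sigma_{i}(t')$ is replaced by $\sigma_{i}(t)\sigma_{i}(t')$. Taking the joint Poisson expectation and invoking mutual independence of the processes, independence of increments within each, and time-homogeneity then decouples the two resulting ``halves'' and yields the Duhamel--Dyson--Phillips integral equation
\begin{equation*}
\big\langle T_{\varv}(u)\big\rangle=\big\langle T_{0}(u)\big\rangle+\int_{0}^{u}\dl{t}\,\big\langle T_{\varv}(u-t)\big\rangle\,\beta \varv_{N}(S^{\z}_{1},\dots,S^{\z}_{N})\,\big\langle T_{0}(t)\big\rangle\,.
\end{equation*}

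To close the argument I would apply \eqref{Sx_formula_alt} factorwise, using the pairwise commutation of the $S^{\x}_{i}$'s, to obtain $\langle T_{0}(u)\rangle=\prod_{i=1}^{N}\exp\bigl(\beta\bb u(S^{\x}_{i}-\um)\bigr)=\e^{-N\beta\bb u}\exp\bigl(\beta\bb u\sum_{i}S^{\x}_{i}\bigr)$. Consequently $M(u)\ceq\e^{N\beta\bb u}\langle T_{\varv}(u)\rangle$ satisfies the DDP equation for the semigroup generated by $\beta\bb\sum_{i}S^{\x}_{i}+\beta \varv_{N}(S^{\z}_{1},\dots,S^{\z}_{N})$, so by uniqueness $M(u)=R_{N}(\beta u)$; evaluating at $u=1$ delivers \eqref{PFK_multi_spins}. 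The main obstacle is the bookkeeping in the flipping step: one must verify that $\widetilde K(u,t)$, though path-dependent, lies in the commutative algebra generated by the $S^{\z}_{j}$'s for every realization (so that it may be safely placed past the other operator factors before the expectation is taken), and that after expectation the correction $\sigma_{i}(t)\sigma_{i}(t')$ is absorbed without trace by the Markov property of each ${\cal N}_{i}$.
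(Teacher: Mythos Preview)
Your proposal is correct and follows essentially the same route as the paper: the paper's proof consists of the single sentence ``Apply the arguments used in the proof of Lemma~\ref{pfk_1_lemma} for the $i$-th spin also for all other spins ($j\neq i$)'', and you have spelled out precisely those arguments in the $N$-spin setting. The two ``obstacles'' you flag are harmless: $\widetilde K(u,t)$ is manifestly a function of the commuting family $\{S^{\z}_{j}\}$ with path-dependent scalar coefficients, and the replacement $\sigma_{i}(t')\mapsto\sigma_{i}(t)\sigma_{i}(t')=(-1)^{{\cal N}_{i}(t')-{\cal N}_{i}(t)}$ is handled exactly as in the single-spin proof by independence and time-homogeneity of increments.
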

\begin{proof}
Apply the arguments used in the proof of Lemma\,\ref{pfk_1_lemma} for the $i$-th spin also for all other spins ($j\neq i$).
	\qed
\end{proof}
\begin{remark}
\begin{enumerate}
		\item Two examples: $\quad\varv_{1}(s_{1})=\gg s_{1}\,,\quad\varv_{N}(s_{1},\dots,s_{N})=\frac{\vv}{\sqrt{N}}\sum_{1\leq i<j\leq N} g_{ij}s_{i}s_{j}\quad(N\geq2)\,.$
		\item By \eqref{PFK_multi_spins} formula~\eqref{PFK_trace} naturally extends to
		\begin{equation}\label{PFK_trace_multi}
			\tr{W_{N}(\beta)}=\big(\cosh(\beta\bb)\big)^{N}\sum\limits_{s_{1},\dots, s_{N}}\Big\langle\exp\Big(\beta\int_{0}^{1}\dl{t}\varv_{N}\big(s_{1}\sigma_{1}(t),\dots,s_{N}\sigma_{N}(t)\big)\Big)\Big\rangle_{\beta\bb}\,.
		\end{equation}
\end{enumerate}
\end{remark}

\subsection*{Alternative proof of Lemma\,\ref{pfk_1_lemma}}

\begin{proof}
Using the auxiliary operator $K_{\gg}(u,t)$ introduced in the first proof we want to show that
\begin{equation}\label{T_beta}
	u\mapsto \overline{T}_{\gg}(u)\ceq\big\langle\big(S^{\x})^{{\cal N}(u)}K_{g}(u,0)\big\rangle
\end{equation}
is an operator semigroup on $\C^{2}$ with generator $\beta\bb(S^{\x}-\um)+\beta\gg S^{\z}$. In the first step, we pick $u,t\geq0$ and get
\begin{align}\label{semigroup1}
	\overline{T}_{\gg}(u+t)&=\big\langle\big(S^{\x}\big)^{{\cal N}(u+t)}K_{\gg}(u+t,0)\big\rangle=\big\langle\big(S^{\x}\big)^{{\cal N}(u+t)-{\cal N}(t)}\big(S^{\x}\big)^{{\cal N}(t)}K_{\gg}(u+t,t)K_{g}(t,0)\big\rangle\\\label{semigroup2}
	&=\big\langle\big(S^{\x}\big)^{{\cal N}(u+t)-{\cal N}(t)}K_{\gg\sigma(t)}(u+t,t)\big(S^{\x}\big)^{{\cal N}(t)}K_{\gg}(t,0)\big\rangle\\\label{semigroup3}
	&=\big\langle\big(S^{\x}\big)^{{\cal N}(u+t)-{\cal N}(t)}K_{\gg\sigma(t)}(u+t,t)\big\rangle\overline{T}_{\gg}(t)=\overline{T}_{\gg}(u)\overline{T}_{\gg}(t)\,.
\end{align}
The first two equations are obvious. For \eqref{semigroup2} we have moved $\big(S^{\x}\big)^{{\cal N}(t)}$ to the right of $K_{\gg}(u+t,t)$ by using ${\cal N}(t)$ times the relation $S^{\x}f(S^{\z})=f(-S^{\z})S^{\x}$ known from the first proof. The first equality in \eqref{semigroup3} relies on the fact that the increments ${\cal N}(u+t)-{\cal N}(t)$ and ${\cal N}(t')-{\cal N}(t)$ occurring in $K_{\gg\sigma(t)}(u+t,t)=\exp\big(\int_{t}^{u+t}\dl{t'}(-1)^{{\cal N}(t')-{\cal N}(t)}\beta\gg S^{\z}\big)$ are independent of ${\cal N}(t)-{\cal N}(0)={\cal N}(t)$. For the second equality in \eqref{semigroup3} we recall that ${\cal N}(t')-{\cal N}(t)$ has the same distribution as ${\cal N}(t'-t)-{\cal N}(0)={\cal N}(t'-t)$ by time-homogeneity.
In the second step of this proof, we combine the definition \eqref{T_beta} with \eqref{K_g_alt} and \eqref{Sx_formula_alt} to get
\begin{equation}
	\overline{T}_{\gg}(u)-\um=\frac{1}{2}\big(1-\e^{-2\beta\bb u}\big)\big(S^{\x}-\um\big)+u\int_{0}^{1}\dl{t}\big\langle(S^{\x})^{{\cal N}(u)}K_{\gg}(u,u\,t)\sigma(u\,t)\big\rangle\beta\gg S^{\z}\,.
\end{equation}
Due to $\big\|(S^{\x})^{{\cal N}(u)}K_{\gg}(u,u\,t)\sigma(u\,t)\big\|\leq \e^{u\beta|\gg|}$ the dominated-convergence theorem gives (in the operator-norm sense) $\lim_{u\downarrow0}(\overline{T}_{\gg}(u)-\um)/u=\beta\bb(S^{\x}-\um)+\int_{0}^{1}\dl{t}\langle\um\rangle \beta\gg S^{\z}=\beta\bb(S^{\x}-\um)+\beta\gg S^{\z}$ as claimed. This completes the proof of \eqref{PFK_single_spin} by considering $\overline{T}_{\gg}(1)$.
\qed
\end{proof}

\section{Large-deviation estimate for the free energy}\label{wsa}
For the reader's convenience we begin by quoting Theorem\,1.3.4 in \cite{T2011a} without proof.
\begin{proposition}[{\scshape Gauss}ian concentration estimate]\label{talagrand}

\noindent
For $d\in\N$ let $F:\R^{d}\to\R$ be a (globally) {\scshape Lipschitz} continuous function, $|F(x)-F(x')|\leq L|x-x'|$, with some constant $L>0$, arbitrary $x,x'\in\R^{d}$and $|\,(\,\cdot\,)\,|$ denoting the {\scshape Euclid}ean norm on $\R^{d}$.
Moreover, let $g\ceq\{g_{1},\dots,g_{d}\}$ be an independent collection of $d$ {\scshape Gauss}ian random variables with common mean $0$ and variance $1$. Then
\begin{equation}
	\PP\big\{\big|F(g)-\EE[F(g)]\big|> \delta\big\}\leq 2\exp\Big(-\frac{\delta^{2}}{4L^{2}}\Big)
\end{equation}
for any $\delta>0$.
\end{proposition}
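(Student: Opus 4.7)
The plan is to prove this standard sub-Gaussian concentration estimate via the entropy method of \textsc{Herbst}, starting from the logarithmic Sobolev inequality of \textsc{Gross} for the standard Gaussian measure on $\R^{d}$. First I would convolve $F$ with a narrow Gaussian mollifier (and later send its width to zero), so that without loss of generality $F$ may be assumed smooth and the $L$-Lipschitz hypothesis upgraded to the pointwise bound $|\nabla F(x)|\leq L$ for all $x\in\R^{d}$; passage to the original Lipschitz $F$ is then a routine dominated-convergence argument.

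Next I would invoke the Gaussian log-Sobolev inequality
\begin{equation*}
\EE\big[f^{2}\ln f^{2}\big]-\EE\big[f^{2}\big]\ln\EE\big[f^{2}\big]\leq 2\,\EE\big[|\nabla f|^{2}\big]
\end{equation*}
applied to $f=\exp(tF/2)$ for arbitrary $t\in\R$. Writing $H(t)\ceq\EE\big[\e^{tF}\big]$ and using $|\nabla F|\leq L$, the inequality turns into the differential inequality $tH'(t)-H(t)\ln H(t)\leq (t^{2}L^{2}/2)\,H(t)$, which is exactly $\frac{\dl{}}{\dl{t}}\big(t^{-1}\ln H(t)\big)\leq L^{2}/2$. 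Integrating from $0$, with the limit $t^{-1}\ln H(t)\to\EE[F]$ as $t\downarrow 0$, yields the sub-Gaussian moment bound $\EE\big[\exp\big(t(F-\EE[F])\big)\big]\leq\exp(t^{2}L^{2}/2)$ for every $t\in\R$. A standard \textsc{Chernoff} estimate then finishes: for $t>0$ one has $\PP\{F-\EE[F]>\delta\}\leq\e^{-t\delta+t^{2}L^{2}/2}$, and the choice $t=\delta/L^{2}$ minimizes the right-hand side to $\e^{-\delta^{2}/(2L^{2})}$; applying the same reasoning to $-F$ and adding the two one-sided bounds produces a total tail estimate $2\,\e^{-\delta^{2}/(2L^{2})}$, which is in fact slightly stronger than the stated $2\,\e^{-\delta^{2}/(4L^{2})}$.

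The genuinely nontrivial ingredient is the Gaussian log-Sobolev inequality itself, which I expect to be the main obstacle if it had to be derived from scratch; I would handle it by tensorization from the one-dimensional case, and prove the latter either via a direct \textsc{Hermite}-polynomial expansion or via the \textsc{Bakry}--\textsc{\'Emery} $\Gamma_{2}$-calculus applied to the \textsc{Ornstein}--\textsc{Uhlenbeck} semigroup. An alternative route that bypasses the log-Sobolev machinery altogether is the \textsc{Maurey}--\textsc{Pisier} rotational interpolation along $g(\theta)\ceq g\cos\theta+g'\sin\theta$ between two independent copies of $g$: writing $F(g)-\EE[F(g)]$ as an integral from $0$ to $\pi/2$ of $\frac{\dl{}}{\dl{\theta}}F(g(\theta))=\langle\nabla F(g(\theta)),g'(\theta)\rangle$ against a conditionally Gaussian direction and applying \textsc{Jensen}'s inequality to the exponential moment produces the same sub-Gaussian tail directly, at the mild cost of a worse numerical constant that is still well within the factor $\frac{1}{4L^{2}}$ appearing in the statement.
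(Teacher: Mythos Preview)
Your argument is correct and in fact yields the sharper constant $1/(2L^{2})$ in the exponent. However, there is nothing to compare it against: the paper does not prove this proposition at all. It is explicitly introduced as a quotation of Theorem~1.3.4 in Talagrand's book \cite{T2011a} ``without proof'', and is used only as a black box in the subsequent Lemma~\ref{gc}. So while your Herbst--Gross derivation (and the Maurey--Pisier alternative you sketch) is a perfectly valid and standard proof, the paper itself simply takes the result as given from the literature.
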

It is the basis of
\begin{lemma}[Large-deviation estimate for the free energy]\label{gc}

\noindent
For the (random) free energy $f_{N}$ defined in \eqref{f_N} with $\vv >0$ we have
\begin{equation} \label{gc_e}
	\PP\big\{\big| \beta f_{N}-\EE[\beta f_{N}]\big|> \delta\big\}\leq 2\exp\Big(-\frac{N^{2}\delta^{2}}{2(N-1)(\beta\vv)^{2}}\Big)
\end{equation}
for any total number of spins $N\geq2$ and any $\delta>0$.
\end{lemma}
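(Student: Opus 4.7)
The plan is to realize $\beta f_{N}$ as a smooth function $F:\mathbb{R}^{d}\to\mathbb{R}$ of the collection $g=(g_{ij})_{1\leq i<j\leq N}$ of $d=N(N-1)/2$ independent standard {\scshape Gaussian} variables, obtain a uniform bound on the {\scshape Euclidean} norm of its gradient, and then invoke Proposition\,\ref{talagrand} with that bound as the {\scshape Lipschitz} constant $L$.

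Concretely, I would first compute the partial derivatives by standard manipulations of the trace. Since $\partial H_{N}/\partial g_{ij}=-(\vv/\sqrt{N})\,S^{\z}_{i}S^{\z}_{j}$, {\scshape Duhamel}'s formula together with the cyclicity of the trace yields
\begin{equation}
	\frac{\partial}{\partial g_{ij}}\ln Z_{N}=\frac{\beta\vv}{\sqrt{N}}\,\langle S^{\z}_{i}S^{\z}_{j}\rangle\,,\qquad \frac{\partial(\beta f_{N})}{\partial g_{ij}}=-\frac{\beta\vv}{N\sqrt{N}}\,\langle S^{\z}_{i}S^{\z}_{j}\rangle\,.
\end{equation}
Because $(S^{\z}_{i})^{2}=(S^{\z}_{j})^{2}=\um$ and $S^{\z}_{i}$, $S^{\z}_{j}$ commute, the operator $S^{\z}_{i}S^{\z}_{j}$ has norm $1$, and hence $|\langle S^{\z}_{i}S^{\z}_{j}\rangle|\leq 1$ uniformly in $g$ (and in the other parameters). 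Summing the squared partials over the $N(N-1)/2$ pairs therefore gives
\begin{equation}
	\sum_{1\leq i<j\leq N}\!\Big(\frac{\partial(\beta f_{N})}{\partial g_{ij}}\Big)^{\!2}\leq\frac{(\beta\vv)^{2}}{N^{3}}\cdot\frac{N(N-1)}{2}=\frac{(N-1)(\beta\vv)^{2}}{2N^{2}}\eqc L^{2}\,,
\end{equation}
uniformly in $g\in\mathbb{R}^{d}$.

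The function $F(g)\ceq\beta f_{N}(g)$ is smooth (in fact real-analytic) in $g$, so the uniform bound $|\nabla F|\leq L$ just obtained implies the (global) {\scshape Lipschitz} estimate $|F(g)-F(g')|\leq L|g-g'|$ via the mean-value theorem applied along the segment joining $g$ to $g'$. Plugging this $L$ into Proposition\,\ref{talagrand} produces the exponent $\delta^{2}/(4L^{2})=N^{2}\delta^{2}/\bigl(2(N-1)(\beta\vv)^{2}\bigr)$ claimed in \eqref{gc_e}.

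The only conceptual point that requires care is the differentiation of $\ln Z_{N}$: the operator $S^{\z}_{i}S^{\z}_{j}$ does \emph{not} commute with $H_{N}$ (because of the transverse-field term), so one must invoke {\scshape Duhamel} and the cyclicity of the trace rather than differentiate naively under the exponential. Once this standard identity is in hand, the remainder of the argument is routine, and the only per-spin input is the universal bound $\|S^{\z}_{i}S^{\z}_{j}\|=1$.
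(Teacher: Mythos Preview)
Your proof is correct and follows essentially the same approach as the paper: realize $\beta f_{N}$ as a function of the $d=N(N-1)/2$ standard Gaussians, establish the Lipschitz constant $L=\beta\vv\sqrt{N-1}/(N\sqrt{2})$ via the uniform bound $|\langle S^{\z}_{i}S^{\z}_{j}\rangle|\leq 1$, and invoke Proposition\,\ref{talagrand}. The only cosmetic difference is that the paper bounds the increment $\beta f_{N}(g')-\beta f_{N}(g)$ directly using the {\scshape Jensen--Peierls--Bogolyubov} inequality together with $|x|_{1}\leq\sqrt{d}\,|x|$, whereas you compute the gradient explicitly (via {\scshape Duhamel} and cyclicity) and apply the mean-value theorem; both routes yield the identical constant.
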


\begin{proof}
	We interprete the coefficients $(g_{ij})_{1\leq i<j\leq N}$ in the quantum {\scshape Hamilton}ian\, $H_{N}$, defined in \eqref{H_N}, as the components of a non-random vector $g\in\R^{d}$ with $d=N(N-1)/2$, and write more explicitly $H_{N}(g)$ and $f_{N}(g)$ for its (specific) free energy. In view of Proposition\,\ref{talagrand} we then only have to show that the function $g\mapsto\beta f_{N}(g)$ is {\scshape Lipschitz} continuous on $\R^{d}$ with constant $L=\beta\vv\sqrt{N-1}/(N\sqrt{2})$. To this end, we introduce the {\scshape Gibbs} expectation $\langle\,(\,\cdot\,)\,\rangle_{g}\ceq \e^{N\beta f_{N}(g)}\tr{\e^{-\beta H_{N}(g)}(\,\cdot\,)}$ induced by $H_{N}(g)$. Then the {\scshape Jensen--Peierls--Bogolyubov} inequality, see for example \cite{S2005b}, gives
\begin{align}
	\beta f_{N}(g')-\beta f_{N}(g)&\leq\frac{\beta}{N}\big\langle H_{N}(g')-H_{N}(g)\big\rangle_{g}=\frac{\beta\vv}{N^{3/2}}\sum\limits_{1\leq i< j\leq N}\big(g_{ij}-g'_{ij}\big)\langle S^{\z}_{i}S^{\z}_{j}\rangle_{g}\qquad\big(g,g'\in\R^{d}\big)\\
	&\leq\frac{\beta\vv}{N^{3/2}}\sum\limits_{1\leq i< j\leq N}\big|g_{ij}-g'_{ij}\big|\big|\langle S^{\z}_{i}S^{\z}_{j}\rangle_{g}\big|\leq\frac{\beta\vv}{N^{3/2}}|g-g'|_{1}\leq L|g-g'|\,.\label{triangle_jensen}
\end{align}
For \eqref{triangle_jensen} we have used the triangle inequality, the operator inequalities $-\um\leq S^{\z}_{i}S^{\z}_{j}\leq \um$, and the ({\scshape Jensen}) inequality $|x|_{1}\leq\sqrt{d}\, |x|$ between the $1$-norm and the $2$-norm of $x=(x_{1},\dots,x_{d})\in\R^{d}$. By considering the last chain of inequalities also with $g$ and $g'$ interchanged we get the desired {\scshape Lipschitz} continuity.
\qed
\end{proof}
\begin{remark}\label{rem_lde}

A similar result was already given by {\scshape Crawford} \cite{C2007}. We include the lemma for two reasons. First, it serves to make the present paper reasonably self-contained. Second, the above proof is simpler than the one in \cite{C2007}. It does not need the {\scshape PFK} spin-flip representation and can easily be extended to quantum spin-glass models with additional mean-field type interactions between the spins, for example to the quantum mean-field {\scshape Heisenberg} spin-glass model with or without an external magnetic field \cite{BM1980,S1981}.
\end{remark}

\section*{Acknowledgements}
We are grateful to Peter EICHELSBACHER (Bochum, Germany) for useful discussions and pointing out to us Reference\,\cite{CET2005}. We also thank Bikas K. CHAKRABARTI (Kolkata, India) for a clarifying correspondence and one of the referees for constructive
remarks.


\end{document}